\numberwithin{equation}{section}
\newcommand{\jap}[1]{\langle #1 \rangle}
\def\a{\alpha}
\def\b{\beta}
\def\d{\delta}
\def\e{\varepsilon}
\def\f{\varphi}
\def\g{\psi}
\def\k{\kappa}
\def\l{\lambda}
\def\m{\mu}
\def\o{\omega}
\def\s{\sigma}
\def\x{\xi}
\def\y{\eta}
\def\H{\mathcal{H}}
\def\Fd{\mathcal{F}_d}
\renewcommand{\O}{\Omega}
\def\re{\mathbb{R}}
\def\ze{\mathbb{Z}}
\def\T{\mathbb{T}}
\def\pa{\partial}
\renewcommand{\Im}{\text{{\rm Im}\;}}
\newcommand{\supp}{\text{{\rm supp}\;}}
\newtheorem{thm}{Theorem}[section]
\newtheorem{lem}[thm]{Lemma}
\newtheorem{prop}[thm]{Proposition}
\newtheorem{cor}[thm]{Corollary}
\theoremstyle{definition}
\newtheorem{defn}{Definition}
\theoremstyle{remark}
\newtheorem{rem}[thm]{Remark}
\title[]%
{Some properties of threshold eigenstates and resonant states of discrete Schr\"odinger operators}
\author{Yuji nomura}
\address{Graduate School of Material Science, University of Hyogo, Shosha, Himeji 671-2280, Japan}
\email{y.nomura@sci.u-hyogo.ac.jp}
\author{Kouichi Taira}
\address{Graduate School of Mathematical Sciences, University of Tokyo, 3-8-1 Komaba, Meguroku, Tokyo, Japan 153-8914}
\email{taira@ms.u-tokyo.ac.jp}
\subjclass[2010]{Primary 47A10, Secndary 47A40}
\keywords{discrete Schr\"odinger operators, resonances}
\begin{document}

\maketitle

\begin{abstract}
In this note, we study some properties of threshold resonant states or threshold eigenfunctions for discrete Schr\"odinger operators. We mainly prove two theorems. First, we prove that resonant states at the elliptic threshold have the same asymptotic expansion as the continuous Schr\"odinger operator. Second, we prove absence of resonant states at hyperbolic thresholds.
\end{abstract}

\section{Introduction}

We consider the discrete Schr\"odinger operators:
\begin{align*}
H=H_0+ V(x)\quad \text{on}\quad \H= l^2(\mathbb{Z}^d),
\end{align*}
where $H_0$ is the negative discrete Laplacian
\begin{align*}
H_0 u(x)=-\sum_{|x-y|=1}(u(y)-u(x)),
\end{align*}
and $V$ is a real-valued function on $\ze^d$.  
We denote the Fourier expansion by $\Fd$:
\begin{align*}
\hat{u}(\x)=\mathcal{F}_d u(\x)=\sum_{x\in \ze^d}e^{-2\pi i x\cdot\x}u(x),\quad \x\in \mathbb{T}^d=\re^d/\ze^d.
\end{align*}
Then it follows that
\begin{align}\label{mul}
\Fd H_0 u(\x)=h_0(\x)\Fd u(\x)\,\, \text{for}\,\, u\in \bigcup_{s\in \re}l^{2,s}(\ze^d)
\end{align} 
in the distributional sense, where $h_0(\x)=4\sum_{j=1}^d\sin^{2}(\pi \x_j)$, and hence $\s(H_0)=[0,4d]$. In this note, we often use $[-\frac{1}{2}, \frac{1}{2}]^d$ as a fundamental domain of $\T^d$. Moreover, we identify the integral over $\T^d$ with the integral over this fundamental domain $[-\frac{1}{2}, \frac{1}{2}]^d$. We denote $\jap{x}=(1+|x|^2)^{1/2}$ and $l^{2,s}(\ze^d)=\jap{x}^{-s}l^2(\ze^d)$. It is known that $l^{2,s}(\ze^d)$ is isometric to the Sobolev space $H^s(\T^d)$ through the Fourier expansion $\Fd$.

Critical values of $h_0$ are called thresholds of $H_0$. We denote the set of all thresholds by $\Gamma$:
\begin{align*}
\Gamma=\{\l\in [0,4d]\mid \l\,\,\text{is a critical value of}\,\, h_0\}=\{4k\}_{k=0}^d.
\end{align*}
Note that any critical points $h_0$ is non-degenerate, that is, $h_0$ is Morse.  We say that $0$ and $4d$ are elliptic thresholds and $\l\in \{4k\}_{k=1}^{d-1}$ are hyperbolic thresholds. Near each critical point of $h_0$, we have the following Taylor expansion:
\begin{align*}
h_0(\x)-\l\sim 4\pi^2(-\sum_{j=1}^{k}(\x_{\s(j)}-\y_{\s(j)})^2 +\sum_{j=k+1}^d(\x_{\s(j)}-\y_{\s(j)})^2),
\end{align*}
where $\y\in h_0^{-1}(\{\l\})$, $\l\in \Gamma,$ $k=k(\y)$ is the Morse index at $\y$ and $\s:\{1,...,d\}\to \{1,...,d\}$ is a bijection. Moreover, it easily follows that $k(\y)=0,d$ if $\y\in \{0,4d\}$ and $k(\y)\neq 0, d$ if $\y\in \Gamma\setminus\{0,4d\}$. This implies that $h_0$ behaves like the symbol $\pm |\x|^2$ of the elliptic operator $\mp \Delta$ near critical points with the elliptic thresholds and behaves like the symbol $-|\x'|^2+|\x''|^2$ ($\x=(\x',\x'')$) of the ultrahyperbolic operator $\Delta_{x'}-\Delta_{x''}$ near critical points with the hyperbolic thresholds. 

It is known that the behavior of the resolvent at thresholds is closely related to a time decay of the propagator and  that existence of eigenstates and resonant states disturbs a decay property of the propagator \cite{JK}.
Ito and Jensen obtain an analytic continuation near thresholds of the integral kernels for discrete Schr\"odinger operators \cite{IJ}. The purpose of this note is to study some properties of resonant states: Resonant states at elliptic thresholds have same properties as continuous one's and resonances at hyperbolic thresholds disappear. From this, we expect that the hyperbolic thresholds is harmless for a decay property of the propagator.

First, we give a definition of resonances at elliptic thresholds.

\begin{defn}
Let $d\geq 3$ and $\l=0$ or $4d$. Suppose that a real-valued function $V$ satisfies $|V|\leq C\jap{x}^{-2-\d}$ with $\d>0$. We say that $u\in l^{2,-3/2}(\ze^d)\setminus l^2(\ze^d)$ is a resonant state of $H=H_0+V$ if $u$ satisfies
\begin{align*}
Hu=\l u.
\end{align*}
If such $u$ exists, we say that $\l$ is a resonance of $H$.
\end{defn}

From now on, we concentrate the case of $\l=0$. Now we state our first theorem, which is an analogy of the continuous model (for example, see \cite[Lemma 2.4]{Y}).

\begin{thm}\label{mainprop}
Let $d\geq 3$.  Suppose $V$ be a real-valued function satisfying $|V|\leq C\jap{x}^{-2-\e}$ for $0<\e\leq 1$ and $u\in l^{2, -3/2}(\ze^d)$ satisfies $(H_0+V)u=0$. Then there exists $C>0$ such that
\begin{align*}
&|u(x)|\leq C\jap{x}^{-d+2},\\
&u(x)=-c_d|x|^{-d+2}\sum_{y\in\ze^d}Vu(y)+O(|x|^{-d+2-\e})
\end{align*}
as $|x|\to \infty$, where
\begin{align}\label{c_d}
c_d=\frac{\Gamma(\frac{d}{2}-1)}{4\pi^{\frac{d}{2}}}.
\end{align}
In particular, if $\sum_{x\in \ze^d}Vu(x)\neq 0$ holds, then  $|u(x)|\geq C|x|^{-d+2}$ follows as $|x|\to \infty$. 
\end{thm}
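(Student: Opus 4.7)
The strategy follows the classical continuum argument: derive an integral representation $u=-G_0\ast(Vu)$ in terms of the zero-energy Green's function of $H_0$, establish the sharp asymptotic $G_0(z)\sim c_d|z|^{-d+2}$, and read off both the pointwise bound and the expansion of $u$. The integral representation is obtained on the Fourier side: since $u\in l^{2,-3/2}(\ze^d)$ gives $\hat u\in H^{-3/2}(\T^d)$ and the hypothesis on $V$ yields $Vu\in l^{2,1/2+\e/2}(\ze^d)$ with $\Fd(Vu)\in H^{1/2+\e/2}(\T^d)$, the equation $h_0\hat u=-\Fd(Vu)$ in $\mathcal{D}'(\T^d)$ can be divided by $h_0$ without ambiguity: the only zero of $h_0$ in the fundamental domain is at $\x=0$, where $h_0(\x)\sim 4\pi^2|\x|^2$, and for $d\geq 3$ distributions supported at $\{0\}$ fail to lie in $H^{-3/2}$. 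Inverse Fourier transforming yields
\begin{equation*}
u(x)=-\sum_{y\in\ze^d}G_0(x-y)V(y)u(y),\qquad G_0(z)=\int_{\T^d}\frac{e^{2\pi i z\cdot\x}}{h_0(\x)}\,d\x.
\end{equation*}

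The main analytic input is the long-distance behaviour
\begin{equation*}
G_0(z)=c_d|z|^{-d+2}+O(|z|^{-d+2-\mu}),\qquad |z|\to\infty,
\end{equation*}
for some $\mu>0$. This is proved by localising near $\x=0$ with a cutoff $\chi$: on $\supp(1-\chi)$, $h_0$ is smooth and bounded away from $0$ (since the remaining critical points of $h_0$ lie at positive values), so the corresponding piece of $G_0$ decays faster than any polynomial by repeated integration by parts. For the singular piece, rescale $\x\mapsto\x/|z|$; the leading contribution matches the continuum identity $\int_{\re^d}e^{2\pi i\hat z\cdot\x}/(4\pi^2|\x|^2)\,d\x=c_d$ with $|\hat z|=1$, and the non-quadratic correction $h_0(\x)-4\pi^2|\x|^2=O(|\x|^4)$, together with the truncation to $\T^d$, contributes only lower-order terms controlled by dominated convergence.

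With $|G_0(z)|\leq C\jap{z}^{-d+2}$ in hand, bootstrap the decay of $u$ by iterating the convolution estimate $\sum_y\jap{x-y}^{-d+2}\jap{y}^{-\b}\leq C\jap{x}^{-\min(d-2,\b-2)}$ (valid for $\b>2$) until one reaches $|u(x)|\leq C\jap{x}^{-d+2}$; the starting point is a Cauchy--Schwarz bound inside the integral representation that upgrades the $l^{2,-3/2}$ control to a weak pointwise decay. To conclude, decompose
\begin{equation*}
u(x)=-G_0(x)\sum_y V(y)u(y)-\sum_y\bigl(G_0(x-y)-G_0(x)\bigr)V(y)u(y),
\end{equation*}
substitute the Green-function asymptotic into the first term, and bound the second by splitting the sum at $|y|\leq |x|/2$ (use a H\"older-type estimate $|G_0(x-y)-G_0(x)|\leq C|y|^{\e}|x|^{-d+2-\e}$ combined with $|V(y)u(y)|\leq C\jap{y}^{-d-\e}$) and $|y|>|x|/2$ (control each term of the difference separately against the tail). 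The principal technical obstacle is the Green-function asymptotic: producing the constant $c_d$ exactly, together with an effective lower-order remainder, requires a careful comparison of the lattice dispersion with its quadratic approximation near the minimum, and this is where the elliptic (non-degenerate minimum) nature of the threshold $\l=0$ is essential.
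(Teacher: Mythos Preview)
Your plan is correct and matches the paper's argument step for step: reduce to the integral equation $u=-G_0\ast(Vu)$ by ruling out point-supported distributions on the Fourier side, bootstrap the convolution to reach $|u(x)|\le C\jap{x}^{-d+2}$, and then extract the leading term from the asymptotic of the lattice Green kernel after localizing at $\x=0$. The paper organises the last step slightly differently---it first replaces $\chi/h_0$ by $\chi/(4\pi^2|\x|^2)$ (the difference is an order-zero symbol, giving an $O(\jap{x}^{-d})$ kernel) and then evaluates $\int e^{2\pi i z\cdot\x}\chi(\x)(4\pi^2|\x|^2)^{-1}\,d\x$ by writing it as $c_d|\cdot|^{-d+2}\ast\mathcal F^{-1}\chi$ rather than by rescaling, and it uses the Lipschitz bound $\bigl||x-y|^{-d+2}-|x|^{-d+2}\bigr|\le C|y|\,|x|^{-d+1}$ in place of your H\"older-type difference estimate---but these are cosmetic variations on the same scheme.

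One technical caution: after your rescaling $\x\mapsto\eta/|z|$, the integral $\int_{\re^d}\chi(\eta/|z|)\,e^{2\pi i\hat z\cdot\eta}(4\pi^2|\eta|^2)^{-1}\,d\eta$ is only conditionally convergent at infinity, so dominated convergence alone does not justify the limit or give a quantitative remainder. You will need either an integration by parts for $|\eta|\gg 1$, or the convolution trick the paper uses, to turn this into an honest $O(|z|^{-d+2-\mu})$ error.
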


\begin{rem}
This theorem implies that 

\item[$(i)$] Set $N_s=\{u\in l^{2,-s}(\ze^d)\mid (H_0+V)u=0\}$ for $1/2<s\leq 3/2$. Then $N_s=N_{s'}$ for $s,s'\in (1/2,3/2]$.

\item[$(ii)$] Suppose that $d=3$ with $\e>1/2$ or $d=4$ with $\e>0$. Then it follows that the above $u$ is an $l^2$-eigenfunction of $H_0+V$ if and only if $\sum_{y\in \ze^d}Vu(y)=0$.

\item[$(iii)$] There are no resonances at zero energy for $d\geq 5$.
\end{rem}

Let $d\geq 3$. We recall some results from \cite[Theorem 1.1, Theorem 1.8 and Proposition 3.4]{TT}. We have the following limiting absorption principle with the thresholds weight:
\begin{align}\label{LAP1}
\sup_{z\in \mathbb{C}\setminus \re}\|\jap{x}^{-1+\d}(H_0-z)^{-1}\jap{x}^{-1-\d}\|_{B(l^2(\ze^d))}<\infty
\end{align}
if $|\d|\geq 0$ is small enough.
Moreover, the following limits exist in $B(l^{2, s}(\ze^d) , l^{2,-s}(\ze^d))$ for $s>1$:
\begin{align}\label{LAP2}
(H_0-\lambda \mp i0)^{-1}:= \lim_{\e\to 0, \, \e>0}(H_0-\lambda \mp i\e)^{-1},\,\, \l\in [0,4d].
\end{align}
We note that $(\ref{LAP1})$ and $(\ref{LAP2})$ away from $\Gamma$ directly follow from the Mourre theory or \cite[Proposition B.5]{TT}. The novelty of $(\ref{LAP1})$ and $(\ref{LAP2})$ lie in the estimates near $z, \l\in \Gamma$. Furthermore, we have the following lemma which immediately follows from a density argument.
\begin{lem}
Let $d\geq 3$.
The operators $(H_0-\lambda \mp i0)^{-1}\in B(l^{2, s}(\ze^d) , l^{2,-s}(\ze^d))$ for $s>1$ and $\l\in [0,4d]$ uniquely extend to bounded linear operators from $l^{2,1}(\ze^d)$ to $l^{2,-1}(\ze^d)$. Moreover, we have
\begin{align}\label{LAP3}
\sup_{\l\in \re}\|\jap{x}^{-1}(H_0-\l\mp i0)^{-1}\jap{x}^{-1}\|_{B(l^2(\ze^d))}<\infty.
\end{align}
\end{lem}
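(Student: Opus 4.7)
The plan is to exploit $(\ref{LAP1})$ at the endpoint $\d=0$, which already provides a uniform bound on the resolvent as an operator from $l^{2,1}(\ze^d)$ to $l^{2,-1}(\ze^d)$ for all $z\in\co\setminus\re$. I would then transfer this bound to the boundary values $(H_0-\l\mp i0)^{-1}$, which by $(\ref{LAP2})$ already exist on $l^{2,s}$ for any $s>1$, and finally extend to all of $l^{2,1}$ by density.

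In more detail, fix $\l\in[0,4d]$ and $f\in l^{2,s}(\ze^d)$ with $s>1$, and set $u_\e:=(H_0-\l\mp i\e)^{-1}f$ for $\e>0$. By $(\ref{LAP2})$, $u_\e$ converges in $l^{2,-s}$ to $u_0:=(H_0-\l\mp i0)^{-1}f$ as $\e\to 0$; since point evaluation is continuous on the weighted space $l^{2,-s}$, this forces pointwise convergence $u_\e(x)\to u_0(x)$ for every $x\in\ze^d$. Simultaneously, $(\ref{LAP1})$ with $\d=0$ applied at $z=\l\mp i\e$ yields
\begin{equation*}
\|u_\e\|_{l^{2,-1}}\leq C\|f\|_{l^{2,1}}
\end{equation*}
with $C$ independent of $\l$ and $\e$. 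Fatou's lemma, applied to the sum $\sum_{x\in\ze^d}\jap{x}^{-2}|u_\e(x)|^2$, then transports this bound to the limit $u_0$. Since $l^{2,s}$ (in fact the subspace of finitely supported functions) is dense in $l^{2,1}$, the bounded linear transformation theorem produces a unique bounded extension of $(H_0-\l\mp i0)^{-1}$ from $l^{2,1}$ to $l^{2,-1}$ obeying the same $\l$-uniform bound, which is $(\ref{LAP3})$.

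The main subtlety, though a mild one, is that the inclusion between the weighted spaces runs as $l^{2,-1}\subset l^{2,-s}$ for $s>1$, and not in the opposite direction. Consequently, convergence in $l^{2,-s}$ does not automatically upgrade to convergence in the stronger norm $l^{2,-1}$, and a lower-semicontinuity step (Fatou as above, or equivalently a weak sequential compactness argument in the Hilbert space $l^{2,-1}$) is genuinely required in order to place the limit $u_0$ back in $l^{2,-1}$ with the correct bound. Beyond that, uniqueness of the extension and the uniformity in $\l$ are both built in, the first via density and the second via the fact that the constant $C$ coming from $(\ref{LAP1})$ is already $z$-uniform.
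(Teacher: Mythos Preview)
Your argument is correct and is precisely the density argument the paper invokes (the paper gives no further details beyond the phrase ``immediately follows from a density argument''). You have correctly supplied the one nontrivial ingredient: since $(\ref{LAP2})$ only gives convergence in $l^{2,-s}$ for $s>1$, a lower-semicontinuity step (Fatou or weak compactness in $l^{2,-1}$) is needed to carry the uniform $l^{2,-1}$ bound from $(\ref{LAP1})$ through the limit $\e\to 0$ before one can invoke the BLT theorem.
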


\begin{rem}
This lemma does not assert 
\begin{align*}
(H_0-\lambda \mp i0)^{-1}=\lim_{\e\to 0,\e>0}(H_0-\lambda \mp i\e)^{-1}\,\, \text{in}\,\, B(l^{2,1}(\ze^d), l^{2,-1}(\ze^d)).
\end{align*}

\end{rem}

Now we give a definition of resonance at hyperbolic thresholds.

\begin{defn}
Let $d\geq 3$.
Suppose that a real-valued function $V$ satisfies $|V|\leq C\jap{x}^{-2-\d}$ with $\d>0$.
Let $\l\in \Gamma\setminus \{0,4d\}$, that is, $\l$ is a hyperbolic threshold. We call $u\in l^{2,-1}(\ze^d)\setminus l^2(\ze^d)$ is a resonant state of $H=H_0+V$ if $u$ satisfies
\begin{align*}
u+(H_0-\l\mp i0)^{-1}Vu=0.
\end{align*}
If such $u$ exists, we call that $\l$ is a resonance of $H$.

\end{defn}
\begin{rem}
The validity of this definition lies in  Proposition \ref{lappropp}: If $\l$ is not an eigenvalue and not a resonance of $H$, then the outgoing/ incoming resolvent $(H-\l\mp i0)^{-1}$ exist.
\end{rem}

\begin{rem}
As is shown in Lemma \ref{hypreg}, we can replace $u\in l^{2,-1}(\ze^d)$ by $l^{2,-1-\d}(\ze^d)$.
\end{rem}

The following theorem implies that resonances of $H$ at hyperbolic thresholds disappear under a stronger assumption of $V$ even when $d=3$ or $4$.

\begin{thm}\label{hypthm}
Let $d\geq 3$ and $V$ be a real-valued function satisfying $|V(x)|\leq C\jap{x}^{-\d}$ with $\d>d/2+2$ . If $u\in l^{2,-1}(\ze^d)$ satisfies $u+(H_0-\l\pm i0)Vu=0$, then $u\in l^2(\ze^n)$. 
\end{thm}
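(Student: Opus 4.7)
The plan is to use the resonance equation to force the Fourier transform $\widehat{Vu}$ to vanish on the characteristic surface $\Sigma:=h_0^{-1}(\l)$, and then combine this vanishing with the $C^1$ regularity of $\widehat{Vu}$ and the conical structure of $\Sigma$ at the hyperbolic critical points to place $u$ in $l^2$. The preliminary regularity comes from the hypotheses: since $u\in l^{2,-1}(\ze^d)$ and $|V|\le C\jap{x}^{-\d}$ with $\d>d/2+2$, we have $Vu\in l^{2,\d-1}(\ze^d)$, and the Fourier isometry $l^{2,s}(\ze^d)\simeq H^s(\T^d)$ combined with the Sobolev embedding $H^{\d-1}(\T^d)\hookrightarrow C^1(\T^d)$ (valid because $\d-1>d/2+1$) yields $\widehat{Vu}\in C^1(\T^d)$.

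For the vanishing step, pair the resonance equation $u=-(H_0-\l\mp i0)^{-1}Vu$ against $Vu$ using the $l^{2,-1}$--$l^{2,1}$ duality, which is allowed because $Vu\in l^{2,1}(\ze^d)$ and the extended resolvent in \eqref{LAP3} maps $l^{2,1}$ to $l^{2,-1}$. The left-hand side $\sum_x V(x)|u(x)|^2$ is real since $V$ is real-valued, so taking imaginary parts and invoking Stone's formula gives
\begin{align*}
0=\int_\Sigma|\widehat{Vu}(\x)|^2\,d\sigma_\l(\x),
\end{align*}
where $d\sigma_\l=dS/|\nabla h_0|$ is the spectral density of $H_0$ at $\l$. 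In Morse coordinates near a hyperbolic critical point this measure has density $\sim\rho^{d-3}\,d\rho\,d\omega$, integrable for $d\ge 3$, so the integral makes sense. Continuity of $\widehat{Vu}$ then forces $\widehat{Vu}|_\Sigma\equiv 0$ pointwise.

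For the $L^2$ bound, write $\hat u=-\widehat{Vu}/(h_0-\l\mp i0)$. The vanishing $\widehat{Vu}|_\Sigma=0$ kills the singular piece $\widehat{Vu}\cdot\d(h_0-\l)$, so $\hat u=-\widehat{Vu}/(h_0-\l)$ as a locally integrable function. Away from the finitely many critical points of $h_0$, $\Sigma$ is smooth and the first-order $C^1$ vanishing makes the quotient locally bounded. Near a hyperbolic critical point, take Morse coordinates $\x=(\x',\x'')\in\re^k\times\re^{d-k}$ with $1\le k\le d-1$ and $h_0-\l=Q(\x):=-|\x'|^2+|\x''|^2$. A direct minimization shows that the nearest point on $\{Q=0\}$ to $(\x',\x'')$ lies at distance $\big||\x'|-|\x''|\big|/\sqrt{2}$, so the mean value theorem and the $C^1$ vanishing of $\widehat{Vu}$ on $\{Q=0\}$ give $|\widehat{Vu}(\x)|\le C\big||\x'|-|\x''|\big|$; combined with the factorization $|Q(\x)|=\big||\x'|-|\x''|\big|\,(|\x'|+|\x''|)$ this yields
\begin{align*}
\left|\frac{\widehat{Vu}(\x)}{Q(\x)}\right|\le\frac{C}{|\x'|+|\x''|}\le\frac{C}{|\x|},
\end{align*}
which is in $L^2$ on any ball in $\re^d$ precisely when $d\ge 3$. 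Summing the two regions gives $\hat u\in L^2(\T^d)$, i.e.\ $u\in l^2(\ze^d)$.

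The main technical obstacle is the conical singularity of $\Sigma$ at the hyperbolic critical points: one must absorb the singularity of $1/(h_0-\l)$ there, which forces both the $C^1$ control of $\widehat{Vu}$ (hence the hypothesis $\d>d/2+2$) and the dimensional condition $d\ge 3$ needed to keep the resulting $|\x|^{-1}$ bound locally $L^2$. The interplay of these two conditions is exactly what makes the argument work uniformly for all $d\ge 3$, in contrast to the elliptic case where the behavior depends on $d$.
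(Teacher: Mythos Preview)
Your proof is correct and follows essentially the same route as the paper: Sobolev embedding gives $\widehat{Vu}\in C^1(\T^d)$, the imaginary-part/Stone argument forces $\widehat{Vu}|_{\{h_0=\l\}}=0$, and then in Morse coordinates the bound $|\widehat{Vu}|\le C\big||\x'|-|\x''|\big|$ combined with the factorization of $Q$ gives $|\hat u|\le C/|\x|\in L^2_{\mathrm{loc}}$ for $d\ge 3$. The only cosmetic difference is that you obtain the $\big||\x'|-|\x''|\big|$ bound via the nearest-point-on-the-cone distance and the mean value inequality, whereas the paper obtains it by Taylor expanding along the radial segment in $\x'$ from $(|\x''|\o_1,|\x''|\o_2)$ to $(|\x'|\o_1,|\x''|\o_2)$; both are equivalent one-line arguments.
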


We recall from \cite{IM} that for a finitely supported real-valued potential $V$, $H$ has no eigenvalues in $(0,4d)$. Combining this result with Theorem \ref{hypthm}, we obtain the following corollary.

\begin{cor}\label{corab}
Let $d\geq 3$ and $V$ be a finitely supported real-valued potential. Then $H_0+V$ has no resonances and no eigenvalues in $(0,4d)$.
\end{cor}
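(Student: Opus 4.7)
The plan is to assemble the corollary from two ingredients: the main result of \cite{IM} for the eigenvalues, and Theorem \ref{hypthm} for the resonances. First I would observe that a finitely supported potential automatically satisfies the decay hypothesis of Theorem \ref{hypthm}, since $V(x) = 0$ outside some finite set implies $|V(x)| \leq C\jap{x}^{-\d}$ for every $\d > 0$, in particular for $\d > d/2 + 2$.

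For the eigenvalue part, there is nothing to prove: \cite{IM} directly gives absence of eigenvalues of $H_0 + V$ in $(0, 4d)$ for finitely supported real-valued $V$. For the resonance part, the key remark is that according to the definition above, resonances of $H$ in $(0, 4d)$ can only occur at the hyperbolic thresholds $\l \in \{4k\}_{k=1}^{d-1}$. Suppose, toward a contradiction, that such a $\l$ is a resonance. Then there exists $u \in l^{2,-1}(\ze^d) \setminus l^2(\ze^d)$ with
\begin{align*}
u + (H_0 - \l \mp i0)^{-1} V u = 0.
\end{align*}
Theorem \ref{hypthm} applies and forces $u \in l^2(\ze^d)$, contradicting $u \notin l^2(\ze^d)$. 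Hence no resonance exists at any hyperbolic threshold.

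I do not anticipate a serious obstacle, since the argument is a direct concatenation of the two cited results once one checks that the decay hypothesis is met. The only subtle point worth explicitly verifying is that the resonance equation, combined with $u \in l^2$, is consistent with the non-existence of eigenvalues; indeed, if $u \in l^2$ solves $u = -(H_0 - \l \mp i0)^{-1} V u$, then applying $(H_0 - \l)$ to both sides (using that on the range the boundary resolvent is a genuine inverse) yields $(H_0 + V)u = \l u$, and the result of \cite{IM} rules this out as well. This cross-check reinforces, but is not logically required for, the contradiction above.
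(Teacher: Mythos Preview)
Your proposal is correct and follows exactly the approach the paper takes: the paper simply states that combining the absence of eigenvalues in $(0,4d)$ from \cite{IM} with Theorem~\ref{hypthm} yields the corollary, and your write-up is a faithful unpacking of that sentence. The additional cross-check you include is not needed, as you note, but it does no harm.
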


This corollary implies the limiting absorption principle for $H=H_0+V$ near hyperbolic thresholds.

\begin{thm}\label{lapthm}
Let $d\geq 3$ and $V$ be a finitely supported real-valued potential. Set 
\begin{align*}
\O_{\e_1,\pm }=\{z\in \mathbb{C}\mid \pm\Im z>0 ,\,\,|z|>\e_1,\,\, |z-4d|>\e_1\}.
\end{align*}
for $0<\e_1<1$ and a signature $\pm$. Now fix a constant $0<\e_1<1$ and a signature $\pm$.

\item[$(i)$] We have
\begin{align}\label{LAP4}
\sup_{z\in \O_{\e_1, \pm}}\|\jap{x}^{-1}(H-z)^{-1}\jap{x}^{-1}\|_{B(l^2(\ze^d))}<\infty.
\end{align}
\item[$(ii)$] For each $s>1$, the operators $z\in \O_{\e_1,\pm} \mapsto (H-z)^{-1}\in B(l^{2,s}(\ze^d), l^{2,-s}(\ze^d))$ is H\"older continuous. In particular, limits
\begin{align*}
(H-\l\mp i0)^{-1}:=\lim_{\e\to 0,\, \e>0} (H-\l\mp i\e)^{-1}
\end{align*}
exist in the norm operator topology of $B(l^{2,s}(\ze^d), l^{2,-s}(\ze^d))$ for $\e_1<\l<4d-\e_1$.

\item[$(iii)$] Let $s>1$ and $\e_1<\l<4d-\e_1$. The outgoing/incoming resolvents $(H-\lambda \mp i0)^{-1}\in B(l^{2, s}(\ze^d) , l^{2,-s}(\ze^d))$ uniquely extend to bounded linear operators from $l^{2,1}(\ze^d)$ to $l^{2,-1}(\ze^d)$. Moreover, we have
\begin{align}\label{LAP5}
\sup_{\e_1<\l<4d-\e_1}\|\jap{x}^{-1}(H-\l\mp i0)^{-1}\jap{x}^{-1}\|_{B(l^2(\ze^d))}<\infty.
\end{align}
\end{thm}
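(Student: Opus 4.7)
The plan is to reduce the problem to inverting $I+K(z)$ with $K(z):=(H_0-z)^{-1}V$ on the weighted space $l^{2,-1}(\ze^d)$, via the identity
\[
(H-z)^{-1}=\bigl(I+(H_0-z)^{-1}V\bigr)^{-1}(H_0-z)^{-1}.
\]
The three conclusions will then follow by combining uniform $l^{2,-1}$-boundedness and Hölder continuity of the inverse with the $H_0$ estimates \eqref{LAP2} and \eqref{LAP3}.

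The crucial simplification is that $V$ is finitely supported: writing $F:=\supp V$, one has $Vu=\sum_{y\in F}V(y)u(y)e_y$, so $K(z)$ has rank $|F|$ with range spanned by the Green's functions $\{(H_0-z)^{-1}e_y\}_{y\in F}$. Restricting indices to $F$ turns the equation $u+K(z)u=0$ in $l^{2,-1}(\ze^d)$ into a finite linear system $(I+M(z))u|_F=0$, where $M(z)_{xy}:=V(y)G(x,y;z)$ and $G$ is the integral kernel of $(H_0-z)^{-1}$. The refined threshold-weight LAP of \cite{TT} — $\|\jap{x}^{-1+\d}(H_0-z)^{-1}\jap{x}^{-1-\d}\|<\infty$ for small $\d>0$, with Hölder continuity up to the real axis — yields convergence of $(H_0-z)^{-1}e_y$ in $l^{2,-1+\d}$ as $z\to\l\pm i0$ for $\l\in(\e_1,4d-\e_1)$. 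This is strong enough to control pointwise evaluation on the fixed finite set $F$, so $z\mapsto M(z)$ is Hölder continuous on $\overline{\O_{\e_1,\pm}}$.

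Pointwise invertibility of $I+M(z)$ reduces to the boundary $z=\l\pm i0$ with $\l\in(\e_1,4d-\e_1)$, since $\Im z\neq 0$ is handled by self-adjointness of $H$. A non-trivial kernel element $u|_F$ reconstructs $u:=-\sum_{y\in F}V(y)u(y)(H_0-\l\pm i0)^{-1}e_y\in l^{2,-1+\d}\subset l^{2,-1}$, which is a non-trivial $l^{2,-1}$ solution of $u+(H_0-\l\pm i0)^{-1}Vu=0$. If $\l$ is a hyperbolic threshold this is a resonance, ruled out by Theorem \ref{hypthm}; if $\l$ is non-threshold, a standard bootstrap via \eqref{LAP2} combined with a Rellich/unique-continuation argument promotes $u$ to an $l^2$ eigenfunction, forbidden in $(0,4d)$ by \cite{IM}. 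Hence Corollary \ref{corab} delivers pointwise invertibility on $\overline{\O_{\e_1,\pm}}$, and since $M(z)\to 0$ as $|z|\to\infty$ the inverse is uniformly bounded on this closure and inherits the Hölder continuity of $M$.

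Statements (i)--(iii) then follow by multiplication in the factorization: (i) and (iii) are two formulations of the same product bound combining the uniform $B(l^{2,-1}(\ze^d))$-bound of $(I+K(z))^{-1}$ with \eqref{LAP3}; (ii) combines the Hölder continuous $(I+K(z))^{-1}\in B(l^{2,-s}(\ze^d))$ with the Hölder continuity of $(H_0-z)^{-1}\in B(l^{2,s}(\ze^d),l^{2,-s}(\ze^d))$ for $s>1$ from \eqref{LAP2}. The principal technical obstacle is the continuity of $M(z)$ up to the hyperbolic thresholds, which does not follow from \eqref{LAP2} alone but hinges on the refined threshold-weighted LAP proved in \cite{TT}.
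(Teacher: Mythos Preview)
Your argument is correct and arrives at the same conclusions, but the mechanism differs from the paper's. The paper does not exploit the finite rank of $K(z)=(H_0-z)^{-1}V$ at all: it first proves a general perturbative result (Proposition~\ref{lappropp}) valid for any real $V$ with $|V|\le C\jap{x}^{-2-\d}$, showing via Lemma~\ref{Rescpt} that $R_{0,\pm}(z)V$ is compact and norm-continuous on $l^{2,-s}$ for $1\le s<1+\d$, so that $I+R_{0,\pm}(z)V$ is a continuous family of Fredholm operators of index zero. The Fredholm alternative then converts the kernel condition~\eqref{Ker} directly into uniform invertibility, and the H\"older regularity is read off from the algebraic identity for $(I+R_{0,\pm}(z)V)^{-1}-(I+R_{0,\pm}(z')V)^{-1}$. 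Only at the very end does finite support enter, solely to invoke Corollary~\ref{corab} and verify~\eqref{Ker}.

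Your route---restricting to the finite index set $F=\supp V$ and reducing to the finite matrix $M(z)$---is more elementary in that it avoids the Fredholm/compactness machinery entirely, and it makes the uniform invertibility transparent (continuity of $\det(I+M(z))$ on a compact set plus nonvanishing). The price is that your argument is specific to compactly supported $V$, whereas the paper's Proposition~\ref{lappropp} is stated and proved for general short-range $V$ and could be reused once the kernel condition is established by other means. Both approaches rest on the same inputs from \cite{TT}: the threshold-weighted bound~\eqref{LAP1} for uniformity and the H\"older continuity~\eqref{Ho} for the boundary behaviour. Your handling of the non-threshold boundary points (vanishing of $\widehat{Vu}$ on $M_\l$ followed by a bootstrap to $l^2$) matches what the paper implicitly uses when it asserts that~\eqref{Ker} holds on all of $\overline{U}$.
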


\begin{rem}
Suppose that there are no resonances and no eigenvalues at $\{0,4d\}$. Then the all results in the above theorem still hold if we replace $\O_{\e_1, \pm}$ by $\mathbb{C}_{\pm}=\{z\in \mathbb{C}\mid \pm\Im z>0\}$. See Proposition \ref{lappropp}.
\end{rem}

As mentioned above, for the case of finitely supported potentials it is known that there are no eigenvalues in open interval $(0,  4d)$ (see [3]).
However it is possible that the threshold $0$ or $4d$ might be embedded
eigenvalue.
The persistent set (variety) $P_S$ of embedded eigenvalue $0$ is defined as the set of all potentials $V$  supported on $S$ such that $H=H_0 + V$ has the
eigenvalue $0$,
that is
\[
P_{S} = \{ V \subset  {\mathbb{R}}^S \mid \mathrm{supp}  V \subset S\,\, \text{and}\,\,  0 \text{ is an
eigenvalue of}\,\, H_{0} + V\}.
\]
Here  $S$ is a fixed finite subset of   $\mathbb{Z}^d$.
In [HNO], some geometrical structure and properties of $P_{S}$ are
considered.
Moreover the notion of the threshold resonances is defined and
non-existence of them
for $d \ge 5$ and the persistent set of them for $d=2,3,4$ are studied.
The ways of proofs for many statements in [HNO], however, seem to depend on
the finiteness
of potential support.
So in our article we attempt to give an appropriate definition of threshold
resonat states of  more general potentials and investigate some properties
of them
 by using a method of  harmonic analysis. Furthermore we study the limiting absorption principle and resonances at hyperbolic thresholds.

We fix some notations. For Banach spaces $X,Y$, we denote the set of all bounded linear operators from $X$ to $Y$ by $B(X,Y)$ and set $B(X):=B(X,X)$. 

We need the following useful representation.
We assume $\nabla h_0\neq 0$ on $\{h_0(\x)=\l\}\cap U$ for $\l\in\re$ and and an open set $U$. Moreover, we assume $\{h_0(\x)=\l\}\cap U$ has the  following graph representation: 
\begin{align*}
\{h_0(\x)=\l\}\cap U=\{\x\mid \x_d=g(\x')\},\,\, \x=(\x',\x_d).
\end{align*}
Then the induced surface measure $d\s$ on $\{h_0(\x)=\l\}\cap U$ is written as
\begin{align}\label{surfrep}
d\s(\x)=\sqrt{1+|\nabla g(\x')|^2}d\x'=\frac{|(\nabla_{\x}h_0)(\x', g(\x'))|}{|(\pa_{\x_d}h_0)(\x', g(\x'))|}d\x'.
\end{align}

\textbf{Acknowledgment.} 
YN was supported by JSPS KAKENHI Grant number  5K04960. KT was supported by JSPS Research Fellowship for Young Scientists, KAKENHI Grant Number 17J04478 and the program FMSP at the Graduate School of Mathematics Sciences, the University of Tokyo. KT would like to thank his supervisors Kenichi Ito and Shu Nakamura for encouraging to write this paper.

\section{Pointwise estimates, Proof of Theorem \ref{mainprop}}

\subsection{Upper bounds}
Let $d\geq 3$. We consider the solution to 
\begin{align}\label{Scheq}
(H_0+V)u=0.
\end{align}
First, we reduce the equation $(\ref{Scheq})$ to the integral equation, which is useful for estimating $u$:
\begin{align}\label{eig}
u+H_0^{-1}Vu=0,
\end{align}
where
\begin{align*}K_2(x)=\int_{\T^d}e^{2\pi x\cdot \x}h_0(\x)^{-1}d\x,\,\,  H_0^{-1}w(x)=\sum_{y\in \mathbb{Z}^d}K_2(x-y)w(y),
\end{align*}
for $w\in l^{2, 1/2+\e}(\ze^d)$ with $\e>0$. Here $H_0^{-1}$ is the bounded operator from $l^{2,\a}(\ze^d)$ to $l^{2,-\b}(\ze^d)$ for $\a,\b>1/2$ with $\a+\b\geq 2$ (see Appendix B, Corollary \ref{disHLS}). Moreover, it also follows that the multiplication operator 
\begin{align*}
h_0^{-1}:\bigcap_{s>0}l^{2,s}(\ze^d)\to \bigcup_{s\in \re}l^{2,s}(\ze^d)
\end{align*}
can be uniquely extended to the operator
\begin{align}\label{h_0map}
h_0^{-1}:H^{\a}(\T^d)\to H^{-\b}(\T^d),\,\, \a,\b>\frac{1}{2},\,\, \a+\b\geq 2
\end{align}
and that
\begin{align*}
h_0^{-1}=\mathcal{F}_d^{-1}H_0^{-1}\mathcal{F}_d   :H^{\a}(\T^d)\to H^{-\b}(\T^d),\,\, \a,\b>\frac{1}{2},\,\, \a+\b\geq 2.
\end{align*}

\begin{lem}
We assume $|V(x)|\leq C\jap{x}^{-2-\e}$ for some $\e>0$. 
For $u\in l^{2,-3/2}(\ze^d)$, $(\ref{Scheq})$ implies $(\ref{eig})$.

\end{lem}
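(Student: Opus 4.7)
The plan is to set $w := u + H_0^{-1}(Vu)$, verify that $w \in l^{2,-3/2}(\ze^d)$, prove $H_0 w = 0$ distributionally, and then deduce $w = 0$ via a Fourier-support argument that uses $d \ge 3$.

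For well-definedness of $w$, the bound $|(Vu)(x)| \le C\jap{x}^{-2-\e}|u(x)|$ combined with $u \in l^{2,-3/2}(\ze^d)$ yields
\begin{align*}
\jap{x}^{1/2+\e'}|(Vu)(x)| \le C\jap{x}^{-3/2}|u(x)|
\end{align*}
for any $0 < \e' \le \min(\e,1/2)$, hence $Vu \in l^{2,1/2+\e'}(\ze^d)$. Corollary \ref{disHLS} applied with weights $\a = 1/2+\e'$ and $\b = 3/2-\e'$ (so $\a+\b = 2$ and both exceed $1/2$) then places $H_0^{-1}(Vu)$ in $l^{2,-3/2+\e'}(\ze^d) \subset l^{2,-3/2}(\ze^d)$.

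The key step is the identity $H_0 H_0^{-1}(Vu) = Vu$, from which $H_0 w = H_0 u + H_0 H_0^{-1}(Vu) = -Vu + Vu = 0$ follows immediately. On the Fourier side this identity reads $h_0 \cdot (h_0^{-1}\widehat{Vu}) = \widehat{Vu}$, which I would verify by smooth approximation: pick $\phi_n \in C^\infty(\T^d)$ with $\phi_n \to \widehat{Vu}$ in $H^{1/2+\e'}(\T^d)$. For each smooth $\phi_n$, the function $h_0^{-1}\phi_n$ is locally integrable on $\T^d$ (thanks to the Morse bound $h_0(\x)\sim 4\pi^2|\x|^2$ near $\x = 0$ and $d \ge 3$), and the pointwise a.e.\ equality $h_0 \cdot (h_0^{-1}\phi_n) = \phi_n$ upgrades to a distributional identity. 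Passing to the limit using the boundedness of $h_0^{-1}: H^{1/2+\e'}(\T^d) \to H^{-3/2+\e'}(\T^d)$ from (\ref{h_0map}) and continuity of multiplication by $h_0 \in C^\infty(\T^d)$ on $H^{-3/2+\e'}(\T^d)$ yields $h_0 \cdot (h_0^{-1}\widehat{Vu}) = \widehat{Vu}$ in $H^{-3/2+\e'}(\T^d)$, which is what I need.

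With $H_0 w = 0$ and $\hat w \in H^{-3/2}(\T^d)$, the distribution $\hat w$ satisfies $\supp \hat w \subset h_0^{-1}(0) = \{0\}$ and is therefore a finite linear combination $\sum_\a c_\a \pa^\a \delta_0$. Such a combination belongs to $H^{-s}(\T^d)$ only when $s > d/2 + |\a|$ for every $\a$ with $c_\a \ne 0$; applied with $s = 3/2$ and $d \ge 3$ this has no solution in $\a$, forcing all $c_\a$ to vanish, so $w = 0$ and (\ref{eig}) holds. The main obstacle is exactly the identity $H_0 H_0^{-1}(Vu) = Vu$: since $H_0^{-1}$ is defined abstractly as an extension into a weighted space where $h_0^{-1}$ is no longer a classical multiplier, one cannot cancel $H_0$ against $H_0^{-1}$ on a distributional object directly, and the smooth approximation of $\widehat{Vu}$ is what recovers the expected cancellation.
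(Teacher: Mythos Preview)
Your proof is correct and follows essentially the same route as the paper: both arguments show that $\hat u + h_0^{-1}\widehat{Vu}$ is a distribution on $\T^d$ supported at $\{0\}$, hence a finite combination of $\pa^\a\delta_0$, and then kill it using the Sobolev regularity $H^{-3/2}(\T^d)$ (the paper embeds into $H^{-d/2}$, which is the same threshold for $d=3$ and weaker for $d\ge 4$). The only technical difference is in how the support condition is reached: the paper localizes with a cutoff $\gamma\in C_c^\infty(\T^d\setminus\{0\})$ and uses that $h_0^{-1}$ is smooth there to get $\hat u=-h_0^{-1}\widehat{Vu}$ in $\mathcal{D}'(\T^d\setminus\{0\})$ directly, while you instead prove the global identity $h_0\cdot(h_0^{-1}\widehat{Vu})=\widehat{Vu}$ by approximating $\widehat{Vu}$ by smooth $\phi_n$ and passing to the limit via the continuity of $h_0^{-1}:H^{1/2+\e'}\to H^{-3/2+\e'}$; this then gives $h_0\hat w=0$ and hence $\supp\hat w\subset\{0\}$. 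Both are equally short and neither is more general than the other.
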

\begin{proof}
The relations $(\ref{mul})$ and $(\ref{Scheq})$ implies
\begin{align}\label{ei1}
h_0(\x)\hat{u}(\x)=-\widehat{Vu}(\x),\,\, \hat{u}\in H^{-\s}(\T^d).
\end{align}
First, we note $\hat{u}(\x)=-h_0(\x)^{-1}\widehat{Vu}(\x)$ in $\mathcal{D}'(\T^d\setminus \{0\})$. To see this, it suffices to prove
\begin{align*}
(\hat{u},\f)=-(h_0^{-1}\widehat{Vu}, \f),\,\, \f\in C_c^{\infty}(\T^d\setminus \{0\}).
\end{align*}
Let $\g\in C_c^{\infty}(\T^d\setminus \{0\})$ be a real-valued function such that $\g\f=\f$. Then we have
\begin{align}\label{ei2}
\g(\x)\hat{u}(\x)=-\g(\x)h_0(\x)^{-1}\widehat{Vu}(\x)
\end{align}
in the distributional sense.
In fact, since $\widehat{Vu}\in L^2(\T^d)$ and since $h_0(\x)^{-1}$ is smooth away from $\x=0$, then it follows that the both side of $(\ref{ei1})$ are measurable functions away from $\x=0$. In particular, $\g(\x)h_0(\x)\hat{u}(\x)$ and $\g(\x)\widehat{Vu}(\x)$ are measurable functions. This implies $(\ref{ei2})$ as measurable functions. Since $\widehat{Vu}\in L^2(\T^d)\subset L^1(\T^d)$ and since $\g(\x)h_0(\x)^{-1}$ is smooth, then it follows that $\g(\x)\hat{u}\in L^2(\T^d)\subset L^1(\T^d)$. Thus $(\ref{ei2})$ follows in $L^1(\T^d)$. In particular, we have $(\ref{ei2})$ in the distributional sense. Hence we obtain
\begin{align*}
(\hat{u},\f)=(\g\hat{u},\f)=-(\g h_0^{-1}\widehat{Vu}, \f)=-(h_0^{-1}\widehat{Vu}, \f).
\end{align*}
This proves $\hat{u}(\x)=-h_0(\x)^{-1}\widehat{Vu}(\x)$ in $\mathcal{D}'(\T^d\setminus \{0\})$. We note $h_0^{-1}\widehat{Vu}\in \mathcal{D}'(\T^d)$ by $(\ref{h_0map})$. These imply that $\hat{u}+h_0^{-1}\widehat{Vu}$ is supported in $\{0\}$ as an element of $\mathcal{D}'(\T^d)$ and can be written as a linear combination of the derivatives of the Dirac measure. Since $\pa_{\x}^{\a}\d\notin H^{-d/2}(\T^d)$ for ant $\a\in \mathbb{N}^d$, it suffices to prove $\hat{u}+h_0^{-1}\widehat{Vu}\in H^{-d/2}(\T^d)$ in order to deduce $\hat{u}=-h_0^{-1}\widehat{Vu}$. Since $\hat{u}\in H^{-3/2}(\T^d)\subset H^{-d/2}(\T^d)$, we only need to prove $h_0^{-1}\widehat{Vu}\in H^{-d/2}(\T^d)$. Using $\widehat{Vu}\in H^{1/2+\e}(\T^d)$ and $(\ref{h_0map})$ with $\a=1/2+\e$ and $\b=3/2$, we obtain $h_0^{-1}\widehat{Vu}\in H^{-3/2}(\T^d)\subset H^{-d/2}(\T^d)$. This completes the proof.
\end{proof}

The main result of this subsection is the following proposition.

\begin{prop}\label{upp}
Let $u\in l^{2,-3/2}(\ze^d)$ be a solution to $(\ref{eig})$. Then we have
\begin{align*}
|u(x)|\leq C\jap{x}^{-d+2}.
\end{align*}
\end{prop}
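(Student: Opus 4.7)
The plan is to prove the pointwise bound by a bootstrap argument on the integral equation $u(x) = -\sum_{y\in \ze^d} K_2(x-y) V(y) u(y)$, combined with the decay estimate $|K_2(x)| \leq C\jap{x}^{-(d-2)}$ of the discrete Newton potential.

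First I would establish the kernel bound. Since $K_2(x) = \int_{\T^d} e^{2\pi i x\cdot \x} h_0(\x)^{-1} d\x$ and the integrand is smooth away from $\x=0$, with $h_0(\x) \sim 4\pi^2 |\x|^2$ there, the singularity is of the same type as the kernel of $(-\Delta)^{-1}$ on $\re^d$. Standard oscillatory integral analysis (extract a cutoff near $\x=0$, rescale, and compare to the continuous Newton potential) gives $K_2(x) = c_d |x|^{-(d-2)} + O(|x|^{-(d-1)})$, in particular $|K_2(x)| \leq C\jap{x}^{-(d-2)}$. (The refined asymptotic will also be used later in the proof of the full Theorem~\ref{mainprop}.)

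Next I would upgrade the weighted $l^2$ control. From $u\in l^{2,-3/2}$ and $|V|\leq C\jap{x}^{-2-\e}$ we have $Vu\in l^{2, 1/2+\e}$. Applying Corollary~\ref{disHLS} with $\a=1/2+\e$, $\b=3/2-\e$ gives $u=-H_0^{-1}(Vu)\in l^{2,-3/2+\e}$; iterating finitely many times yields $u\in l^{2,-1/2-\d}$ for some small $\d>0$. This improved bound can then be converted to an initial pointwise decay $|u(x)|\leq C\jap{x}^{-\a_0}$ with $\a_0>0$, via the integral equation and the dyadic splitting
\begin{align*}
|u(x)| \leq \sum_{|y|\leq |x|/2} |K_2(x-y)|\,|V(y)|\,|u(y)| + \sum_{|y|>|x|/2}|K_2(x-y)|\,|V(y)|\,|u(y)|:
\end{align*}
in the near region use $|K_2(x-y)|\leq C\jap{x}^{-(d-2)}$, and in the far region use $\jap{y}^{-2-\e}\leq C\jap{x}^{-2-\e}$ together with Cauchy--Schwarz against $\|u\|_{l^{2,-1/2-\d}}$.

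Finally I would iterate the pointwise bound. If $|u(x)|\leq C\jap{x}^{-\a}$ for some $0<\a<d-2$, then $|Vu(y)|\leq C\jap{y}^{-2-\e-\a}$, and the same near/far splitting applied to the convolution yields the standard estimate
\begin{align*}
\sum_y \jap{x-y}^{-(d-2)}\jap{y}^{-(2+\e+\a)} \leq C\jap{x}^{-\min(d-2,\,\a+\e)}.
\end{align*}
Thus each iteration improves the decay rate by $\e$ until it reaches the ceiling $d-2$, at which point $|u(x)|\leq C\jap{x}^{-(d-2)}$, as desired. The main obstacle I expect is the low-dimensional cases $d=3,4$, where naive Cauchy--Schwarz estimates involving $\sum_y |K_2(x-y)|^2$ diverge because $2(d-2)\leq d$; the dyadic near/far splitting is essential to sidestep this, and care is required in the first pointwise step since one cannot directly extract a pointwise bound from the weighted $l^2$ information alone.
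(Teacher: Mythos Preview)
Your overall strategy---the kernel bound $|K_2(x)|\leq C\jap{x}^{-(d-2)}$ followed by a pointwise bootstrap on the integral equation---is exactly the paper's, and your final bootstrap paragraph matches the paper's use of Lemma~\ref{intcal}. The difference is entirely in how the \emph{first} pointwise bound is obtained.

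The paper does not bootstrap the weighted $l^2$ membership and does not split dyadically at this stage. It applies a single weighted Cauchy--Schwarz directly to the convolution,
\[
|u(x)|\;\leq\; C\Bigl(\sum_{y}\jap{y}^{-2(d-2)}\jap{x-y}^{-1-2\e}\Bigr)^{1/2}\,\bigl\|\jap{\cdot}^{1/2+\e}Vu\bigr\|_{l^2},
\]
and then feeds the inner sum into Lemma~\ref{intcal} (cases $(i)$, $(ii)$, $(iii)$ according to $d=3$, $d=4$, $d\geq 5$). This yields an initial decay $|u(x)|\leq C\jap{x}^{-\a_0}$ in one stroke, with no need for the $l^{2,-s}$ upgrade via Corollary~\ref{disHLS}.

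Your proposed route has a concrete gap in the far-region estimate. After you replace $\jap{y}^{-2-\e}$ by $C\jap{x}^{-2-\e}$, the remaining sum is $\sum_{|y|>|x|/2}\jap{x-y}^{-(d-2)}|u(y)|$, and ``Cauchy--Schwarz against $\|u\|_{l^{2,-1/2-\d}}$'' requires
\[
\sum_{|y|>|x|/2}\jap{x-y}^{-2(d-2)}\jap{y}^{1+2\d}<\infty.
\]
For $d=3$ the summand behaves like $\jap{y}^{-1+2\d}$ for large $|y|$ and the sum diverges; your $l^{2,-s}$ bootstrap cannot go past $s=1/2$ in $d=3$ (Corollary~\ref{disHLS} requires $\b>1/2$ there), so this is not repairable by further $l^2$ iteration. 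The cure is to retain part of $V$'s decay inside the sum rather than extracting all of it---but once you do that, the dyadic near/far split no longer buys anything, and you are effectively back at the paper's single weighted Cauchy--Schwarz. In short: drop the $l^{2,-s}$ detour and the dyadic split for the initial step, and instead keep all weights inside one convolution sum handled by Lemma~\ref{intcal}.
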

The following lemma is useful.

\begin{lem}\label{intcal}
Let $d\geq 1$. 
\item[$(i)$]
Let $k, l<d$ with $k+l>d$. Then we have
\begin{align*}
I=\sum_{y\in \ze^d}\jap{x-y}^{-k}\jap{y}^{-l}\leq C\jap{x}^{d-k-l}.
\end{align*}

\item[$(ii)$] Let $0<k<d$ and $l=d$. For any $\d>0$, there exists $C_{\d}>0$ such that $I\leq C_{\d}\jap{x}^{\d-k}$.

\item[$(iii)$] Let $0<k<d<l$. Then we have
\begin{align*}
I\leq C\jap{x}^{-k}.
\end{align*}

\item[$(iv)$] Let $k=d$ and $l>d$. Then we have
\begin{align*}
I\leq C\jap{x}^{-d}.
\end{align*}

\end{lem}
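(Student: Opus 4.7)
The plan is to compare each discrete sum with its continuous counterpart $\int_{\mathbb{R}^d}\langle x-y\rangle^{-k}\langle y\rangle^{-l}\,dy$ and then to split the domain of integration into three pieces according to the relative sizes of $|y|$, $|x-y|$ and $|x|$. Since the weights $\langle\cdot\rangle^{-s}$ are slowly varying and locally bounded, the discrete sum and the integral differ by at most a multiplicative constant independent of $x$, so estimating the integral suffices. All four bounds are trivial for $|x|\leq 1$, so I assume $|x|\geq 1$ throughout.

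The splitting I will use is
\begin{align*}
A&=\{y:|y|\leq |x|/2\},\\
B&=\{y:|x-y|\leq |x|/2\},\\
C&=\{y:|y|>|x|/2 \text{ and } |x-y|>|x|/2\}.
\end{align*}
On $A$ the factor $\langle x-y\rangle^{-k}$ is of order $\langle x\rangle^{-k}$, and the remaining integral of $\langle y\rangle^{-l}$ over a ball of radius $|x|/2$ is of order $\langle x\rangle^{d-l}$, $\log\langle x\rangle$ or $1$ according as $l<d$, $l=d$ or $l>d$. Region $B$ is handled by the change of variable $y\mapsto x-y$, which interchanges the roles of $k$ and $l$. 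I further split $C$ into $C_1=C\cap\{|y|\leq 2|x|\}$, on which both $|y|$ and $|x-y|$ are comparable to $|x|$ and the volume is of order $|x|^d$, and $C_2=\{|y|>2|x|\}$, on which $|x-y|\geq |y|/2$, so that the integrand is bounded by a constant times $\langle y\rangle^{-k-l}$. Since $k+l>d$ in all four cases of the lemma, both $C_1$ and $C_2$ contribute a term of order $\langle x\rangle^{d-k-l}$.

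Assembling the estimates from the three regions yields each claimed bound: in (i) every piece is of order $\langle x\rangle^{d-k-l}$; in (ii) the dominant piece is the logarithmically corrected Region $A$ bound $\langle x\rangle^{-k}\log\langle x\rangle$, which is at most $C_\delta\langle x\rangle^{\delta-k}$ for any $\delta>0$; in (iii) the integral $\int\langle y\rangle^{-l}\,dy$ converges, so Region $A$ alone gives $\langle x\rangle^{-k}$ while the other regions give the smaller terms of order $\langle x\rangle^{d-k-l}$; and in (iv) both Regions $A$ and $C$ give the desired $\langle x\rangle^{-d}$, while the logarithmic factor appearing in Region $B$ is absorbed into $\langle x\rangle^{d-l}$ because $l>d$.

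The only point that requires a little care is the absorption of the logarithmic factor in case (ii); the rest of the argument is a routine region-splitting. The preliminary sum-to-integral comparison presents no obstacle since the weights $\langle\cdot\rangle^{-s}$ are bounded near the origin, so no singularity is created at $y=0$ or $y=x$ in the integral representation.
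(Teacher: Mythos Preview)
Your proof is correct and follows essentially the same region-splitting strategy as the paper. The paper works directly with the discrete sums and uses the three regions $\{|x-y|\le |x|/2\}$, $\{|x-y|\ge |x|/2,\ |y|\le 2|x|\}$, $\{|y|>2|x|\}$; your regions $A,B,C_1,C_2$ are a minor re-slicing of these (your $B$ is the paper's first region, your $C_2$ is the paper's third, and your $A\cup C_1$ is the paper's second), and the estimates on each piece are identical in spirit.
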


\begin{proof}
$(i)$ We decompose $I=I_1+I_2+I_3$ such that
\begin{align*}
&I_1=\sum_{|x-y|\leq 1/2|x|}\jap{x-y}^{-k}\jap{y}^{-l},\,\, I_2=\sum_{\substack{|x-y|\geq 1/2|x|,\\ |y|\leq 2|x|}}\jap{x-y}^{-k}\jap{y}^{-l},\\
&I_3=\sum_{\substack{|x-y|\geq 1/2|x|,\\ |y|> 2|x|}}\jap{x-y}^{-k}\jap{y}^{-l}.
\end{align*}
We note that $|x-y|\leq 1/2|x|$ implies $1/2|x|\leq |y|\leq 3/2|x|$. Using this and $k<d$, we have
\begin{align*}
I_1\leq& C\jap{x}^{-l}\sum_{|x-y|\leq 1/2|x|}\jap{x-y}^{-k}=C\jap{x}^{-l}\sum_{|y|\leq 1/2|x|}\jap{y}^{-k}\leq C\jap{x}^{d-k-l}.
\end{align*}
Moreover, using $l<d$, we learn
\begin{align*}
I_2\leq C\jap{x}^{-k}\sum_{\substack{|x-y|\geq 1/2|x|,\\ |y|\leq 2|x|}}\jap{y}^{-l}\leq C\jap{x}^{d-k-l}.
\end{align*}
To estimate $I_3$, we observe that $|x-y|\geq 1/2|y|$ holds in $\{|y|>2|x|\}$. Using this and $k+l>d$, we obtain
\begin{align*}
I_3\leq C\sum_{|y|>2|x|}\jap{y}^{-k-l}\leq C\jap{x}^{d-k-l}.
\end{align*}
Thus we conclude $I\leq C\jap{x}^{d-k-l}$.

$(ii)$ As in the proof of $(i)$, using $k<d$ and $k+l>d$ with $l=d$, we have $I_1+I_3\leq C\jap{x}^{-k}$. We observe
\begin{align*}
I_2\leq C\jap{x}^{-k}\sum_{|y|\leq 2|x|}\jap{y}^{-d}\leq C_{\d}\jap{x}^{\d-k}.
\end{align*}
This proves $(ii)$.

$(iii)$ As in the proof of $(i)$, using $k<d$, we have $I_1\leq C\jap{x}^{d-k-l}$. The inequality $l>d$ implies $I_1\leq C\jap{x}^{-k}$. On the other hand, using $l>d$, we observe
\begin{align*}
I_2+I_3\leq C\jap{x}^{-k}\sum_{|y|\leq 2|x|}\jap{y}^{-l}\leq C\jap{x}^{-k}.
\end{align*}
We conclude $I\leq C\jap{x}^{-k}$.

$(iv)$ As in the proof of $(iii)$, using $l>d$, we have $I_2+I_3\leq C\jap{x}^{-d}$. Since $|x-y|\leq 1/2|x|$ holds on $\{1/2|x|\leq |y|\leq 3/2|x|\}$, we have
\begin{align*}
I_1\leq C\jap{x}^{-l}\sum_{|y|\leq 1/2|x|}\jap{x}^{-d}\leq C_{\d}\jap{x}^{\d-l}
\end{align*}
for any $\d>0$. We take $\d=l-d>0$ and obtain $I_3\leq C\jap{x}^{-d}$.

\end{proof}

\begin{proof}[Proof of Proposition \ref{upp}]
We may assume $0<\e<1$. Using $u\in l^{2,-3/2}(\ze^d)$, $|V(x)|\leq C\jap{x}^{-2-\e}$ and Corollary \ref{Kerpo} with $l=2$, we have
\begin{align*}
|u(x)|=|H_0^{-1}Vu(x)|\leq& C\sum_{y\in \ze^d}\jap{y}^{-d+2}|Vu(x-y)|\\
\leq&C(\sum_{y\in \ze^d}\jap{y}^{-2d+4}\jap{x-y}^{-1-2\e})^{1/2}\|\jap{x}^{1/2+\e}Vu\|_{l^2(\ze^d)}.
\end{align*}
Applying Lemma \ref{intcal} with $k=1+2\e$ and $l=2d-4$, we have $|u(x)|\leq C\jap{x}^{-\e}\leq C\jap{x}^{-\e/2}$ for $d=3$, $|u(x)|\leq C\jap{x}^{-1/2-\e/2}$ for $d\geq 4$. 

The argument below is based on the standard bootstrap technique (for example, see \cite[Lemma 8 in the proof of Theorem XIII.33]{RS}).
Set $\a_d=0$ for $d=3$ and $\a_d=1/2$ for $d\geq 4$. Let $N$ be a real number such that $2+\a_d+(N+1)\e<d$. Suppose $|u(x)|\leq C\jap{x}^{-\a_d-N\e}$ holds. Then it follows that
\begin{align*}
|u(x)|\leq C\sum_{y\in \ze^d}\jap{y}^{-d+2}\jap{x-y}^{-2-\a_d-(N+1)\e}
\end{align*}
Applying Lemma \ref{intcal} with $k=2+\a_d+(N+1)\e$ and $l=d-2$, we have $|u(x)|\leq C\jap{x}^{-\a_d-(N+1)\e}$. By an induction argument, we obtain $|u(x)|\leq C\jap{x}^{-d+2}$.

\end{proof}

\subsection{Lower bounds, Proof of Theorem \ref{mainprop}}

We need some elementary lemmas.

\begin{lem}\label{foucon}\cite[Theorem2.4.6]{G}
Let $c_d>0$ be as in $(\ref{c_d})$. Then we have
\begin{align*}
\int_{\re^d}e^{2\pi ix\cdot \x}\frac{1}{4\pi^2|\x|^2}d\x=c_d|x|^{-d+2}.
\end{align*}
\end{lem}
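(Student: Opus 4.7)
The plan is to reduce this identity to a Gaussian Fourier integral followed by a one-dimensional Gamma-function computation. My starting point will be the subordination identity
\[
\frac{1}{4\pi^2|\x|^2}=\int_0^\infty e^{-4\pi^2 t|\x|^2}\,dt,
\]
which rewrites the singular symbol as an average of heat-kernel symbols. Plugging this into the left-hand side and formally swapping the order of integration, the inner $\x$-integral becomes the classical Gaussian Fourier transform
\[
\int_{\re^d}e^{2\pi i x\cdot\x}e^{-4\pi^2 t|\x|^2}\,d\x=(4\pi t)^{-d/2}e^{-|x|^2/(4t)}.
\]

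Next, I would evaluate the remaining one-dimensional integral $\int_0^\infty(4\pi t)^{-d/2}e^{-|x|^2/(4t)}\,dt$ via the substitution $u=|x|^2/(4t)$. A short computation transforms it into
\[
\frac{|x|^{2-d}}{4\pi^{d/2}}\int_0^\infty u^{d/2-2}e^{-u}\,du=\frac{\Gamma(d/2-1)}{4\pi^{d/2}}|x|^{2-d}=c_d|x|^{2-d},
\]
as claimed. Notice that convergence of the $u$-integral at $u=0$ requires $d/2-2>-1$, i.e.\ $d\geq 3$, which is precisely the standing hypothesis under which the elliptic threshold resonance theory is developed.

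The main obstacle is justifying the swap of integrations, since $|\x|^{-2}\notin L^1(\re^d)$ and the Fourier integral on the left is only conditionally convergent for $x\neq 0$; strictly speaking it must be read distributionally. I would handle this by inserting a Gaussian regularization $e^{-\e|\x|^2}$, applying Fubini to the now absolutely convergent iterated integral, performing the substitution, and passing to the limit $\e\downarrow 0$ using dominated convergence on the $t$-side (the dominant $(4\pi t)^{-d/2}e^{-|x|^2/(4t)}$ is integrable in $t$ on $(0,\infty)$ for fixed $x\neq 0$ precisely when $d\geq 3$). Equivalently, one may pair both sides against an arbitrary Schwartz test function of $x$ and verify equality as tempered distributions via Parseval. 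Past this technical point the argument is a direct calculation.
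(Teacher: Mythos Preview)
Your proof is correct. The paper itself does not give a proof of this lemma: it is stated with a citation to \cite[Theorem~2.4.6]{G} and the authors explicitly write ``We omit the proof of this lemma.'' So there is no in-paper argument to compare against.

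For what it is worth, your subordination/heat-kernel route --- writing $|\x|^{-2}$ as $\int_0^\infty e^{-4\pi^2 t|\x|^2}\,dt$, computing the Gaussian Fourier integral, and reducing to a Gamma integral --- is essentially the standard textbook derivation (and is close in spirit to how Grafakos establishes the Fourier transform of $|\x|^{-s}$ in the cited reference). Your remarks on justifying the interchange via Gaussian regularization or by pairing against Schwartz functions are the right way to make the computation rigorous, and your observation that the $u$-integral converges exactly when $d\geq 3$ correctly identifies where the dimensional restriction enters.
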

We omit the proof of this lemma.

\begin{lem}\label{diffes}
There exists $C>0$ such that 
\begin{align*}
||x-y|^{-d+2}-|x|^{-d+2}|\leq C|x|^{-d+1}|y|
\end{align*}
for $x,y\in \re^d$ with $|x|/2>|y|$.
\end{lem}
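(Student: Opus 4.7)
The plan is to apply the mean value theorem to the smooth function $f(z)=|z|^{-d+2}$ on the region $|z|>|x|/2$, where we have good control over $|\nabla f|$.

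First I would compute $\nabla f(z)=(-d+2)|z|^{-d}z$, so that $|\nabla f(z)|=(d-2)|z|^{-d+1}$. Parametrizing the segment from $x$ to $x-y$ by $\gamma(t)=x-ty$, $t\in[0,1]$, the fundamental theorem of calculus gives
\begin{align*}
|x-y|^{-d+2}-|x|^{-d+2}=\int_{0}^{1}\nabla f(x-ty)\cdot(-y)\,dt,
\end{align*}
hence
\begin{align*}
\bigl||x-y|^{-d+2}-|x|^{-d+2}\bigr|\leq (d-2)|y|\int_{0}^{1}|x-ty|^{-d+1}\,dt.
\end{align*}

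Next I would use the hypothesis $|y|<|x|/2$ to bound the integrand uniformly. Indeed, for every $t\in[0,1]$ the triangle inequality yields $|x-ty|\geq |x|-t|y|\geq |x|-|y|>|x|/2$, and therefore $|x-ty|^{-d+1}\leq 2^{d-1}|x|^{-d+1}$. Inserting this in the previous display produces the claimed bound with $C=(d-2)2^{d-1}$. (The case $d=2$ is trivial since both sides vanish.)

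There is no real obstacle here; this is just a quantitative first-order Taylor estimate. The only thing to watch is that the segment joining $x$ and $x-y$ stays away from the origin, which is exactly what the hypothesis $|y|<|x|/2$ guarantees.
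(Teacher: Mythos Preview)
Your proof is correct. The mean value / fundamental theorem of calculus argument is clean, the gradient bound $|\nabla f(z)|=(d-2)|z|^{-d+1}$ is right, and the observation that the whole segment $\{x-ty:t\in[0,1]\}$ stays in $\{|z|>|x|/2\}$ under the hypothesis $|y|<|x|/2$ is exactly what is needed.

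The paper takes a different, more elementary route: it splits into the two cases $|x-y|\le|x|$ and $|x-y|>|x|$. In the first case it uses the algebraic inequality
\[
|x-y|^{-d+2}\le|x|^{-d+2}\Bigl(1-\tfrac{|y|}{|x|}\Bigr)^{-(d-2)}
\]
and then expands the factor $(1-|y|/|x|)^{-(d-2)}$ via a geometric series (using $|y|/|x|<1/2$) to extract $|x|^{-d+2}+C|x|^{-d+1}|y|$; the second case is reduced to the first by the substitution $x'=x-y$, $y'=-y$ and the monotonicity $|x-y|^{-d+1}<|x|^{-d+1}$. Your argument avoids the case split entirely and delivers an explicit constant $C=(d-2)2^{d-1}$; the paper's approach, by contrast, needs no calculus and is purely algebraic. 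Either is perfectly adequate for this auxiliary estimate.
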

\begin{proof}
First, we assume $|x-y|\leq |x|$.
For $|x|/2>|y|$, we have
\begin{align}
\frac{1}{|x-y|^{d-2}}=&\frac{1}{|x|^{d-2}|\frac{x}{|x|}-\frac{y}{|x|}|^{d-2}}\leq \frac{1}{|x|^{d-2}(1-\frac{|y|}{|x|})^{d-2}}\label{Ele}\\
=&|x|^{-d+2}(1+\frac{|y|}{|x|}\sum_{j=0}^{\infty}\frac{|y|^j}{|x|^j})^{d-2}\leq |x|^{-d+2}(1+2\frac{|y|}{|x|})^{d-2}\nonumber\\
\leq&|x|^{-d+2}+C|x|^{-d+1}|y|.
\end{align}

Next, we assume $|x-y|>|x|$. Setting $x'=x-y$ and $y'=-y$, we have $|x'-y'|\leq |x'|$ and $|x'|/2>|y'|$. Applying $(\ref{Ele})$ with $x=x'$ and $y'=y$, we obtain
\begin{align*}
|x|^{-d+2}-|x-y|^{-d+2}\leq C|x|^{-d+1}|y|.
\end{align*}
Thus we obtain $||x-y|^{-d+2}-|x|^{-d+2} |\leq C|y||x|^{-d+1}$. 
\end{proof}

\begin{proof}[Proof of Theorem \ref{mainprop}]
Note that $|Vu(x)|\leq C\jap{x}^{-d-\e}$ and
\begin{align*}
u(x)=-\sum_{y\in \ze^d}G(x,y)Vu(y),\,\, G(x,y)=\int_{\T^d}e^{2\pi i(x-y)\cdot \x}\frac{1}{h_0(\x)}d\x.
\end{align*}
For small $r>0$, take $\chi\in C^{\infty}(\T^d,[0,1])$ such that $\chi=1$ on $|\x|\leq r$ and $\chi=0$ outside $|\x|\leq 2r$. Then
\begin{align*}
u(x)=-\sum_{y\in \ze^d}G_1(x,y)Vu(y)+O(\jap{x}^{-\infty}),\,\, G_1(x,y)=\int_{\T^d}e^{2\pi i(x-y)\cdot \x}\frac{\chi(\x)}{h_0(\x)}d\x.
\end{align*}
We use the following lemmas.
\begin{lem}\label{lem1}
We have
\begin{align*}
u(x)=-\sum_{y\in \ze^d}G_2(x,y)Vu(y)+O(\jap{x}^{-d}),
\end{align*}
where $G_2(x,y)=\int_{\re^d}e^{2\pi i(x-y)\cdot \x}\frac{\chi(\x)}{4\pi^2 |\x|^2}d\x$.
\end{lem}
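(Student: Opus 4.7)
The plan is to reduce the claim to a pointwise decay estimate on the kernel difference
\begin{align*}
K(z) := G_1(x,y) - G_2(x,y) = \int_{\re^d} e^{2\pi i z\cdot \x} a(\x)\, d\x,\quad z = x-y,
\end{align*}
where $a(\x) := \chi(\x)\bigl(h_0(\x)^{-1} - (4\pi^2|\x|^2)^{-1}\bigr)$, and then to combine this decay with Lemma \ref{intcal}. Since $\supp \chi \subset \{|\x| \le 2r\}$ sits inside the interior of the fundamental domain $[-\tfrac{1}{2},\tfrac{1}{2}]^d$, the torus integral defining $G_1$ coincides with the $\re^d$ integral after extending $\chi/h_0$ by zero, so this representation of $K$ is legitimate.

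To analyze $a$, I would use the Taylor expansion $h_0(\x) = 4\pi^2|\x|^2 + \phi(\x)$ with $\phi$ smooth and $\phi(\x) = O(|\x|^4)$ near the origin. This gives
\begin{align*}
a(\x) = -\frac{\chi(\x)\,\phi(\x)}{h_0(\x)\cdot 4\pi^2|\x|^2},
\end{align*}
which is bounded and compactly supported, smooth on $\re^d \setminus \{0\}$, and, provided $r$ is small enough, obeys the symbol-type estimate $|\pa_\x^\a a(\x)| \le C_\a |\x|^{-|\a|}$ for every multi-index $\a$. In particular $a \in L^1(\re^d)$, so $K$ is bounded.

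The main step is to show $|K(z)| \le C\jap{z}^{-d}$. For $|z|\le 1$ this is immediate from $K \in L^\infty$. For $|z|\ge 1$ I would use a Littlewood--Paley decomposition near the origin: choose a dyadic partition $1 = \sum_{j\ge 0}\psi_j(\x)$ on $\supp \chi$ with $\psi_j$ supported in $\{|\x|\sim 2^{-j}\}$, and combine the trivial estimate $|\widehat{a\psi_j}(z)| \le \|a\psi_j\|_{L^1}\le C 2^{-jd}$ with the integration-by-parts bound $|\widehat{a\psi_j}(z)| \le C_N |z|^{-N} 2^{j(N-d)}$, valid for any $N$ thanks to the symbol bound on $a$ and the scaling bounds $\|\pa^\a\psi_j\|_\infty \le C 2^{j|\a|}$. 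Splitting the sum at $2^j\sim |z|$ and choosing $N>d$ yields $|K(z)| \le C |z|^{-d}$. This is the principal technical obstacle, stemming from the fact that $a$ fails to be continuous at the origin.

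Finally, by Proposition \ref{upp} and the hypothesis $|V(x)| \le C\jap{x}^{-2-\e}$, we have $|Vu(y)| \le C\jap{y}^{-d-\e}$. Applying Lemma \ref{intcal}(iv) with $k = d$ and $l = d+\e > d$ gives $\sum_y |K(x-y)|\,|Vu(y)| \le C\jap{x}^{-d}$, which combines with the $O(\jap{x}^{-\infty})$ remainder already absorbed into the $G_1$ representation to produce the claimed expansion.
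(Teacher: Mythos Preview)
Your argument is correct and matches the paper's approach: both isolate the remainder symbol $a(\x)=\chi(\x)R(\x)$ with $|\pa_\x^\a a|\le C_\a|\x|^{-|\a|}$, show its inverse Fourier transform decays like $\jap{z}^{-d}$, and then invoke Lemma~\ref{intcal}(iv) with $k=d$, $l=d+\e$. The only cosmetic difference is that the paper cites Proposition~\ref{mlem} (with $k=0$) for the kernel bound, whose proof uses a single scale cutoff at $\d=|x|^{-1}$ rather than a full dyadic decomposition; your Littlewood--Paley argument is an equivalent way to reach the same estimate.
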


\begin{proof}

If $|\x|\leq 2r$ for small $r>0$, then we expand $h_0(\x)^{-1}=1/(4\pi|\x|^2)+R(\x)$, where $|\pa_{\x}^{\a}R(\x)|\leq C_{\a}|\x|^{-|\a|}$. Thus we have
\begin{align*}
u(x)=-\sum_{y\in \ze^d}G_2(x,y)Vu(y)-\sum_{y\in \ze^d}G_3(x,y)Vu(y)+O(\jap{x}^{-\infty}),
\end{align*}
where 
\begin{align*}
G_3(x,y)=\int_{\re^3}e^{2\pi i(x-y)\cdot \x}\chi(\x)R(\x)d\x.
\end{align*}
By Lemmas \ref{mlem} and \ref{intcal} $(iv)$ with $k=d$ and $l=d+\e$, the second term is $O(|x|^{-d})$. This completes the proof.
\end{proof}

\begin{lem}\label{asym}
For $|x|\geq 1$, we have
\begin{align*}
\int_{\re^d}e^{2\pi ix\cdot \x}\frac{\chi(\x)}{4\pi^2|\x|^2}d\x=c_d|x|^{-d+2}+O(\jap{x}^{-d+1}).
\end{align*}
\end{lem}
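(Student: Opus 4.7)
The plan is to subtract the cutoff: write $\chi = 1 - (1-\chi)$, so that Lemma \ref{foucon} reduces the claim to showing
\begin{align*}
I(x) := \int_{\re^d} e^{2\pi i x \cdot \x}\, \frac{1-\chi(\x)}{4\pi^2|\x|^2}\, d\x = O(|x|^{-d+1})
\end{align*}
for $|x| \geq 1$. Here the identity of Lemma \ref{foucon} is understood distributionally, but the split is harmless for pointwise evaluation because $\chi/(4\pi^2|\x|^2) \in L^1(\re^d)$ for $d \geq 3$ (the singularity at the origin is integrable), so its inverse Fourier transform is a continuous function of $x$, and $I(x)$ will likewise be continuous in $x \neq 0$.

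For $I(x)$, set $\psi(\x) := (1-\chi(\x))/(4\pi^2|\x|^2)$. Since $1-\chi$ vanishes on $|\x| \leq r$, the function $\psi$ lies in $C^\infty(\re^d)$ with the symbol-type bound $|\pa_\x^\a \psi(\x)| \leq C_\a \jap{\x}^{-2-|\a|}$. The plan is then to apply the standard non-stationary phase identity $e^{2\pi i x \cdot \x} = (2\pi i |x|^2)^{-1}(x \cdot \nabla_\x) e^{2\pi i x \cdot \x}$ and integrate by parts $N := d-1$ times. The boundary contributions at infinity vanish thanks to the decay of $\psi$ and its derivatives, yielding
\begin{align*}
I(x) = \frac{(-1)^N}{(2\pi i |x|^2)^N} \int_{\re^d} e^{2\pi i x \cdot \x}\, (x \cdot \nabla_\x)^N \psi(\x)\, d\x.
\end{align*}

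Finally, $|(x \cdot \nabla)^N \psi(\x)| \leq C |x|^N \jap{\x}^{-2-N}$, and since $2 + N = d + 1 > d$ the last integral is absolutely and uniformly convergent, bounded by $C |x|^N$. This gives $|I(x)| \leq C |x|^{-N} = C |x|^{-d+1}$ as required. The main point to watch is really just the distributional interpretation of Lemma \ref{foucon} together with absolute convergence of the tail integral at infinity after $N = d-1$ integrations by parts; once these are in place the estimate is entirely routine and no genuine obstacle arises.
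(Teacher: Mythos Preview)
Your argument is correct and takes a genuinely different route from the paper. The paper works in physical space: via Lemma~\ref{foucon} and Plancherel it rewrites the integral as the convolution $c_d\int_{\re^d}|x-y|^{-d+2}\,\mathcal{F}^{-1}\chi(y)\,dy$, splits according to $|y|<|x|/2$ versus $|y|\geq|x|/2$, and combines the rapid decay of $\mathcal{F}^{-1}\chi$ with the pointwise bound of Lemma~\ref{diffes} to isolate $c_d|x|^{-d+2}$. You instead stay in frequency space: after writing $\chi=1-(1-\chi)$, the main term drops out of Lemma~\ref{foucon} directly, and the remainder $\psi=(1-\chi)/(4\pi^2|\x|^2)$ is a smooth symbol of order $-2$ whose Fourier transform you bound by $d-1$ integrations by parts. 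The paper's version is more self-contained and recycles Lemma~\ref{diffes}, which is needed anyway for Lemma~\ref{lem4}; yours is the textbook non-stationary-phase estimate and avoids the convolution detour entirely.

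One caveat worth tightening: for $d=3,4$ the integral $I(x)$ is not absolutely convergent, and even for $d\geq 5$ the boundary terms in the first $d-2$ integrations by parts on balls $\{|\x|\leq R\}$ have size $\sim R^{d-3-k}$ and do \emph{not} tend to zero, so ``boundary contributions at infinity vanish thanks to the decay of $\psi$'' is not literally accurate. The fix is routine---either regularize with a smooth cutoff $\rho(\x/R)$ and check that the terms where a derivative hits $\rho(\cdot/R)$ are $O(R^{-1})$, or simply invoke the distributional identity $(2\pi ix)^{\a}\mathcal{F}^{-1}\psi=\mathcal{F}^{-1}(\pa_{\x}^{\a}\psi)$ for $|\a|=d-1$ and note that $\pa^{\a}\psi\in L^1(\re^d)$ so the right side is bounded---but it should be stated rather than glossed.
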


\begin{proof}

By Lemma \ref{foucon} and the Plancherel theorem, we notice that 
\begin{align*}
\int_{\re^d}e^{2\pi ix\cdot \x}\frac{\chi(\x)}{4\pi^2|\x|^2}d\x=c_d\int_{\re^d}\frac{1}{|x-y|^{d-2}}\mathcal{F}^{-1}\chi(y)dy.
\end{align*}
Since $\mathcal{F}^{-1}\chi$ is rapidly decreasing, we have

\begin{align*}
\int_{|x|/2<|y|}\frac{1}{|x-y|^{d-2}}\mathcal{F}^{-1}\chi(y)dy)=O(\jap{x}^{-\infty}).
\end{align*}
Thus it suffices to prove that
\begin{align}\label{ma}
c_d\int_{|x|/2>|y|}\frac{1}{|x-y|^{d-2}}\mathcal{F}^{-1}\chi(y)dy=c_d|x|^{-d+2}+O(\jap{x}^{-d+1}).
\end{align}
By Lemma \ref{diffes}, we have
\begin{align*}
c_d\int_{|x|/2>|y|}\frac{1}{|x-y|^{d-2}}\mathcal{F}^{-1}\chi(y)dy=&c_d|x|^{-d+2}\int_{|x|/2>|y|}\mathcal{F}^{-1}\chi(y)dy+O(\jap{x}^{-d+1})\\
=&c_d|x|^{-d+2}\int_{\re^d}\mathcal{F}^{-1}\chi(y)dy+O(\jap{x}^{-d+1})\\
=&c_d\chi(0)|x|^{-d+2}+O(\jap{x}^{-d+1}),
\end{align*}
where we use the fact that $\mathcal{F}^{-1}\chi$ is rapidly decreasing in the second line.
Using $\int_{\re^d}\mathcal{F}^{-1}\chi(y)dy=\chi(0)=1$, we obtain $(\ref{ma})$. 
\end{proof}

\begin{lem}\label{lem3}
\begin{align*}
\sum_{|y|\geq  |x|/2}G_2(x,y)Vu(y)=O(\jap{x}^{-d+2-\e}).
\end{align*}
\end{lem}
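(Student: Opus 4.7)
The plan is to combine Lemma~\ref{asym} with the pointwise decay from Proposition~\ref{upp} and then split the sum geometrically. First I would record the uniform pointwise bound
\[
|G_2(x,y)|\leq C\jap{x-y}^{-d+2},
\]
which for $|x-y|\geq 1$ comes straight from Lemma~\ref{asym} (both the main term $c_d|x-y|^{-d+2}$ and the $O(\jap{x-y}^{-d+1})$ remainder are absorbed by $\jap{x-y}^{-d+2}$), and for $|x-y|\leq 1$ holds because the integral $\int \chi(\x)/(4\pi^2|\x|^2)\,d\x$ is absolutely convergent for $d\geq 3$. Combining $|u(y)|\leq C\jap{y}^{-d+2}$ from Proposition~\ref{upp} with the hypothesis $|V(y)|\leq C\jap{y}^{-2-\e}$ yields $|Vu(y)|\leq C\jap{y}^{-d-\e}$.

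Next I would split the region $\{|y|\geq |x|/2\}$ into
\[
A_1=\{|y|\geq |x|/2,\ |x-y|\geq |x|/4\},\qquad A_2=\{|y|\geq |x|/2,\ |x-y|<|x|/4\}.
\]
On $A_1$ one has $\jap{x-y}^{-d+2}\leq C\jap{x}^{-d+2}$ and the remaining factor is summable:
\[
\sum_{|y|\geq |x|/2}\jap{y}^{-d-\e}\leq C\jap{x}^{-\e},
\]
so this contribution is $O(\jap{x}^{-d+2-\e})$. On $A_2$ the two constraints force $|y|\sim |x|$, hence $\jap{y}^{-d-\e}\leq C\jap{x}^{-d-\e}$, while the change of variables $z=x-y$ gives
\[
\sum_{|z|<|x|/4}\jap{z}^{-d+2}=O(\jap{x}^2)
\]
(a standard dyadic-annulus count shows $\sum_{|z|\leq R}\jap{z}^{-(d-2)}\sim R^2$ for every $d\geq 3$). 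Multiplying gives again $O(\jap{x}^{-d+2-\e})$.

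The only conceptual obstacle is the following: one might hope to apply Lemma~\ref{intcal} directly with $k=d-2$, $l=d+\e$, but case (iii) of that lemma only delivers $O(\jap{x}^{-d+2})$, which misses the crucial $\jap{x}^{-\e}$ improvement. The point of the geometric split above is precisely that the extra $\jap{x}^{-\e}$ is extracted from the restriction $|y|\geq |x|/2$ (this is where the hypothesis of the lemma is really used). Aside from that, the argument is routine.
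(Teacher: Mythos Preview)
Your argument is correct and matches the paper's approach; the paper records the same two ingredients ($|G_2(x,y)|\leq C\jap{x-y}^{-d+2}$ and $|Vu(y)|\leq C\jap{y}^{-d-\e}$) and then simply says ``an easy calculation using the condition $\{|y|\geq |x|/2\}$,'' which your $A_1$/$A_2$ split makes explicit. The only cosmetic difference is that the paper obtains the $G_2$ bound directly from Proposition~\ref{mlem} (applied to $m(\x)=\chi(\x)/(4\pi^2|\x|^2)$ with $k=2$), rather than via Lemma~\ref{asym}; this is a bit more streamlined but otherwise equivalent.
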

\begin{proof}
By Lemma \ref{mlem}, we have $G_2(x,y)=O(\jap{x-y}^{-d+2})$. Since $V(x)=O(\jap{x}^{-2-\e})$ holds, by Proposition \ref{upp}, we have $Vu=O(\jap{x}^{-d-\e})$. Now the lemma is proved by an easy calculation using the condition $\{|y|\geq |x|/ 2\}$.
\end{proof}

\begin{lem}\label{lem4}
\begin{align*}
\sum_{|y|<1/2|x|}G_2(x,y)Vu(y)=c_d|x|^{-d+2}\sum_{y\in \ze^d}Vu(y)+O(\jap{x}^{-d+1-\e}).
\end{align*}
\end{lem}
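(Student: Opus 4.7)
The plan is to replace $G_2(x,y)$ by its asymptotic expansion from Lemma \ref{asym}, then replace $|x-y|^{-d+2}$ by $|x|^{-d+2}$ via Lemma \ref{diffes}, and finally to pass from the restricted sum to the full sum over $\ze^d$ using the $l^1$-type decay of $Vu$. The crucial preliminary observation is that Proposition \ref{upp} combined with $|V(y)|\leq C\jap{y}^{-2-\e}$ yields the sharp pointwise bound $|Vu(y)|\leq C\jap{y}^{-d-\e}$; in particular $Vu\in l^1(\ze^d)$ and the tail estimate
\[
\sum_{|y|\geq R}|Vu(y)|\leq CR^{-\e}
\]
holds for $R\geq 1$.

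On the region $|y|<|x|/2$ we have $|x-y|\geq |x|/2$, so Lemma \ref{asym} applied to $x-y$ gives
\[
G_2(x,y)=c_d|x-y|^{-d+2}+O(|x|^{-d+1}),
\]
and Lemma \ref{diffes} gives
\[
|x-y|^{-d+2}=|x|^{-d+2}+O(|x|^{-d+1}|y|).
\]
Substituting these in and separating main and error pieces, I would write
\[
\sum_{|y|<|x|/2}G_2(x,y)Vu(y)=c_d|x|^{-d+2}\sum_{|y|<|x|/2}Vu(y)+E_1(x)+E_2(x),
\]
where $E_1(x)$ comes from the Lemma \ref{asym} remainder and $E_2(x)$ from the Lemma \ref{diffes} remainder.

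I would then estimate the two error terms using the pointwise bound on $Vu$: since $Vu\in l^1(\ze^d)$, we have $|E_1(x)|\leq C|x|^{-d+1}\|Vu\|_{l^1}$, while the bound $|y||Vu(y)|\leq C\jap{y}^{-d+1-\e}$ together with Lemma \ref{intcal} controls $|E_2(x)|\leq C|x|^{-d+1}\sum_{|y|<|x|/2}\jap{y}^{-d+1-\e}$. Finally, to pass from $\sum_{|y|<|x|/2}Vu(y)$ to $\sum_{y\in\ze^d}Vu(y)$ I would use
\[
\sum_{|y|<|x|/2}Vu(y)=\sum_{y\in\ze^d}Vu(y)-\sum_{|y|\geq|x|/2}Vu(y),
\]
the tail being $O(|x|^{-\e})$ by the above estimate, so the boundary contribution is of order $|x|^{-d+2-\e}$.

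The main obstacle is the bookkeeping of decay exponents: one must verify that each of the three error contributions---the remainder in Lemma \ref{asym}, the weighted sum arising from Lemma \ref{diffes}, and the boundary term from completing the sum---is no larger than the stated target $\jap{x}^{-d+1-\e}$. This is precisely where the sharpness of the pointwise bound from Proposition \ref{upp} and the restriction $\e\in(0,1]$ play a role; beyond this routine bookkeeping, no new ideas are required.
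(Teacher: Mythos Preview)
Your approach is essentially identical to the paper's: apply Lemma~\ref{asym} to replace $G_2(x,y)$ by $c_d|x-y|^{-d+2}$, then Lemma~\ref{diffes} to replace $|x-y|^{-d+2}$ by $|x|^{-d+2}$, and finally complete the sum using the tail decay of $Vu$. The paper keeps the Lemma~\ref{asym} remainder as $O(\jap{x-y}^{-d+1})$ and invokes Lemma~\ref{intcal}$(iii)$ with $k=d-1$, $l=d+\e$, whereas you shortcut this via $|x-y|\geq |x|/2$; both are fine.

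One point of bookkeeping you flag but do not carry out deserves attention. Your $E_1$ is of order $|x|^{-d+1}$, and your $E_2$ is of order $|x|^{-d+1}\sum_{|y|<|x|/2}\jap{y}^{-d+1-\e}\sim |x|^{-d+2-\e}$ (for $\e<1$). Neither of these is $O(\jap{x}^{-d+1-\e})$ as the lemma literally states; the paper's own proof likewise ends with $O(\jap{x}^{-d+2-\e})$, which is the error actually needed for Theorem~\ref{mainprop}. So the exponent $-d+1-\e$ in the lemma statement appears to be a misprint for $-d+2-\e$, and once you read it that way your three error contributions do fit (using $\e\leq 1$ for $E_1$). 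Your proposal is correct for the intended statement.
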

\begin{proof}
By Lemma \ref{asym} and $Vu=O(\jap{x}^{-d-\e})$, we have
\begin{align*}
\sum_{|y|<1/2|x|}G_2(x,y)Vu(y)=&c_d\sum_{|y|<1/2|x|}|x-y|^{-d+2}Vu(y)\\
&+\sum_{|y|<1/2|x|}O(\jap{x-y}^{-d+1}\jap{y}^{-d-\e})\\
=&c_d\sum_{|y|<1/2|x|}|x-y|^{-d+2}Vu(y)+O(\jap{x}^{-d+1}),
\end{align*}
where we use Lemma \ref{intcal} with $k=-d+1$ and $l=-d-\e$ in the second line. By Lemma \ref{diffes}, we have
\begin{align*}
c_d\sum_{|y|<1/2|x|}|x-y|^{-d+2}Vu(y)=&c_d|x|^{-d+2}\sum_{|y|<1/2|x|}Vu(y)\\
&+\sum_{|y|<1/2|x|}O(\jap{x}^{-d+1}\jap{y}^{-d+1-\e})\\
=&c_d|x|^{-d+2}\sum_{|y|<1/2|x|}Vu(y)+O(\jap{x}^{-d+2-\e})\\
=&c_d|x|^{-d+2}\sum_{y\in \ze^d}Vu(y)+c_d|x|^{-d+2}\sum_{|y|\geq 1/2|x|}Vu(y)\\
&+O(\jap{x}^{-d+2-\e})\\
=&c_d|x|^{-d+2}\sum_{y\in \ze^d}Vu(y)+O(\jap{x}^{-d+2-\e}),
\end{align*}
where we use $Vu=O(\jap{x}^{-d-\e})$. This completes the proof.
\end{proof}

We return to the proof of Theorem \ref{mainprop}. 
By virtue of Lemmas \ref{lem1} and \ref{lem3}, we write
\begin{align*}
u(x)=-\sum_{|y|<1/2|x|}G_2(x,y)Vu(y)+O(|x|^{-d+2-\e}).
\end{align*}
Note that $|x-y|$ is large if $|x|$ is large and $|y|<1/2|x|$. Using Lemma \ref{lem4}, we complete the proof of Theorem \ref{mainprop}.

\end{proof}

\section{Absence of embedded resonances, Proof of Theorem \ref{hypthm}}

\subsection{Preliminary lemmas}

Let $d\geq 3$ and $\l\in \{4k\}_{k=1}^{d-1}$. Set $M_{\l}=\{\x\in \T^d\mid h_0(\x)=\l\}$ and 
\begin{align*}
\Sigma_{\l}=&\{\x\in \T^d\mid \nabla h_0(\x)=0\}=\{\x\in \T^d\mid \sin 2\pi \x_j=0,\,\, \text{for all}\,\, j=1,...,d\}\\
=&\{\x\in \T^d\mid \x_j\in \{0,\frac{1}{2}\},\,\, \text{for all}\,\, j=1,...,d\}.
\end{align*}
We note $M_{\l}\setminus \Sigma_{\l}$ is an embedded submanifold of $\T^d$ with codimension $1$ and $M_{\l}$ is a Lipschitz submanifold in the sense that $M_{\l}$ has a graph representation by a Lipschitz function.
We denote the induced surface measure of $M_{\l}$ by $d\s(\x)$. Set
\begin{align*}
d\m(\x)=\frac{1}{|\nabla h_0(\x)|}d\s(\x)
\end{align*}
We note that $|\nabla h_0(\x)|^{-1}\sim |\x|^{-1}$ near $\Sigma_{\l}$ implies that $d\m$ is singular measure on $M_{\l}$ for $\l\in \Gamma$, though $|\nabla h_0(\x)|^{-1}$ is harmless on $M_{\l}$ with a regular value $\l$.
Moreover, we denote $R_0(\l\pm i0)=(H_0-\l\mp i0)^{-1}\in B(l^{2,1}(\ze^d), l^{2,-1}(\ze^d))$. First, we show $\Sigma_{\l}$ is measure zero with respect to $d\s$ and $d\m$, which essentially follows from the fact that $d\s$ and $d\m$ are finite sums of the absolutely continuous measures with respect to $d-1$-dimensional Lebesgue measure.

\begin{lem}
$d\s(\Sigma_{\l})=0$ and $d\m(\Sigma_{\l})=0$.
\end{lem}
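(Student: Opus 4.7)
The key observation is that $\Sigma_\lambda$ is a finite set, consisting of the $2^d$ points with each coordinate in $\{0,1/2\}$. Since $d\sigma$ and $d\mu$ are Borel measures on $M_\lambda$, it suffices to prove $d\sigma(\{\eta\}) = 0$ for each $\eta \in \Sigma_\lambda \cap M_\lambda$; the statement for $d\mu$ will then follow almost for free.

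To establish $d\sigma(\{\eta\}) = 0$, I plan to exploit the local Lipschitz graph representation of $M_\lambda$ noted in the introduction, combined with the formula $(\ref{surfrep})$. Using the Morse form of $h_0$ near $\eta$, a neighborhood of $\eta$ in $M_\lambda$ can be decomposed into finitely many patches, each of which is the graph $\{\xi_{i_0} = g(\xi')\}$ of a Lipschitz function $g$ over an open subset $V$ of a coordinate hyperplane. On each patch, $(\ref{surfrep})$ shows that $d\sigma = \sqrt{1+|\nabla g|^2}\,d\xi'$, which is absolutely continuous with respect to $(d-1)$-dimensional Lebesgue measure on $V$ with bounded density. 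Since the image of $\{\eta\}$ under the projection onto $V$ is a single point, it has vanishing $(d-1)$-dimensional Lebesgue measure, and therefore $d\sigma(\{\eta\}) = 0$. Summing over the finitely many points of $\Sigma_\lambda$ yields $d\sigma(\Sigma_\lambda) = 0$.

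For $d\mu = |\nabla h_0|^{-1}\,d\sigma$, I would interpret this as the Borel measure on $M_\lambda$ defined by $d\mu(A) = \int_A |\nabla h_0|^{-1}\,d\sigma$ for Borel $A \subset M_\lambda$, where the integrand is assigned any value on $\Sigma_\lambda$. Since $d\sigma(\Sigma_\lambda) = 0$ by the previous step, the integral over $\Sigma_\lambda$ vanishes regardless of the (possibly infinite) density on $\Sigma_\lambda$, so $d\mu(\Sigma_\lambda) = 0$ automatically.

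The only real obstacle is producing the decomposition into Lipschitz graph patches near a hyperbolic critical point $\eta$, where $M_\lambda$ is a cone. After the Morse change of coordinates, the relevant model is the cone $\{|y'|^2 = |y''|^2\}$ in $\re^d$ with $y'\in\re^k$ and $y''\in \re^{d-k}$. This cone can be covered by finitely many open sectors on each of which one coordinate is bounded below by a positive constant times $|y|$; on such a sector, that coordinate is expressible as a Lipschitz function of the remaining ones, giving the required graph representation and completing the argument.
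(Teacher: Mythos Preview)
Your proposal is correct and matches the paper's argument: both localize near each $\eta\in\Sigma_\lambda$, cover a neighborhood of $\eta$ in $M_\lambda$ by finitely many conical sectors on which one coordinate dominates, write $M_\lambda$ as a Lipschitz graph on each sector, and invoke $(\ref{surfrep})$ to see that $d\sigma$ is absolutely continuous with respect to $(d-1)$-dimensional Lebesgue measure there. The only difference is in the $d\mu$ step: the paper first proves $1/|\nabla h_0|\in L^1(M_\lambda,d\sigma)$ (this is where $d\geq 3$ enters) and then uses $d\mu\ll d\sigma$, whereas you observe more directly that $d\sigma(\Sigma_\lambda)=0$ already forces $\int_{\Sigma_\lambda}|\nabla h_0|^{-1}\,d\sigma=0$ regardless of integrability---both are valid for the lemma as stated, though the paper's integrability check is implicitly needed later so that $d\mu$ is a bona fide Radon measure in the Stone formula.
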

\begin{proof}
First, we note that the measure $\m$ is absolutely continuous with respect to $d\s$. To see this, it suffices to show that $1/|\nabla h_0(\x)|$ is integrable with respect to the measure $\s$. We note that for $\y\in \Sigma_{\l}$ and $\x=(\x',\x_d)\in M_{\l}$, we have $|\nabla h_0(\x)|\sim 2\pi|\x-\y|\sim C|\x'-\y'|$ near $\x=\y$ and $\pm (\x_d-\y_d)\geq  |\x'-\y'|/2d$. The integrability of $1/|\x'-\y'|$ over $\{\x'\in \re^{d-1}\mid |\x'-\y'|:\text{small}\}$ which follows from the assumption $d\geq 3$, implies $1/|\nabla h_0(\x)|$ is integrable over $\{\pm (\x_d-\y_d)\geq |\x'-\y'|\}$. By using a partition of unity, the integrability of $1/|\nabla h_0(\x)|$ over $M_{\l}$ follows.

Thus a proof of $d\m(\Sigma_{\l})=0$ reduces to a proof of $d\s(\Sigma_{\l})=0$. Let $\y\in \Sigma_{\l}$. Since $\#\Sigma<\infty$, it suffices to prove that $\{\y\}$ has zero measure with respect to $\chi d\s$, where $\chi\in C^{\infty}(\T^d)$ is any function supported close to $\y$. Set 
\begin{align*}
A_{j,\pm}=\{\x\in \supp \chi\mid \pm (\x_j-\y_j)\geq |\x-\y|/2d\}.
\end{align*}
Then we have
\begin{align*}
 \chi(\x) d \s(\x)=\sum_{j=1...,d,\, a=\pm}\chi_{A_{j,a}}(\x) \chi(\x)d\s(\x)=:\sum_{j=1...,d,\, a=\pm}d\s_{j,a}(\x),
\end{align*}
where $\chi_{A}$ is the characteristic function of $A\subset \T^d$. Thus it suffices to prove that $\{\y\}$ is zero measure with respect to $d\s_{j,a}$ for any $j=1,..,d$ and $a=\pm$.

By rotating and reflecting the coordinate, we may assume $j=d$ and $a=+$. If $\supp \chi$ is small enough, we have the following graph representation:
\begin{align*}
M_{\l}\cap \supp\chi\cap \{\pm (\x_d-\y_d)\geq |\x-\y|/2d\}=\{(\x',g(\x'))\}
\end{align*}
where $g$ is a Lipschitz function. On this coordinate, we write
\begin{align*}
d \s_{d,+}(\x)=\chi_{A_{j,a}}(\x)\chi(\x)\sqrt{1+|\nabla_{\x'} g(\x')|^2}d\x'
\end{align*}
by $(\ref{surfrep})$. This implies that $d\s_{d,+}$ is absolutely continuous with respect to the $d-1$-dimensional Lebesgue measure $d\x'$. This completes the proof.

\end{proof}

We recall the standard $L^2$-restriction theorem: For $f\in l^{2,s}(\ze^d)$ with $s>1/2$, then 
\begin{align*}
\hat{f}|_{M_{\l}}\in L_{loc}^2(M_{\l}, d\s).
\end{align*}
For $f\in l^{2,1}(\ze^d)$, we have sharper integrability of $\hat{f}|_{M_{\l}}$ near $\Sigma_{\l}$ with respect to $d\m$.
\begin{lem}
For $f\in l^{2,1}(\ze^d)$, a restriction $\hat{f}|_{M_{\l}}\in L^2_{loc}(M_{\l}, d\s)$ satisfies $\hat{f}|_{M_{\l} }\in L^2(M_{\l}, d\m)$. Moreover, we have
\begin{align}\label{reses0}
\|\hat{f}\|_{L^2(M_{\l}, d\m)}\leq C\|f\|_{l^{2,1}(\ze^d)}.
\end{align}

\end{lem}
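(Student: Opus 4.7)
The strategy is to relate $\|\hat{f}\|_{L^2(M_\l, d\m)}^2$ to the imaginary part of the boundary resolvent $R_0(\l+i0)$ via a spectral-density identity, and then invoke the limiting absorption principle \eqref{LAP3}.

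First I would work on the dense subspace of finitely supported $f \in l^{2,1}(\ze^d)$, for which $\hat{f}$ is a trigonometric polynomial. By functional calculus (since $H_0$ is unitarily equivalent via $\Fd$ to multiplication by $h_0$) together with the coarea formula, for any $\phi \in C_c(\re)$,
\begin{align*}
\langle \phi(H_0)f, f\rangle = \int_{\T^d}\phi(h_0(\x))|\hat{f}(\x)|^2\,d\x = \int_\re \phi(\mu)\,\nu_f(\mu)\,d\mu,
\end{align*}
where $\nu_f(\mu) := \int_{M_\mu}|\hat{f}|^2\,d\m$ is finite for every $\mu$ thanks to the preceding lemma ($d\m(\Sigma_\l) = 0$). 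On the other hand, Stone's formula combined with \eqref{LAP2} gives
\begin{align*}
\langle \phi(H_0)f, f\rangle = \int_\re \phi(\mu)\cdot \tfrac{1}{\pi}\Im\langle R_0(\mu+i0)f, f\rangle\,d\mu.
\end{align*}
Hence the two integrands agree for a.e.\ $\mu$, and a Poisson-kernel / approximate-identity argument starting from
\begin{align*}
\Im\langle R_0(\l+i\e)f, f\rangle = \int_\re \frac{\e}{(\mu-\l)^2+\e^2}\,\nu_f(\mu)\,d\mu
\end{align*}
upgrades this to the pointwise identity $\nu_f(\l) = \tfrac{1}{\pi}\Im\langle R_0(\l+i0)f, f\rangle$ at the threshold value $\l$. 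Combined with \eqref{LAP3}, this immediately yields
\begin{align*}
\|\hat{f}\|_{L^2(M_\l, d\m)}^2 \leq \tfrac{C}{\pi}\|f\|_{l^{2,1}(\ze^d)}^2
\end{align*}
for every finitely supported $f$.

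To extend \eqref{reses0} to general $f \in l^{2,1}(\ze^d)$, I would invoke density: choosing finitely supported $f_n \to f$ in $l^{2,1}$, the just-established bound makes $\{\hat{f}_n|_{M_\l}\}$ Cauchy in $L^2(M_\l, d\m)$, and its limit must agree a.e.\ (with respect to $d\s$, hence $d\m$, since $d\m(\Sigma_\l) = 0$) with the $L^2_{loc}(M_\l, d\s)$-restriction $\hat{f}|_{M_\l}$ furnished by the standard $L^2$-trace theorem on $M_\l \setminus \Sigma_\l$. This identifies $\hat{f}|_{M_\l}$ as an element of $L^2(M_\l, d\m)$ satisfying \eqref{reses0}.

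The main obstacle I anticipate is the pointwise upgrade at $\mu = \l \in \Gamma \setminus\{0,4d\}$: while the averaged identity together with \eqref{LAP3} easily gives $\nu_f(\mu) \leq C\|f\|_{l^{2,1}}^2$ for a.e.\ $\mu$, fixing this at the single hyperbolic threshold $\l$ requires either a continuity property of $\nu_f$ at $\l$ (extracted from the Morse normal form $h_0(\x)-\l \sim 4\pi^2(-\sum_{j\leq k}(\x_{\s(j)}-\y_{\s(j)})^2 + \sum_{j>k}(\x_{\s(j)}-\y_{\s(j)})^2)$ near each $\y \in \Sigma_\l$, whose cone structure makes the level-set integrals continuous as $\mu$ crosses $\l$) or, more robustly, a lower-semicontinuity argument for $\nu_f$ combined with the uniformity of \eqref{LAP3} in $\mu$. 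The latter route seems cleaner since it bypasses any delicate analysis of how $M_\mu$ degenerates to $M_\l$ at the hyperbolic singularities.
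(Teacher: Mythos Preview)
Your approach is genuinely different from the paper's and is conceptually sound, but the gap you flag is real and not actually bypassed by the ``cleaner'' route you propose.

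The paper works directly on $M_\l$. Near each critical point $z\in\Sigma_\l$ it takes a conical partition of unity so that $M_\l$ is a Lipschitz graph $\x_d=g(\x')$ in each cone, observes that $d\m\leq C|\x'-z'|^{-1}\,d\x'$ there, and then applies the Hardy inequality in $\re^{d-1}$ (this is where $d\geq 3$ enters) together with a trace estimate for Lipschitz graphs (Proposition~\ref{restprop}) to bound $\int|\chi\hat f|^2\,d\m$ by $\|\chi\hat f\|_{H^1}^2\leq C\|f\|_{l^{2,1}}^2$. Away from $\Sigma_\l$ the standard $L^2$ restriction theorem suffices. No LAP and no Stone formula are used.

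Your route inverts the logical order: you want Stone at $\l$ first (for smooth $\hat f$), then read off \eqref{reses0} from \eqref{LAP3}, then extend by density. The first step is exactly the point where the work hides. The Poisson-kernel identity
\[
\Im\langle R_0(\l+i\e)f,f\rangle=\int_\re\frac{\e}{(\mu-\l)^2+\e^2}\,\nu_f(\mu)\,d\mu
\]
only yields $\pi\nu_f(\l)$ in the limit if $\nu_f$ is continuous at $\l$; and there is no abstract lower-semicontinuity principle for level-set integrals $\mu\mapsto\int_{M_\mu}G\,d\m$ at critical values --- such integrals can jump either way depending on the Morse index and dimension. So the ``more robust'' second option you suggest does not in fact bypass the Morse analysis: to verify even $\nu_f(\l)\leq\liminf_{\mu\to\l}\nu_f(\mu)$ you would still have to pass to the normal form $p_k(\y)=-|\y'|^2+|\y''|^2$, parametrize the nearby hyperboloids, and check (by dominated convergence using $r_1^{k-1}r_2^{d-k-2}\leq r_1^{d-3}$ near the vertex) that the integrals converge. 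That computation is of the same weight as the paper's Hardy-inequality step.

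In short: your outline can be completed, and once completed it gives a legitimate alternative proof that trades the Hardy/trace argument for the already-established LAP bound; but the ``main obstacle'' you identify is not softer than the paper's direct estimate, and the lower-semicontinuity shortcut is illusory.
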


\begin{proof}
Let $z\in \Sigma_{\l}$ and $\chi\in C^{\infty}(\T^d)$ which has a sufficiently small support near $z$. 
For proving $\hat{f}|_{M_{\l} }\in L^2(M_{\l}, d\m)$, it suffices to show
\begin{align}\label{restpf}
(\chi\hat{f})|_{M_{\l} }\in L^2(M_{\l}, d\m).
\end{align}
Moreover, we take a partition of unity $\{(\tilde{\g}_{j,a})^2\}_{j=1,...d, a=\pm}$ of $\mathbb{S}^{d-1}$ such that 
\begin{align*}
\supp \tilde{\g}_{j,a}\subset \{x\in \mathbb{S}^{d-1}\subset \re^d\mid \pm x_d\geq \frac{|x|}{2d} \}.
\end{align*}
We set $\g_{j,a}(\x)=\tilde{\g}_{j,a}((\x-z)/|\x-z|)$. 

First, for $j=1,...d$ and $a=\pm$, we shall prove
\begin{align}\label{respf1}
\int_{M_{\l}\setminus \Sigma_{\l}} |(\g_{j,a}\chi\hat{f})|_{M_{\l} }(\x)|^2d\m(\x)\leq C\int_{\re^{d-1}} \frac{|(\g_{j,a}\chi\hat{f})(\x', g(\x'))|^2}{|\x'-z'|}d\x'.
\end{align}
We may assume $j=d$ and $a=+$. We define a real-valued function $g$ by
\begin{align*}
\sin \pi g(\x')=\sqrt{(\frac{\l}{4}- \sum_{j=1}^{d-1}\sin^2\pi\x_j  )},\,\, g(\x')>0.
\end{align*}
Then $g$ satisfies
\begin{align*}
h_0(\x', g(\x'))=\l\,\, \text{for}\,\, \x=(\x', g(\x'))\in \supp (\g_{d,+}\chi)\setminus \Sigma_{\l}.
\end{align*}
We note
\begin{align}
&|\pa_{\x_d}h_0(\x', g(\x'))|\sim |g(\x')-z_d|\geq |\x'-z'|,\nonumber\\
&|\pa_{\x'}g(\x')|=|-\frac{(\pa_{\x'}h_0)(\x',g(\x'))}{(\pa_{\x_d}h_0)(\x', g(\x'))}|\sim |\frac{\x'-z'}{|g(\x')-z_d|}|\leq1.\label{respf2}
\end{align}
on $\supp (\g_{d,+}\chi)\setminus \Sigma_{\l}$. These inequalities with $(\ref{surfrep})$  implies
\begin{align*}
\int_{M_{\l}} |(\g_{j,a}\chi\hat{f})|_{M_{\l}\setminus \Sigma_{\l} }(\x)|^2d\m(\x)=&\int_{\re^{d-1}} \frac{|(\g_{j,a}\chi\hat{f})(\x', g(\x'))|^2}{|(\pa_{\x_d}h_0)(\x', g(\x'))|}d\x'\\
\leq&C\int_{\re^{d-1}} \frac{|(\g_{j,a}\chi\hat{f})(\x', g(\x'))|^2}{|\x'-z'|}d\x'.
\end{align*}

Summing $(\ref{respf1})$ over $j=1,...,d$ and $a=\pm$, we obtain
\begin{align}
\int_{M_{\l}} |(\chi\hat{f})|_{M_{\l}\setminus \Sigma_{\l} }(\x)|^2d\m(\x)\leq& C\int_{\re^{d-1}} \frac{|(\chi\hat{f})(\x', g(\x'))|^2}{|\x'-z'|}d\x' \label{reses1}\\
\leq&C\|\jap{D_{\x'}}^{1/2}((\chi\hat{f})(\x',g(\x')) )\|_{L^2(\re^{d-1})}^2\nonumber\\
\leq&C\|\chi\hat{f}\|_{H^1(\re^d)}^2.\nonumber
\end{align}
where we use the Hardy inequality in the second line and use Proposition \ref{restprop} in the third line. We recall that $\supp \chi$ is small enough and we identify the integral over $\T^d$ with the integral over this fundamental domain $[-\frac{1}{2}, \frac{1}{2}]^d$. This implies $\|\chi\hat{f}\|_{H^1(\re^d)}=\|\chi\hat{f}\|_{H^1(\T^d)}$. Since $f\in l^{2,1}(\ze^d)$, we have $\chi\hat{f}\in H^1(\T^d)$. Thus we conclude $(\ref{restpf})$. The estimate $(\ref{reses0})$ follows from $(\ref{reses1})$ by using a partition of unity and the standard $L^2$ restriction theorem.

\end{proof}

\begin{rem}
The assumption $d\geq 3$ is needed for using the Hardy inequality.
\end{rem}

Now we prove the Stone theorem near the hyperbolic threshold.

\begin{lem}
For $f\in l^{2,1}(\ze^d)$, we have
\begin{align}\label{Stone}
\frac{1}{2\pi}\Im (f,R_0(\l\pm i0)f)=\int_{M_{\l}}|\hat{f}(\x)|^2d\m(\x).
\end{align}
\end{lem}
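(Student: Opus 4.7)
The proof strategy is to verify (\ref{Stone}) first on a dense subspace of $l^{2,1}(\ze^d)$ where the Fourier-side computation reduces to the Sokhotski--Plemelj formula, and then extend by density using the uniform boundedness of both sides on $l^{2,1}(\ze^d)$. Both sides are bounded Hermitian quadratic forms on $l^{2,1}(\ze^d)$: the left-hand side by the extension of $R_0(\l\pm i0)$ to $B(l^{2,1}(\ze^d), l^{2,-1}(\ze^d))$ from the preceding lemma, and the right-hand side by the restriction bound (\ref{reses0}). Set $\mathcal{E} := \{f \in l^{2,1}(\ze^d) : \hat f \in C_c^\infty(\T^d \setminus \Sigma_\l)\}$. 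Since $\Sigma_\l$ is finite and $d \geq 3$, points have zero $H^1$-capacity in $\T^d$, so (with the Fourier identification $l^{2,1}(\ze^d) \cong H^1(\T^d)$) $\mathcal{E}$ is dense in $l^{2,1}(\ze^d)$; explicitly, multiplication by logarithmic cutoffs $\chi_n(\x) = \phi(\log|\x-\y|/\log n)$ around each $\y \in \Sigma_\l$ approximates the identity in $H^1$, with Dirichlet energy $\|\nabla \chi_n\|_{L^2}^2 = O((\log n)^{-1})$.

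Fix $f \in \mathcal{E}$. By Plancherel and (\ref{mul}), for $\e > 0$,
\begin{align*}
(f, R_0(\l \pm i\e)f) = \int_{\T^d} \frac{|\hat f(\x)|^2}{h_0(\x) - \l \mp i\e}\,d\x.
\end{align*}
Because $\supp \hat f$ is disjoint from $\Sigma_\l$, $|\nabla h_0|$ is bounded below on $\supp \hat f$ and the level sets $M_t \cap \supp \hat f$ are smooth embedded hypersurfaces depending smoothly on $t$ in a neighborhood of $\l$. Setting $g(t) := \int_{M_t}|\hat f|^2\, d\m_t$ with $d\m_t := d\s_t/|\nabla h_0|$ the analogous measure on $M_t$, the coarea formula gives
\begin{align*}
(f, R_0(\l\pm i\e)f) = \int_\re \frac{g(t)}{t-\l \mp i\e}\,dt,
\end{align*}
with $g\in C_c^\infty(\re)$ near $\l$. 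Sokhotski--Plemelj then yields $\lim_{\e\to 0^+}\Im(f, R_0(\l\pm i\e)f) = \pm\pi g(\l) = \pm\pi \int_{M_\l}|\hat f|^2\, d\m$, while (\ref{LAP2}), applied to the rapidly decaying $f \in \bigcap_s l^{2,s}(\ze^d)$, identifies the same limit with $\Im(f, R_0(\l\pm i0)f)$. This verifies (\ref{Stone}) on $\mathcal{E}$ up to the stated constant and sign convention, and extension by density completes the proof.

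The main obstacle is the density of $\mathcal{E}$ in $l^{2,1}(\ze^d)$: the $H^1$-capacity statement for finite sets in dimension $\geq 2$. Once this is established via the logarithmic cutoffs above, the support condition on $\mathcal{E}$ removes all contact with the singular support of $d\m$ at $\Sigma_\l$, and the remaining ingredients (coarea, Sokhotski--Plemelj, LAP convergence) are routine.
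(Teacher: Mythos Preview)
Your argument is correct and follows the same overall density strategy as the paper, but with a different choice of dense subspace. The paper takes $\hat f\in C^\infty(\T^d)$, for which density in $H^1(\T^d)$ is immediate, and declares the identity a ``simple calculation''; that calculation, however, must confront the singular set $\Sigma_\l$ directly (via the Morse lemma and the integrability of $|\nabla h_0|^{-1}$ on $M_\l$ established just above). You instead take $\hat f\in C_c^\infty(\T^d\setminus\Sigma_\l)$, which makes the coarea/Sokhotski--Plemelj step genuinely routine because the level surfaces are smooth on $\supp\hat f$, and you pay for this with the $H^1$-capacity argument for density. Since $\Sigma_\l$ is finite and $d\geq 3$, the logarithmic cutoff estimate you sketch is standard and correct. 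Both routes close via the same pair of bounds, namely $(\ref{LAP3})$ (your ``preceding lemma'') for the left side and $(\ref{reses0})$ for the right side, exactly as the paper does. In short: same architecture, different placement of the work; your version is more self-contained on the computational side, the paper's is shorter because it leans on the reader to handle the Morse-point contribution.
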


\begin{proof}
For $\hat{f}\in C^{\infty}(\T^d)$, $(\ref{Stone})$ follows from a simple calculation. Let $f\in l^{2,1}(\ze^d)$. Take a sequence $\widehat{f_k}\in C^{\infty}(\T^d)$ such that $\widehat{f_k}\to \hat{f}$ in $H^1(\T^d)$. Then $(\ref{Stone})$ follows from $(\ref{LAP1})$ and $(\ref{reses0})$.
\end{proof}

\begin{lem}\label{vani}
Let $V$ be a real-valued function satisfying $|V|\leq C\jap{x}^{-2}$. If $u\in l^{2,-1}(\ze^d)$ satisfies $u+R_0(\l\pm i0)Vu=0$, then $\widehat{Vu}|_{M_{\l}}=0$. 
\end{lem}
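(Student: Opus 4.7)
The strategy is to exploit the fact that $V$ is real-valued, so the pairing $\langle Vu,u\rangle$ is real, and combine this with the Stone formula \eqref{Stone} to force $\widehat{Vu}|_{M_\l}$ to vanish in $L^2(M_\l,d\mu)$.

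First I would check that all the pairings involved are legitimate. Since $|V(x)|\leq C\jap{x}^{-2}$ and $u\in l^{2,-1}(\ze^d)$, we have $\jap{x}|Vu(x)|\leq C\jap{x}^{-1}|u(x)|$, hence $Vu\in l^{2,1}(\ze^d)$. This has two consequences used below: by the lemma following \eqref{LAP3}, the extended operator $R_0(\l\pm i0)$ sends $Vu$ into $l^{2,-1}(\ze^d)$, so the resonance equation $u=-R_0(\l\pm i0)Vu$ is an identity in $l^{2,-1}$; and by the restriction inequality \eqref{reses0}, $\widehat{Vu}|_{M_\l}$ is a well-defined element of $L^2(M_\l,d\mu)$.

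Next, I would pair both sides of $u+R_0(\l\pm i0)Vu=0$ with $Vu$ using the $l^{2,1}$--$l^{2,-1}$ duality (well-defined by the previous step):
\[
\langle Vu,u\rangle+\langle Vu,R_0(\l\pm i0)Vu\rangle=0.
\]
Since $V$ is real-valued, the first term equals $\sum_{x\in \ze^d}V(x)|u(x)|^2\in\re$, so taking imaginary parts gives $\Im\,\langle Vu,R_0(\l\pm i0)Vu\rangle=0$. Applying the Stone-type identity \eqref{Stone} with $f=Vu\in l^{2,1}(\ze^d)$ then yields
\[
\int_{M_\l}|\widehat{Vu}(\x)|^2\,d\mu(\x)=\frac{1}{2\pi}\Im\,\langle Vu,R_0(\l\pm i0)Vu\rangle=0,
\]
which forces $\widehat{Vu}|_{M_\l}=0$ in $L^2(M_\l,d\mu)$, as desired.

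The main point requiring care, rather than a true obstacle, is that $R_0(\l\pm i0)\colon l^{2,1}\to l^{2,-1}$ is only the \emph{abstract extension} from the lemma after \eqref{LAP3} and is not a priori a norm limit of resolvents on these spaces. Consequently, one must apply the Stone formula exactly in the form \eqref{Stone} (which is already stated for $f\in l^{2,1}$) rather than attempt to derive it by approximating with $\e\to 0$. Beyond this bookkeeping, the argument is a direct energy identity.
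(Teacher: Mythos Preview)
Your argument is correct and is essentially identical to the paper's proof: verify $Vu\in l^{2,1}(\ze^d)$ so that $(Vu,u)$ and $\widehat{Vu}|_{M_\l}$ make sense, use reality of $V$ to get $\Im(Vu,u)=0$, and apply the Stone formula \eqref{Stone} to conclude. Your additional remarks on the bookkeeping (extension of $R_0(\l\pm i0)$ to $l^{2,1}\to l^{2,-1}$) only make explicit what the paper leaves implicit.
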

\begin{proof}
We note $\widehat{Vu}|_{M_{\l}}$ and $(Vu,u)$ is well-defined, which follow from $u\in l^{2,-1}(\ze^d)$ and $Vu\in l^{2,1}(\ze^d)$. Then we have
\begin{align*}
0=-\Im(Vu, u)=\Im(Vu, R_0(\l\pm i0)Vu)=2\pi\int_{M_{\l}}|\widehat{Vu}(\x)|^2d\m(\x).
\end{align*}
Thus we obtain $\widehat{Vu}|_{M_\l}=0$.
\end{proof}

\subsection{No resonance in the interior of the spectrum}

For $0\leq k\leq d$, we define
\begin{align*}
p_k(\y)=-\sum_{j=1}^k\y_j^2+\sum_{j=k+1}^d\y_j^2.
\end{align*}

The next lemma is a weaker version of \cite[Theorem IX.41]{RS} near the hyperbolic thresholds.

\begin{lem}\label{dava}
Suppose $d\geq 3$. Let $f\in C^{1}(\T^d)$ such that $f|_{M_{\l}}=0$. Then we have $(h_0-\l)^{-1}f\in L^2(\T^d)$.
\end{lem}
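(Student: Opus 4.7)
The plan is to split $\T^d$ into small neighborhoods $U_y$ of each critical point $y \in \Sigma_{\l} \cap M_{\l}$ (a finite set) and the complement, handling each region separately. On $\T^d \setminus \bigcup_y U_y$, the set $M_{\l}$ is a smooth embedded hypersurface on which $\nabla h_0$ does not vanish, so $h_0 - \l$ can be used as a smooth defining function transverse to $M_{\l}$. Since $f \in C^1(\T^d)$ vanishes on $M_{\l}$, a first-order Taylor expansion in this transverse direction gives $|f(\x)| \leq C|h_0(\x) - \l|$ in a tubular neighborhood of $M_{\l}$, and away from $M_{\l}$ the function $(h_0-\l)^{-1}$ is simply bounded. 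Hence $(h_0-\l)^{-1} f$ is bounded, in particular lies in $L^2$, on $\T^d \setminus \bigcup_y U_y$.

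Near each $y$, the Morse lemma applied to $h_0 - \l$ at its non-degenerate critical point produces smooth coordinates $\z = (\z', \z'') \in \re^k \times \re^{d-k}$, with $k = k(y)$ the Morse index, in which $h_0 - \l = p_k(\z) = -|\z'|^2 + |\z''|^2$. Since $\l$ is a hyperbolic threshold, $1 \leq k \leq d-1$. The pulled-back function $\tilde f$ is $C^1$ and vanishes on the cone $C_k = \{|\z'| = |\z''|\}$, so applying the mean value theorem along the segment joining $\z$ to a closest point in $C_k$ yields $|\tilde f(\z)| \leq C\, \dist(\z, C_k)$. A direct minimization gives $\dist(\z, C_k)^2 = \tfrac{1}{2}(|\z'| - |\z''|)^2$, and together with the factorization $p_k(\z)^2 = (|\z'| - |\z''|)^2(|\z'| + |\z''|)^2$ this yields the crucial pointwise bound
\begin{align*}
\frac{|\tilde f(\z)|^2}{p_k(\z)^2} \leq \frac{C}{(|\z'| + |\z''|)^2}.
\end{align*}

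It remains to integrate. Introducing polar coordinates $|\z'| = \rho\cos\phi$, $|\z''| = \rho\sin\phi$ with $\phi \in (0, \pi/2)$, the Lebesgue measure $d\z$ gains a factor $\rho^{d-1}$ times bounded angular and spherical factors, so the local integral of $(|\z'| + |\z''|)^{-2}$ reduces to a finite angular integral (finite because $\cos\phi + \sin\phi \geq 1/\sqrt{2}$ on $[0, \pi/2]$) multiplied by the radial integral $\int_0^{\e_0} \rho^{d-3} \, d\rho$, which converges precisely because $d \geq 3$. Since the Morse change of coordinates has bounded Jacobian, the integrability transfers back to the $\x$ variable; combined with the bound on the complement this gives $(h_0 - \l)^{-1} f \in L^2(\T^d)$. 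The main obstacle is the compatible vanishing of $f$ and $h_0 - \l$ at the saddle-type Morse critical points, where the zero set of $h_0-\l$ itself is singular, and the hypothesis $d \geq 3$ enters precisely through the convergence of the radial integral $\int_0^{\e_0} \rho^{d-3} d\rho$.
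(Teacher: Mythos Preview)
Your proof is correct and follows essentially the same route as the paper's: Morse lemma at each critical point on $M_\lambda$, the bound $|\tilde f(\zeta)|\le C\,||\zeta'|-|\zeta''||$ coming from the vanishing of $\tilde f$ on the cone $\{|\zeta'|=|\zeta''|\}$, the factorization $p_k(\zeta)^2=(|\zeta'|-|\zeta''|)^2(|\zeta'|+|\zeta''|)^2$, and the integrability of $(|\zeta'|+|\zeta''|)^{-2}$ for $d\ge 3$. The only cosmetic differences are that the paper obtains the bound via a Taylor expansion along the radial segment in $\zeta'$ (assuming the Morse chart is convex) rather than via the distance-to-the-cone formula, and that you spell out both the regular-part argument and the polar-coordinate computation that the paper leaves implicit.
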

\begin{rem}
We regard $(h_0-\l)^{-1}f$ as a principal-valued:
\begin{align*}
((h_0-\l)^{-1}f,\f)=\lim_{\e\to 0}\int_{|h_0-\l|>\e}\frac{f(\x)\f(\x)}{h_0(\x)-\l}d\x.
\end{align*}
However, since $f|_{M_{\l}}=0$, $(h_0-\l\pm i0)^{-1}f$ coincide with $(h_0-\l)^{-1}f$.
\end{rem}

\begin{proof}
Take $\x_0\in \T^d$ such that $h_0(\x_0)=\l$ and $dh_0(\x_0)=0$. By the Morse lemma, there exist an open neighborhood $U\subset \T^d$ and a diffeomorphism $\k$ from $U$ to its image such that $h_0(\k^{-1}(\y))-\l=p_k(\y)$ for some $0\leq k\leq d$. Set $J(\y)=|\det d\k^{-1}(\y)|$. Take a cut-off function $\chi\in C^{\infty}(\T^d, [0,1])$ such that $\supp \chi\subset U$. We only show that $\chi(h_0-\l)^{-1}f\in L^2(\T^d)$. Apart from the hyperbolic threshold, the proof is easier and omitted since $f$ vanishes at the submanifold $h_0=\l$.

We may assume that $\k(U)\subset \re^d$ is convex. We write $f_{\k}(\y)=f(\k^{-1}(\y))$ for $\y\in \supp \k(U)$.
Since $f|_{M_{\l}}=0$ holds, we have $f_{\k}(|\y''|\o_1, |\y''|\o_2)=0$, where we write $\y=(|\y'|\o_1, |\y''|\o_2)$ with $\o_1\in \mathbb{S}^{k-1}$, $\o_2\in \mathbb{S}^{d-k-1}$. 
Set 
\begin{align*}
a(\y)=\int_0^1\o_1\cdot(\pa_{\y'}f)(((1-t)|\y''|+t|\y'|)\o_1, |\y''|\o_2)dt.
\end{align*}
By Taylor expanding, we see
\begin{align*}
f_{\k}(\y)=&f_{\k}(|\y'|\o_1, |\y''|\o_2)\\
=&f_{\k}(|\y''|\o_1, |\y''|\o_2)+(|\y'|-|\y''|)\cdot a(\y)\\
=&(|\y'|-|\y''|)\cdot a(\y).
\end{align*}
Thus we have $|f_{\k}(\y)|\leq C_{\y_0}||\y'|-|\y''||$ on $\y\in \k(U)$. Hence we obtain
\begin{align*}
\int_{|\y-\y_0|\leq 1}\chi_{\k}(\y)J(\y) \frac{|f_{\k}(\y)|^2}{p_k(\y)^2}d\y\leq C_{\y_0}^2\int_{|\y-\y_0|\leq 1} \frac{1}{(|\y'|+|\y''|)^2}d\y<\infty.
\end{align*}
This implies $\chi(h_0(\x)-\l)^{-1}f\in L^2(\T^d)$.

\end{proof}

\begin{proof}[Proof of Theorem \ref{hypthm}]
By the assumption, we note $\widehat{Vu}\in C^{1}(\re^n)$ by the Sobolev embedding theorem. By Lemma \ref{vani} and Lemma \ref{dava}, we have $u\in l^2(\ze^n)$.
\end{proof}

\section{Limiting absorption principle, Proof of Theorem \ref{lapthm}}

Suppose $d\geq 3$ and $|V(x)|\leq C\jap{x}^{-2-\d}$ with $\d>0$. Fix a signature $\pm$.
Set 
\begin{align*}
\mathbb{C}_{\pm}=\{z\in \mathbb{C}\mid \pm \Im z>0\},  \,\, \overline{\mathbb{C}_{\pm}}=\{z\in \mathbb{C}\mid \pm \Im z\geq 0\}.
\end{align*}
We define $R_{0,\pm}(z)\in B(l^{2,1}(\ze^d), l^{2,-1}(\ze^d))$ for $z\in \overline{\mathbb{C}_{\pm}}$ by
\begin{align*}
R_{0,\pm}(z)=\begin{cases}(H_0-z)^{-1}\,\, \text{for}\,\, \pm\Im z>0,\\
(H_0-z\mp i0)^{-1}\,\, \text{for}\,\, z\in \re.
\end{cases}
\end{align*}
We recall from \cite[Theorem 1.8]{TT} that
\begin{align}\label{Ho}
z\in \mathbb{C}\mapsto R_{0,\pm}(z)\in B(l^{2,s}(\ze^d), l^{2,-s}(\ze^d))\,\, \text{is H\"older continuous}
\end{align}
for $s>1$.

\begin{lem}\label{Rescpt}
Let $1\leq s<1+\d$.
 Then it follows that $R_{0,\pm}(z)V$ is a compact operator in $B(l^{2,-s}(\ze^d))$ for $z\in \overline{\mathbb{C}_{\pm}}$. Moreover, a map $z\in \overline{\mathbb{C}_{\pm}}\mapsto R_{0,\pm}(z)V\in B(l^{2,-s}(\ze^d))$ is continuous.
\end{lem}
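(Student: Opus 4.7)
The plan is to factor the composition $R_{0,\pm}(z)V:l^{2,-s}(\ze^d)\to l^{2,-s}(\ze^d)$ through a compact operator produced by the decay of $V$, and then combine this compactness with the H\"older continuity (\ref{Ho}) of the free resolvent.

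First, I would verify that the multiplication operator $V:l^{2,-s}(\ze^d)\to l^{2,1}(\ze^d)$ is bounded and, more importantly, compact. The boundedness is immediate from $|V(x)|\leq C\jap{x}^{-2-\d}$ together with the hypothesis $s\leq 1+\d$. For compactness I would approximate $V$ by the truncations $V_n(x):=V(x)\mathbf{1}_{|x|\leq n}$, each of which has finite-dimensional range (supported in $\{|x|\leq n\}$), together with the tail bound
\begin{align*}
\|(V-V_n)f\|_{l^{2,1}}^2\leq C^2\sum_{|x|>n}\jap{x}^{2-2(2+\d)+2s}\jap{x}^{-2s}|f(x)|^2\leq C^2\jap{n}^{2(s-1-\d)}\|f\|_{l^{2,-s}}^2,
\end{align*}
which, since $s<1+\d$, gives $\|V-V_n\|_{B(l^{2,-s}(\ze^d),l^{2,1}(\ze^d))}\to 0$. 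The compactness of $R_{0,\pm}(z)V$ on $l^{2,-s}(\ze^d)$ then follows by chaining with the uniform bound $R_{0,\pm}(z)\in B(l^{2,1}(\ze^d),l^{2,-1}(\ze^d))$ provided by (\ref{LAP3}) and the bounded inclusion $l^{2,-1}(\ze^d)\hookrightarrow l^{2,-s}(\ze^d)$, which holds for $s\geq 1$.

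For the continuity of $z\mapsto R_{0,\pm}(z)V$ I would again split $V=V_n+(V-V_n)$. The tail piece contributes uniformly at most $C\jap{n}^{s-1-\d}$ to the operator norm in $B(l^{2,-s}(\ze^d))$ by the previous step and (\ref{LAP3}). For the finite-rank piece one writes $V_nf=\sum_{|y|\leq n}V(y)f(y)\d_y$, where each $\d_y$ belongs to $l^{2,\tau}(\ze^d)$ for every $\tau\in\re$; hence (\ref{Ho}) applied with any $\tau>1$ yields H\"older continuity of $z\mapsto R_{0,\pm}(z)\d_y$ in $l^{2,-\tau}(\ze^d)$. When $s>1$, choosing $\tau\in(1,s]$ makes the embedding $l^{2,-\tau}(\ze^d)\hookrightarrow l^{2,-s}(\ze^d)$ bounded and transfers the continuity to $B(l^{2,-s}(\ze^d))$; a standard $\e/3$-argument, combining this with the uniform tail bound above, concludes the proof.

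The main obstacle lies in the endpoint case $s=1$: the H\"older continuity (\ref{Ho}) is only available for $\tau>1$, while a bounded embedding $l^{2,-\tau}(\ze^d)\hookrightarrow l^{2,-1}(\ze^d)$ would force $\tau\leq 1$, and the remark following (\ref{LAP3}) stresses that strong continuity of $R_{0,\pm}(z)$ in $B(l^{2,1}(\ze^d),l^{2,-1}(\ze^d))$ is not at hand. To handle this case I would resort to pointwise estimates of the Green's function $R_{0,\pm}(z)\d_y(x)=\int_{\T^d}e^{2\pi i(x-y)\cdot\x}(h_0(\x)-z\mp i0)^{-1}d\x$, obtained by stationary-phase/principal-value analysis near the critical set of $h_0$ as in [TT]: these provide a $z$-uniform summable dominating function against $\jap{x}^{-2}$, and together with the pointwise continuity of the kernel in $z$ the dominated convergence theorem delivers the required continuity in $l^{2,-1}(\ze^d)$.
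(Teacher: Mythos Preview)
Your compactness argument is correct and essentially equivalent to the paper's (the paper factors $\jap{x}^{-s}R_{0,\pm}(z)V\jap{x}^{s}=\jap{x}^{-s}R_{0,\pm}(z)\jap{x}^{-1}\cdot V\jap{x}^{1+s}$ and observes the second factor is compact, which is the same finite-rank approximation in disguise). Your continuity argument for $s>1$ is also fine.

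The genuine gap is the endpoint $s=1$, which you flag but do not resolve. The fallback you sketch---pointwise bounds and pointwise continuity in $z$ of the kernel $\int_{\T^d}e^{2\pi i(x-y)\cdot\x}(h_0(\x)-z\mp i0)^{-1}d\x$ together with a uniform $\jap{x}^{-2}$-summable envelope---is not available from the tools quoted in the paper and would require a separate, nontrivial analysis near the critical set of $h_0$. The paper avoids this entirely by truncating on the \emph{output} side rather than on $V$. The key input is the asymmetric bound (\ref{LAP1}), which after a density argument gives
\[
\sup_{z\in\overline{\mathbb{C}_{\pm}}}\|\jap{x}^{-1+\d_0}R_{0,\pm}(z)\jap{x}^{-1-\d_0}\|_{B(l^2)}<\infty
\]
for any sufficiently small $\d_0>0$. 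Choosing $0<\d_0<1+\d-s$ one has, for every $\e>0$, an $M$ with
\[
\sup_{z}\|\chi_{\{|x|\geq M\}}\jap{x}^{-s}R_{0,\pm}(z)\jap{x}^{-1-\d_0}\|_{B(l^2)}\leq M^{-(s-1+\d_0)}\sup_{z}\|\jap{x}^{-1+\d_0}R_{0,\pm}(z)\jap{x}^{-1-\d_0}\|_{B(l^2)}<\tfrac{\e}{3},
\]
while $\chi_{\{|x|<M\}}\jap{x}^{-s}R_{0,\pm}(z)\jap{x}^{-1-\d_0}$ is H\"older continuous by (\ref{Ho}) because the compactly supported left factor lets you land in any $l^{2,-\tau}$ and the right weight has exponent $1+\d_0>1$. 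The $\e/3$ argument then gives continuity of $\jap{x}^{-s}R_{0,\pm}(z)\jap{x}^{-1-\d_0}$ in $B(l^2)$, and since $V\jap{x}^{1+\d_0+s}$ is bounded (here one uses $\d_0<1+\d-s$), continuity of $R_{0,\pm}(z)V$ in $B(l^{2,-s})$ follows for all $1\leq s<1+\d$, including $s=1$. The moral: your truncation of $V$ buys arbitrary decay on the \emph{input}, which (\ref{Ho}) cannot convert into the needed $l^{2,-1}$ control on the output; truncating the output and exploiting the $\d_0$ gain of (\ref{LAP1}) does exactly that.
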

\begin{proof}
In order to prove that $R_{0,\pm}(z)V$ is compact in $B(l^{2,-s}(\ze^d))$, it suffices to prove that $\jap{x}^{-1}R_{0,\pm}(z)V\jap{x}$ is compact in $B(l^{2}(\ze^d))$. We write
\begin{align*}
\jap{x}^{-s}R_{0,\pm}(z)V\jap{x}^s=\jap{x}^{-s}R_{0,\pm}(z)\jap{x}^{-1} \times V\jap{x}^{1+s}
\end{align*}
From $(\ref{LAP1})$, we have $\jap{x}^{-s}R_{0,\pm}(z)\jap{x}^{-1}\in B(l^2(\ze^d))$. Moreover, $|V(x)|\leq C\jap{x}^{-2-\d}$ with $\d>0$ implies that $V\jap{x}^{1+s}$ is a compact operator since each multiplication operator which vanishes at infinity is a compact operator on $l^2(\ze^d)$. Thus the compactness of $R_{0,\pm}(z)V$ follows.

Next, we prove that a map $z\in \overline{\mathbb{C}_{\pm}}\mapsto \jap{x}^{-s}R_{0,\pm}(z)\jap{x}^{-1-\d}\in B(l^{2}(\ze^d))$ is continuous, which implies the continuity of $R_{0,\pm}(z)V\in B(l^{2,-s}(\ze^d))$. We may assume $\d>0$ is small enough. By $(\ref{LAP1})$ and a density argument, we have
\begin{align}\label{LAPiso}
\sup_{z\in \overline{\mathbb{C}_{\pm}}}\|\jap{x}^{-1+\d}R_{0,\pm}(z)\jap{x}^{-1-\d}\|_{B(l^2(\ze^d))}<\infty.
\end{align}
for $\d>0$ small enough. From $(\ref{LAPiso})$, we see that there exists $M>0$ such that
\begin{align}\label{3e1}
\sup_{z\in \overline{\mathbb{C}_{\pm}}}\|\jap{x}^{-s}R_{0,\pm}(z)\jap{x}^{-1-\d}\|_{B(l^2(\ze^d), l^2(|x|\geq M) )}<\frac{\e}{3}.
\end{align}
On the other hand, $(\ref{Ho})$ implies that a map
\begin{align*}
z\in \overline{\mathbb{C}_{\pm}} \mapsto \chi_{\{|x|<M\}}\jap{x}^{-s}R_{0,\pm}(z)\jap{x}^{-1-\d}\in B(l^2(\ze^d))
\end{align*}
is continuous, where $\chi_{A}$ is the characteristic function of $A\subset \re^d$. Thus there exists $\d_1>0$ such that $|z-z'|<\d_1$ with $z,z'\in \overline{\mathbb{C}_{\pm}}$ implies
\begin{align*}
\|\chi_{\{|x|<M\}}\jap{x}^{-s}R_{0,\pm}(z')\jap{x}^{-1-\d}-\chi_{\{|x|<M\}}\jap{x}^{-s}R_{0,\pm}(z')\jap{x}^{-1-\d}\|_{B(l^2(\ze^d))}<\frac{\e}{3}.
\end{align*}
This inequality with $(\ref{3e1})$ gives
\begin{align*}
\|\jap{x}^{-s}R_{0,\pm}(z')\jap{x}^{-1-\d}-\jap{x}^{-s}R_{0,\pm}(z')\jap{x}^{-1-\d}\|_{B(l^2(\ze^d))}<\e
\end{align*}
for $|z-z'|<\d$. This completes the proof.
\end{proof}

\begin{lem}\label{pres}
Let $z\in \mathbb{C}\setminus \re$ and let $s\in \re$. Then $H_0$, $H$, $(H_0-z)^{-1}$ and $(H-z)^{-1}$ preserve $l^{2,s}(\ze^d)$. In particular, $H_0-z$ and $H-z$ are invertible on $l^{2,s}(\ze^d)$.
\end{lem}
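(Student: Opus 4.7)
The plan is to handle the assertions in sequence, using different tools tailored to each operator. First, $H_0$ preserves $l^{2,s}(\ze^d)$ for every $s \in \re$ because its matrix is supported on $\{(x,y): |x-y|\le 1\}$, where $\jap{x}^s/\jap{y}^s$ is uniformly bounded; a direct computation gives $\|H_0 u\|_{l^{2,s}} \le C_s \|u\|_{l^{2,s}}$. Since $|V(x)|\le C\jap{x}^{-2-\d}$, multiplication by $V$ not only preserves every weighted space but actually satisfies $V: l^{2,\s}(\ze^d) \to l^{2,\s+2+\d}(\ze^d)$ boundedly for every $\s \in \re$. Hence $H = H_0 + V$ also preserves $l^{2,s}(\ze^d)$.

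For $(H_0-z)^{-1}$ I would use the Fourier representation: it is multiplication by $(h_0(\x)-z)^{-1}$, which is a $C^\infty$ function on $\T^d$ because $h_0$ is real-valued and $\Im z \ne 0$. Multiplication by a smooth function is bounded on $H^s(\T^d)$ for every $s \in \re$, and transporting through the isometry $\mathcal{F}_d: l^{2,s}(\ze^d) \to H^s(\T^d)$ yields the boundedness of $(H_0-z)^{-1}$ on $l^{2,s}(\ze^d)$.

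For $(H-z)^{-1}$ I split into the cases $s\ge 0$ and $s<0$. For $s \ge 0$, given $f \in l^{2,s}\subset l^2$, set $g=(H-z)^{-1}f \in l^2$; the identity $(H_0-z)g = f - Vg$ rewrites as the fixed-point relation $g=(H_0-z)^{-1}(f-Vg)$. If $g\in l^{2,\s}$ for some $\s\ge 0$, then $Vg\in l^{2,\s+2+\d}$ and $f \in l^{2,s}$ give $f-Vg\in l^{2,\min(s,\s+2+\d)}$; applying the previous paragraph yields $g\in l^{2,\min(s,\s+2+\d)}$. Starting from $\s=0$, after finitely many iterations we reach $g\in l^{2,s}$ with a uniform bound depending on $s$. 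For $s<0$, the matrix of $H$ is real and symmetric, so the matrix of $(H-z)^{-1}$ is also symmetric, i.e., its bilinear transpose equals itself. Under the duality $(l^{2,-s}(\ze^d))^* \cong l^{2,s}(\ze^d)$ induced by the bilinear $l^2$ pairing, the Banach dual of $(H-z)^{-1}: l^{2,-s}\to l^{2,-s}$ (bounded by the case just handled, since $-s>0$) coincides with the extension of $(H-z)^{-1}$ to $l^{2,s}$, giving the required boundedness.

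Finally, the invertibility of $H_0-z$ and $H-z$ on $l^{2,s}$ follows by density: both $(H_0-z)(H_0-z)^{-1}=I$ and $(H-z)(H-z)^{-1}=I$ (and the reverse compositions) hold on the dense subspace of finitely supported sequences---where they reduce to identities on $l^2$---and both sides are bounded on $l^{2,s}$, so the identities extend by continuity to all of $l^{2,s}$. The main obstacle is the $(H-z)^{-1}$ step, since the Fourier method breaks down for $H$: one must exploit the weight gain of $V$ in the inhomogeneous equation $(H_0-z)g = f-Vg$ to bootstrap in the case $s\ge 0$, and the real symmetry of the resolvent matrix to transfer the result to $s<0$ by duality.
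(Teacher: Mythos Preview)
Your proof is correct but follows a different route from the paper. The paper handles all four operators in one stroke via the commutator identity
\[
[(P-z)^{-1},\jap{x}^s]=(P-z)^{-1}[\jap{x}^s,P](P-z)^{-1},\qquad P\in\{H_0,H\},
\]
together with $[V,\jap{x}^s]=0$, reducing everything to the single claim that $[H_0,\jap{x}^s]\jap{x}^{-s}\in B(l^2(\ze^d))$; this is then dispatched by observing that its Fourier conjugate $[h_0,\jap{D_\xi}^s]\jap{D_\xi}^{-s}$ is a pseudodifferential operator of order $-1$ on $\T^d$. Your argument instead treats each operator separately: the finite-range kernel for $H_0$, the smooth Fourier multiplier $(h_0-z)^{-1}$ for $(H_0-z)^{-1}$, a bootstrap exploiting the weight gain $V:l^{2,\sigma}\to l^{2,\sigma+2+\delta}$ for $(H-z)^{-1}$ when $s\ge 0$, and then the symmetry of the resolvent matrix plus bilinear duality for $s<0$. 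The paper's approach is shorter and treats $H_0$ and $H$ uniformly, but invokes pseudodifferential calculus on $\T^d$; yours is longer but entirely elementary and makes transparent use of the decay hypothesis on $V$, which the commutator method does not need at all (only boundedness of $V$ enters there).
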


\begin{proof}
By using relations $[V,\jap{x}^s]=0$ and
\begin{align*}
[(P-z)^{-1}, \jap{x}^s]=(P-z)^{-1}[\jap{x}^s, P](P-z)^{-1},\,\, P\in \{H_0,H\},
\end{align*}
it suffices to prove $[H_0,\jap{x}^s]\jap{x}^{-s}\in B(l^2(\ze^d))$. This is easily proved since its Fourier conjugate $[h_0, \jap{D_{\x}}^s]\jap{D_{\x}}^{-s}$ of $[H_0,\jap{x}^s]\jap{x}^{-s}$ is a pseudodifferential operator of order $-1$ on $\T^d$. This completes the proof.

\end{proof}

\begin{lem}\label{hypreg}
Let $z\in \overline{\mathbb{C}_{\pm}}$. Suppose that $u\in l^{2,-1-\d}(\ze^d)$ satisfies $(I+R_{0,\pm}(z)V)u=0$. Then we have $u\in l^{2,-1}(\ze^d)$.
\end{lem}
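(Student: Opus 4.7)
The plan exploits that $V$ has just enough decay to map $l^{2,-1-\d}(\ze^d)$ into the critical weighted space $l^{2,1}(\ze^d)$, on which the outgoing/incoming resolvent $R_{0,\pm}(z)$ extends boundedly into $l^{2,-1}(\ze^d)$.

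First I would verify $Vu\in l^{2,1}(\ze^d)$: from $|V(x)|\leq C\jap{x}^{-2-\d}$ and $u\in l^{2,-1-\d}(\ze^d)$,
\begin{align*}
\|Vu\|_{l^{2,1}}^{2}=\sum_{x\in\ze^{d}}\jap{x}^{2}|V(x)u(x)|^{2}\leq C^{2}\sum_{x\in\ze^{d}}\jap{x}^{-2-2\d}|u(x)|^{2}=C^{2}\|u\|_{l^{2,-1-\d}}^{2}<\infty.
\end{align*}
This choice of weight is critical: the same estimate does not place $Vu$ in $l^{2,s}(\ze^d)$ for any $s>1$, so one genuinely needs the extension of $R_{0,\pm}$ to a bounded operator from $l^{2,1}(\ze^d)$ to $l^{2,-1}(\ze^d)$ furnished by the lemma immediately after \eqref{LAP3} (for $z\in\mathbb{C}_{\pm}$ the extension is trivial, since $(H_{0}-z)^{-1}$ is already bounded on $l^{2}(\ze^d)$). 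Applying this extension to $Vu\in l^{2,1}(\ze^d)$ yields $R_{0,\pm}(z)Vu\in l^{2,-1}(\ze^d)$, and the hypothesis $u=-R_{0,\pm}(z)Vu$ then forces $u\in l^{2,-1}(\ze^d)$.

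The step demanding care --- and what I expect to be the main obstacle --- is interpreting the composition $R_{0,\pm}(z)V$ in the hypothesis: since $u$ a priori lives only in $l^{2,-1-\d}$, the product must be read as $V$ followed by the extended $l^{2,1}\to l^{2,-1}$ resolvent, which is in fact the only reading making the equation well posed on $l^{2,-1-\d}(\ze^d)$. I would make the compatibility with the resonance definition precise by a cutoff argument: approximate $V$ by $V_{N}=\chi_{\{|x|\leq N\}}V$, so that $V_{N}u$ is compactly supported and hence belongs to $l^{2,s}(\ze^d)$ for every $s>1$; apply the standard $l^{2,s}\to l^{2,-s}$ resolvent to $V_{N}u$; and pass to the limit $N\to\infty$ using $V_{N}u\to Vu$ in $l^{2,1}(\ze^d)$ together with the uniform $l^{2,1}\to l^{2,-1}$ bound from \eqref{LAP3}. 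This identifies the two interpretations of $R_{0,\pm}(z)Vu$ as elements of $l^{2,-1}(\ze^d)$ and completes the argument.
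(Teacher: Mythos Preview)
Your argument is correct and matches the paper's one-line proof, which simply invokes $|V|\leq C\jap{x}^{-2-\d}$ together with \eqref{LAP1} (implicitly using the $\d=0$ endpoint and its boundary extension \eqref{LAP3}) to conclude that $u=-R_{0,\pm}(z)Vu\in l^{2,-1}(\ze^d)$. Your additional cutoff argument to reconcile the two readings of $R_{0,\pm}(z)Vu$ is a reasonable piece of hygiene, but it is already implicit in the paper: the extension of $R_{0,\pm}(z)$ to $B(l^{2,1},l^{2,-1})$ is obtained precisely by density from the $l^{2,s}\to l^{2,-s}$ operators, so the two interpretations agree by construction and no separate verification is needed.
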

\begin{proof}
This lemma immediately follows from $|V|\leq C\jap{x}^{-2-\d}$ and $(\ref{LAP1})$.
\end{proof}

\begin{prop}\label{lappropp}
Let $U\subset \mathbb{C}_{\pm}$ be a bounded open set satisfying
\begin{align}\label{Ker}
\{u\in l^{2,-1}(\ze^d)\mid (I+R_{0,\pm}(z)V)u=0\}=\{0\},\,\, \text{for any}\,\, z\in \overline{U}.
\end{align}
\item[$(i)$] Let $1\leq s<1+\d$. Then an inverse $(I+R_{0,\pm}(z)V)^{-1}\in B(l^{2,-s}(\ze^d))$ exists for $z\in \overline{U}$ and
\begin{align*}
\sup_{z\in \overline{U}}\|(I+R_{0,\pm}(z)V)^{-1}\|_{B(l^{2,-s}(\ze^d))}<\infty.
\end{align*}

\item[$(ii)$] For $z\in \overline{U}$, we set
\begin{align*}
R_{\pm}(z)= (I+R_{0,\pm}(z)V)^{-1}R_{0,\pm}(z)\in B(l^{2,1}(\ze^d), l^{2,-1}(\ze^d)).
\end{align*}
Then we have $R_{\pm}(z)=(H-z)^{-1}$ for $z\in \overline{U}\setminus \re$ and
\begin{align*}
\sup_{z\in \overline{U}}\|R_{\pm}(z)\|_{B(l^{2,1}(\ze^d), l^{2,-1}(\ze^d))}<\infty.
\end{align*}

\item[$(iii)$] Let $1<s\leq 1+\d/2$. Then a map $z\in \overline{U}\mapsto R_{\pm}(z)\in B(l^{2,s}(\ze^d), l^{2,-s}(\ze^d))$ is H\"older continuous.

\end{prop}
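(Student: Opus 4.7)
The strategy is classical: combine Fredholm theory on weighted spaces (to invert $I + R_{0,\pm}(z)V$) with the second resolvent identity (to identify $R_{\pm}(z)$ with $(H-z)^{-1}$ and to transfer the continuity/Hölder estimates already established for $R_{0,\pm}$ to $R_{\pm}$). For part (i) I would first apply Lemma \ref{Rescpt} to conclude that $R_{0,\pm}(z)V$ is compact on $l^{2,-s}(\ze^d)$ for $1 \leq s < 1+\d$, so $I+R_{0,\pm}(z)V$ is Fredholm of index zero. To verify triviality of the kernel, take $u \in l^{2,-s}$ with $(I+R_{0,\pm}(z)V)u=0$; since $s < 1+\d$, the inclusion $l^{2,-s} \subset l^{2,-1-\d}$ combined with Lemma \ref{hypreg} places $u$ in $l^{2,-1}$, and then hypothesis (\ref{Ker}) forces $u=0$. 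For the uniform bound, I would use the identity
\begin{equation*}
A(z)^{-1} - A(z')^{-1} = A(z)^{-1}\bigl(A(z')-A(z)\bigr)A(z')^{-1}, \quad A(z) := I+R_{0,\pm}(z)V,
\end{equation*}
together with the continuity of $z \mapsto R_{0,\pm}(z)V$ in $B(l^{2,-s})$ from Lemma \ref{Rescpt} to deduce continuity of $z \mapsto A(z)^{-1}$, and then invoke compactness of $\overline{U}$.

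For part (ii), the identification $R_{\pm}(z) = (H-z)^{-1}$ on $\overline{U}\setminus\re$ follows from the standard resolvent identity $(I+R_{0,\pm}(z)V)(H-z)^{-1} = R_{0,\pm}(z)$, which makes sense on $l^{2,1}$ thanks to Lemma \ref{pres} (it guarantees $(H-z)^{-1}$ preserves $l^{2,1}$, so the right composition is well defined). Applying $A(z)^{-1}$ from (i) on the left then gives the claim. The uniform $B(l^{2,1}, l^{2,-1})$ estimate drops out as a factorisation: the $l^{2,1}\to l^{2,-1}$ norm of $R_{0,\pm}(z)$ is uniformly bounded by (\ref{LAP3}) (extended to $z\in\overline{\mathbb{C}_{\pm}}$ via the density argument of the unnumbered lemma following (\ref{LAP2})), and the $l^{2,-1}$ norm of $A(z)^{-1}$ is exactly (i) specialised to $s=1$.

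For part (iii), the essential observation is that when $1 < s \leq 1+\d/2$ the multiplication operator $V$ sends $l^{2,-s}$ continuously into $l^{2,s}$, because $2+\d-s \geq s$ by the choice of $s$. Composing with the Hölder continuous map $R_{0,\pm}(z)\colon l^{2,s}\to l^{2,-s}$ from (\ref{Ho}) then shows that $R_{0,\pm}(z)V$ is Hölder continuous in $z$ as an element of $B(l^{2,-s})$; plugging this into the inverse identity displayed above, with the uniform bound from (i), upgrades continuity of $A(z)^{-1}$ to Hölder continuity in $B(l^{2,-s})$. Writing
\begin{align*}
R_{\pm}(z) - R_{\pm}(z') ={}& [A(z)^{-1} - A(z')^{-1}]\,R_{0,\pm}(z)\\
&+ A(z')^{-1}\,[R_{0,\pm}(z) - R_{0,\pm}(z')]
\end{align*}
and applying the two Hölder estimates (with the composition $l^{2,s}\to l^{2,-s}\to l^{2,-s}$ tracked carefully) yields the desired Hölder continuity of $R_{\pm}\colon \overline{U} \to B(l^{2,s}, l^{2,-s})$.

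The main obstacle is not any single step but the bookkeeping of weights across the three parts: the Fredholm step in (i) needs $s < 1+\d$, the Hölder step in (iii) demands $s \leq 1+\d/2$ so that $V$ absorbs enough weight to land in a space on which (\ref{Ho}) applies, and the identification with $(H-z)^{-1}$ in (ii) sits naturally at $s=1$. Checking that all three constraints can be satisfied simultaneously, and that every intermediate composition is bounded and has the continuity/Hölder property one needs, is the real content of the argument.
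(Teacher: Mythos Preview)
Your proposal is correct and follows essentially the same route as the paper: Fredholm theory via Lemma~\ref{Rescpt} and continuity on $\overline{U}$ for (i), the resolvent identity for (ii), and the inverse-difference identity combined with (\ref{Ho}) and the observation $V\in B(l^{2,-s},l^{2,s})$ for (iii). Your explicit invocation of Lemma~\ref{hypreg} to reduce the kernel on $l^{2,-s}$ to the hypothesis (\ref{Ker}) on $l^{2,-1}$, and your check that $s\leq 1+\d/2$ is exactly what makes $V$ absorb the weight, are details the paper leaves implicit but which are indeed needed.
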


\begin{proof}
Lemma \ref{Rescpt} implies that $\{I+R_{0,\pm}(z)V\}_{z\in \overline{U}}$ is a continuous family of Fredholm operators with index $0$ on $B(l^{2,-s}(\ze^d))$. Thus the assumption $(\ref{Ker})$ implies that $I+R_{0,\pm}(z)V$ is invertible for $z\in \overline{U}$ and that a map $z\mapsto (I+R_{0,\pm}(z)V)^{-1}\in B(l^{2,-s}(\ze^d))$ is continuous. This with the compactness of $\overline{U}$ gives the proof of $(i)$.

The part $(ii)$ follows from the part $(i)$, $(\ref{LAP1})$ and the resolvent equation:
\begin{align*}
(I+(H_0-z)^{-1}V)(H-z)^{-1}=(H_0-z)^{-1},\,\,z\in \mathbb{C}\setminus \re.
\end{align*}

To prove part $(iii)$, we observe that $z\in \overline{U}\mapsto (I+R_{0,\pm}(z)V)^{-1}\in B(l^{2,-s}(\ze^d))$ is H\"older continuous. In fact, for $z, z'\in \overline{U}$, we have
\begin{align*}
&(I+R_{0,\pm}(z)V)^{-1}-(I+R_{0,\pm}(z')V)^{-1}\\
&=(I+R_{0,\pm}(z)V)^{-1}(R_{0,\pm}(z')-R_{0,\pm}(z))V(I+R_{0,\pm}(z')V)^{-1}.
\end{align*}
Part $(i)$, $(\ref{Ho})$, and $V\in B(l^{2,-s}(\ze^d), l^{2,s}(\ze^d))$ imply the H\"older continuity of $(I+R_{0,\pm}(z)V)^{-1}$. This, $(\ref{Ho})$ and the following representation:
\begin{align*}
R_{\pm}(z)-R_{\pm}(z')=&(I+R_{0,\pm}(z)V)^{-1}(R_{0,\pm}(z)-R_{0,\pm}(z'))\\
&+ ((I+R_{0,\pm}(z)V)^{-1}-(I+R_{0,\pm}(z')V)^{-1})R_{0,\pm}(z'),
\end{align*}
finish the proof of part $(iii)$.

\end{proof}

\begin{proof}[Proof of Theorem \ref{lapthm}]
From now on, we assume that $V$ is a finitely supported potential. We take $R>0$ such that $\s(H)\subset \{|z|<R\}$.
Then $(\ref{Ker})$ holds for 
\begin{align*}
U=\{z\in \mathbb{C}\mid \pm\Im z\geq 0,\,\,  |z|<2R,\,\, |z|>\e_1,\,\, |z-4d|>\e_1\}.
\end{align*}
Moreover, we note $\s(H)\cap \O_{\e_1,\pm}\setminus U=\emptyset$. Now Theorem \ref{lapthm} follows from Corollary \ref{corab} and Proposition \ref{lappropp}.

\end{proof}

\appendix

\section{Lorentz space}

For a measure space $(X,\m)$, $L^{p,r}(X,\m)$ denotes the Lorentz space for $1\leq p\leq \infty$ and $1\leq r\leq \infty$:
\begin{align*}
&\|f\|_{L^{p,r}(X)}=\begin{cases}
p^{\frac{1}{r}}(\int_0^{\infty}\m(\{x\in X\,|\, |f(x)|>\a\})^{\frac{r}{p}} \a^{r-1}d\a)^{\frac{1}{r}},\quad &r<\infty,\\
\sup_{\a>0}\a\m(\{x\in X\mid |f(x)|>\a\})^{\frac{1}{p}},\quad &r=\infty,
\end{cases}\\
&L^{p,r}(X,\m)=\{f:X\to \mathbb{C}\mid f:\text{measurable},\, \|f\|_{L^{p,r}(X)}<\infty\}.
\end{align*}
Moreover, we denote $L^{p,r}(\re^d)=L^{p,r}(\re^d, \m_L)$ and $l^{p}_r(\ze^d)=L^{p,r}(\ze^d, \m_c)$, where $\m_L$ is the Lebesgue measure on $\re^d$ and $\m_c$ is the counting measure on $\ze^d$. For a detail, see \cite{G}. In this section, we state some fundamental properties of the Lorentz spaces. Note that $L^{p,p}(X,\m)=L^{p}(X)$.

\begin{lem}[The Young inequalities in the Lorentz spaces]
Let $1<p_i<\infty$, $1\leq q_i\leq \infty$ with $\frac{1}{r}=\frac{1}{p_1}+\frac{1}{p_2}-1>0$ and $s\geq 1$ with $\frac{1}{q_1}+\frac{1}{q_2}\geq \frac{1}{s}$. Then we have
\begin{align*}
\|f\ast g\|_{l^{r}_s(\ze^d)}\leq C\|f\|_{l^{p_1}_{q_1}(\ze^d)}\|g\|_{l^{p_2}_{q_2}(\ze^d)}.
\end{align*}
\end{lem}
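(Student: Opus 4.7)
This is the discrete analogue of the classical O'Neil convolution inequality in Lorentz spaces, and my plan is to reduce it to the continuous version (which is in Grafakos's text, already cited in the paper). First, I embed $\ze^d$ isometrically into $\re^d$: for $f: \ze^d \to \co$, set $(\iota f)(x) = f(y)$ for $x \in y + [-\frac{1}{2}, \frac{1}{2})^d$, $y \in \ze^d$. Since $\{x \in \re^d : |\iota f(x)| > \alpha\}$ is the disjoint union of unit cubes indexed by $\{y : |f(y)| > \alpha\}$, the Lebesgue distribution function of $\iota f$ equals the counting distribution function of $f$, giving $\|\iota f\|_{L^{p,q}(\re^d)} = \|f\|_{l^p_q(\ze^d)}$ for all $1 \leq p, q \leq \infty$.

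Write $F = \iota(|f|)$, $G = \iota(|g|)$. A direct computation using $(\chi_{[-1/2, 1/2)} * \chi_{[-1/2, 1/2)})(a) = \max(0, 1 - |a|)$ shows, for $y \in \ze^d$ and $t \in [0, 1/2)^d$,
\begin{align*}
(F * G)(y + t) = \sum_{k \in \{0,1\}^d} \left(\prod_{j=1}^d t_j^{k_j}(1 - t_j)^{1 - k_j}\right) (|f| * |g|)(y + k).
\end{align*}
Dropping all terms except $k = 0$ and using $t_j < 1/2$ gives $(F * G)(y + t) \geq 2^{-d} (|f|*|g|)(y) \geq 2^{-d} |(f*g)(y)|$. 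Since the cubes $C_y := y + [0, 1/2]^d$, $y \in \ze^d$, are pairwise disjoint with volume $2^{-d}$,
\begin{align*}
\#\{y \in \ze^d : |(f*g)(y)| > \alpha\} \leq 2^d \cdot |\{x \in \re^d : (F*G)(x) > 2^{-d}\alpha\}|,
\end{align*}
which translates, via the layer-cake formula defining the Lorentz norm, into $\|f*g\|_{l^r_s(\ze^d)} \leq C_d \|F*G\|_{L^{r,s}(\re^d)}$.

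Finally, the classical continuous O'Neil inequality in Lorentz spaces applies under exactly the stated index conditions, giving
\begin{align*}
\|F * G\|_{L^{r,s}(\re^d)} \leq C \|F\|_{L^{p_1,q_1}(\re^d)} \|G\|_{L^{p_2,q_2}(\re^d)} = C \|f\|_{l^{p_1}_{q_1}(\ze^d)} \|g\|_{l^{p_2}_{q_2}(\ze^d)},
\end{align*}
completing the proof. The main obstacle is the convolution identity in the second step --- a straightforward but easily miscounted calculation involving the shift conventions for the supporting cubes. An alternative self-contained route, bypassing the continuous case, is to first prove the weak-type endpoint $l^{p_1,\infty} \times l^{p_2,\infty} \to l^{r,\infty}$ by a level-set decomposition of $f$ and $g$, and then apply bilinear Marcinkiewicz real interpolation; this is somewhat more involved but is the standard way of proving O'Neil itself.
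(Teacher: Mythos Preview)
The paper does not actually prove this lemma; it is stated without proof in Appendix~A as a known fact, with the references~[G] and~[O] in the bibliography covering the continuous Young/O'Neil inequality on which you ultimately rely. Your reduction is correct: the step-function embedding $\iota$ preserves all distribution functions (hence all Lorentz quasi-norms), the convolution identity you wrote is accurate (the tent function $\chi_{[-1/2,1/2)}*\chi_{[-1/2,1/2)}$ produces exactly the weights $t_j^{k_j}(1-t_j)^{1-k_j}$ over $k\in\{0,1\}^d$), and the resulting distribution-function comparison gives the constant-loss bound $\|f*g\|_{l^r_s}\le C_d\|F*G\|_{L^{r,s}}$ as claimed. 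One cosmetic point: you switch from $t\in[0,1/2)^d$ to $C_y=y+[0,1/2]^d$; keep the half-open cubes for literal disjointness, though this is only a measure-zero discrepancy.
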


\begin{lem}[The H\"older inequalities in the Lorentz spaces]\label{Hol}
If $1\leq p_1,p_2,q_1,q_2\leq \infty$ and $1\leq r\leq \infty$ satisfy
\begin{align*}
\frac{1}{p_1}+\frac{1}{p_2}=\frac{1}{r}<1,
\end{align*}
then
\begin{align*}
\|fg\|_{l^{r}_{\min(q_1,q_2)}(\ze^d)}\leq \|f\|_{l^{p_1}_{q_1}(\ze^d)}\|g\|_{l^{p_2}_{q_2}(\ze^d)}.
\end{align*}

\end{lem}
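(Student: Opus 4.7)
The plan is to reduce Lemma \ref{Hol} to a sharper form of the H\"older inequality in which $\min(q_1,q_2)$ is replaced by the ``harmonic'' exponent $s_0$ defined by $\frac{1}{s_0}=\frac{1}{q_1}+\frac{1}{q_2}$. Since $\frac{1}{s_0}\geq \frac{1}{\min(q_1,q_2)}$, we have $s_0\leq \min(q_1,q_2)$, so the standard monotonicity of the Lorentz scale (smaller secondary index gives a smaller space), i.e.\ $l^{r}_{s_0}(\ze^d)\hookrightarrow l^{r}_{\min(q_1,q_2)}(\ze^d)$ with $\|h\|_{l^{r}_{\min(q_1,q_2)}}\leq C\|h\|_{l^{r}_{s_0}}$, allows one to deduce the stated inequality from
\begin{align*}
\|fg\|_{l^{r}_{s_0}(\ze^d)}\leq C\|f\|_{l^{p_1}_{q_1}(\ze^d)}\|g\|_{l^{p_2}_{q_2}(\ze^d)}.
\end{align*}

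For this sharp form I would combine three classical ingredients. First, the equivalent quasi-norm via the decreasing rearrangement $h^*$ of $|h|$ with respect to counting measure on $\ze^d$:
\begin{align*}
\|h\|_{l^{p}_{q}(\ze^d)}\sim \left(\int_0^\infty \bigl(t^{1/p}h^*(t)\bigr)^q\,\frac{dt}{t}\right)^{1/q},
\end{align*}
with the usual supremum convention when $q=\infty$. Second, the submultiplicativity of the rearrangement $(fg)^*(2t)\leq f^*(t)g^*(t)$, which after the change of variable $u=t/2$ reduces the problem to bounding
\begin{align*}
\int_0^\infty \bigl(u^{1/r}f^*(u)g^*(u)\bigr)^{s_0}\,\frac{du}{u}.
\end{align*}
Third, since $\frac{1}{r}=\frac{1}{p_1}+\frac{1}{p_2}$, the integrand factorizes as $\bigl(u^{1/p_1}f^*(u)\bigr)^{s_0}\bigl(u^{1/p_2}g^*(u)\bigr)^{s_0}$, and the classical H\"older inequality on $\bigl((0,\infty),du/u\bigr)$ applied with the conjugate pair $(q_1/s_0,q_2/s_0)$, which are conjugate thanks to $s_0/q_1+s_0/q_2=1$, separates the $f^*$ and $g^*$ factors. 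Reversing the quasi-norm equivalence for each factor yields the bound $C\|f\|_{l^{p_1}_{q_1}}^{s_0}\|g\|_{l^{p_2}_{q_2}}^{s_0}$, which is exactly what we need.

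The main technical point is the endpoint cases $q_1=\infty$ or $q_2=\infty$. If, say, $q_1<\infty$ and $q_2=\infty$, then $s_0=q_1=\min(q_1,q_2)$, the embedding step becomes trivial, and the H\"older step degenerates to an $L^\infty$--$L^{s_0}$ product inequality, which remains valid with the natural interpretation $\|g\|_{l^{p_2}_\infty}\sim \sup_{u}u^{1/p_2}g^*(u)$; if $q_1=q_2=\infty$ then $s_0=\infty$ and the estimate reduces directly to the rearrangement inequality $(fg)^*(2t)\leq f^*(t)g^*(t)$. Both underlying ingredients (the quasi-norm equivalence and the submultiplicativity of rearrangements) are classical and can be cited from \cite{G}, so no new difficulty is introduced by the sequence-space setting beyond noting that the decreasing rearrangement with respect to the counting measure on $\ze^d$ is piecewise constant but obeys exactly the same integral identities used above.
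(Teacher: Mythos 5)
Your argument is correct, but note that the paper itself does not prove this lemma at all: Appendix A merely states the Young and H\"older inequalities in Lorentz spaces and refers to the literature (Grafakos \cite{G}; O'Neil's paper \cite{O} is also in the bibliography), since only the qualitative boundedness is used later (in Corollary \ref{disHLS} and Remark \ref{Hardy}). What you supply is essentially the standard O'Neil-type proof: pass to decreasing rearrangements via the identity between the distribution-function quasinorm and $\bigl(\int_0^\infty (t^{1/p}h^*(t))^q\,\frac{dt}{t}\bigr)^{1/q}$, use the submultiplicativity $(fg)^*(2t)\leq f^*(t)g^*(t)$, factor $u^{1/r}=u^{1/p_1}u^{1/p_2}$, and apply classical H\"older on $((0,\infty),du/u)$ with the conjugate pair $(q_1/s_0,q_2/s_0)$, finishing with the embedding $l^r_{s_0}\hookrightarrow l^r_{\min(q_1,q_2)}$; your treatment of the endpoint cases $q_i=\infty$ is also correct. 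In fact your route proves the sharper statement with secondary exponent $s_0$ given by $\frac{1}{s_0}=\frac{1}{q_1}+\frac{1}{q_2}$, of which the paper's version is a weakening. The only discrepancy is cosmetic: your chain of estimates (the factor $2^{s_0/r}$ from the change of variables and the embedding constant) yields the inequality with a constant $C=C(r,p_i,q_i)$ rather than the constant $1$ written in the lemma; since the lemma is only invoked qualitatively, this is immaterial, and indeed the sharp-constant form as printed should itself be read with an implicit constant.
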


\section{Harmonic analysis}

\begin{prop}\label{mlem}
Let $m\in C^{\infty}(\re^d\setminus\{0\})$ satisfying that there exists $0\leq k<d$ such that
\begin{align}\label{m}
|\pa_{\x}^{\a}m(\x)|\leq C_{\a}|\x|^{-k-|\a|},\quad x\in \re^d.
\end{align}
Moreover, we assume that $m$ is compactly supported.
Then if we set
\begin{align*}
I=\int_{\re^d}e^{-2\pi ix\cdot \x}m(\x)d\x,
\end{align*}
then $|I|\leq C\jap{x}^{-d+k}$.
\end{prop}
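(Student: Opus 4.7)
The plan is to split into the regimes $|x| \leq 1$ and $|x| \geq 1$, using the integrability of $m$ near the origin (which is guaranteed by $k < d$) in the bounded regime, and exploiting the oscillation of $e^{-2\pi i x\cdot \xi}$ by integrating by parts at the natural scale $|\xi| \sim 1/|x|$ in the large regime.

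For $|x| \leq 1$, the hypothesis $|m(\xi)| \leq C|\xi|^{-k}$ together with $k<d$ and the compact support of $m$ force $m \in L^1(\re^d)$; hence $|I(x)| \leq \|m\|_{L^1} \leq C$, which matches $\jap{x}^{-d+k}$ on this region.

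For $|x| \geq 1$, I would fix a cutoff $\chi \in C_c^{\infty}(\re^d)$ with $\chi(\xi)=1$ for $|\xi|\leq 1$ and $\chi(\xi)=0$ for $|\xi| \geq 2$ and split $I(x) = I_1(x) + I_2(x)$, where
\begin{align*}
I_1(x) = \int_{\re^d} e^{-2\pi i x\cdot \xi} \chi(|x|\xi) m(\xi)\, d\xi, \qquad I_2(x) = \int_{\re^d} e^{-2\pi i x\cdot \xi} (1-\chi(|x|\xi)) m(\xi)\, d\xi.
\end{align*}
The integrand of $I_1$ is supported in $\{|\xi|\leq 2/|x|\}$, so polar coordinates and (\ref{m}) give
\begin{align*}
|I_1(x)| \leq C\int_0^{2/|x|} r^{d-1-k}\, dr \leq C|x|^{k-d},
\end{align*}
since $d-k>0$. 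For $I_2$ I would choose an integer $N>d-k$ and integrate by parts $N$ times with the first-order operator $L = (2\pi i |x|^2)^{-1}\sum_{j=1}^d x_j\partial_{\xi_j}$, which satisfies $L e^{-2\pi i x\cdot \xi} = -e^{-2\pi i x\cdot \xi}$. The boundary contributions vanish because the integrand is smooth and compactly supported away from $\xi=0$, yielding
\begin{align*}
I_2(x) = (-1)^N \int_{\re^d} e^{-2\pi i x\cdot \xi} (L^{*})^N\bigl[(1-\chi(|x|\xi))m(\xi)\bigr]\, d\xi.
\end{align*}

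Expanding with Leibniz, each application of $L^{*}$ supplies an overall $|x|^{-1}$, each derivative falling on $m$ produces an extra $|\xi|^{-1}$ via (\ref{m}), and each derivative falling on $\chi(|x|\,\cdot)$ restores a factor $|x|$ while localizing to the shell $\{1/|x| \leq |\xi| \leq 2/|x|\}$. The ``principal'' term, with all $N$ derivatives on $m$, is bounded by $C|x|^{-N}|\xi|^{-k-N}$ on $\{|\xi|\geq 1/|x|\}$; integrating radially gives $\sim |x|^{-N}\cdot |x|^{N+k-d} = |x|^{k-d}$ because $N+k>d$. The mixed terms, where $j\geq 1$ derivatives land on $\chi(|x|\,\cdot)$, carry a pointwise factor $|x|^{-N+j}|\xi|^{-k-(N-j)} \sim |x|^{k}$ on a shell of volume $\sim |x|^{-d}$, again contributing $|x|^{k-d}$. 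Summing over all terms yields $|I_2(x)|\leq C|x|^{k-d}$, which combines with the estimate on $I_1$ to finish the bound for $|x|\geq 1$.

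The main technical point is the bookkeeping in the Leibniz expansion of $(L^{*})^N$: one must check that every combination of derivatives on $\chi(|x|\,\cdot)$ and $m$, weighted appropriately by the $|x|$ factors coming from $L^{*}$, produces the same homogeneity $|x|^{k-d}$. Once the localization scale $|\xi|\sim 1/|x|$ is identified as the natural one, the remaining computations reduce to elementary radial integrals and are routine.
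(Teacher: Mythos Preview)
Your proof is correct and follows essentially the same dyadic-localization-plus-integration-by-parts argument as the paper. The only cosmetic difference is that the paper introduces the cutoff scale as a free parameter $\delta$ (writing $\chi(\xi/\delta)$) and optimizes to $\delta=|x|^{-1}$ at the end, whereas you fix the scale $\delta=1/|x|$ from the outset by writing $\chi(|x|\xi)$; the resulting estimates for $I_1$ and the Leibniz terms in $I_2$ are identical.
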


\begin{proof}
Since $m$ is compactly supported, we may assume $|x|\geq 1$. Take $\chi\in C_c^{\infty}(\re^d)$ such that $\chi=1$ on $|\x|\leq 1$ and $\chi=0$ on $|\x|\geq 2$. Set $\bar{\chi}=1-\chi$. For $\d>0$, we have
\begin{align*}
I=\int_{\re^d}(\chi(\x/\d)+\bar{\chi}(\x/\d))e^{-2\pi ix\cdot \x}m(\x)d\x=:I_1+I_2.
\end{align*}
We learn
\begin{align*}
|I_1|\leq \int_{|\x|\leq 2\d}|\chi(\x/\d)||\x|^{-k}d\x\leq C\d^{d-k}.
\end{align*}
By integrating by parts, for $N> d-k$ we have
\begin{align*}
|I_2|\leq C&|x|^{-N}\sum_{|\a|= N}|\int_{\re^d}e^{-2\pi ix\cdot \x} D_{\x}^{\a}(\bar{\chi}(|\x|/\d)m(\x)) d\x|\\
\leq &C|x|^{-N}\sum_{|\a|= N}|\sum_{\b\leq \a}\int_{\re^d}e^{-2\pi ix\cdot \x} D_{\x}^{\b}(\bar{\chi}(|\x|/\d))\pa_{\x}^{\a-\b}m(\x) d\x|\\
\leq&C|x|^{-N}\sum_{|\a|\leq N}\sum_{\b\leq \a}\int_{\re^d} \d^{-\b}\bar{\chi}^{|\b|}(|\x|/\d)|\x|^{-k-(N-|\b|)} d\x.
\end{align*}
For $\b=0$, 
\begin{align*}
\int_{\re^d}\bar{\chi}(\x/\d)|\x|^{-k-N}d\x\leq C\d^{d-k-N}
\end{align*}
follows and for $\b\neq 0$,
\begin{align*}
\int_{\re^d} \d^{-\b}\bar{\chi}^{|\b|}(|\x|/\d)|\x|^{-k-(N-|\b|)} d\x\leq& C\int_{\d\leq |\x|\leq 2\d}\d^{-\b}|\x|^{-k-N+|\b|}d\x\\
\leq&C\d^{d-k-N}.
\end{align*}
These imply $|I_2|\leq C|x|^{-N}\d^{d-k-N}$. We set $\d=|x|^{-1}$ and obtain $|I|\leq C|x|^{-d+k}$ for $|x|\geq 1$.
\end{proof} 

\begin{cor}\label{Kerpo}
Let $d\geq 1$, $0<l<d$ and $K_{l}$ be defined by
\begin{align*}
K_l(x)=\int_{\T^d}e^{2\pi ix\dot \x}h_0(\x)^{-l/2}d\x.
\end{align*}
Then we have a pointwise bound $|K_l(x)|\leq C\jap{x}^{-d+l}$.
\end{cor}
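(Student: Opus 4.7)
The plan is to reduce Corollary \ref{Kerpo} to Proposition \ref{mlem} by localizing to a neighborhood of the unique singularity of $h_0^{-l/2}$ on $\T^d$. Recall $h_0(\x)=4\sum_j\sin^2(\pi\x_j)$ vanishes only at $\x=0$ (working in the fundamental domain $[-\tfrac12,\tfrac12]^d$) and satisfies $h_0(\x)=4\pi^2|\x|^2+O(|\x|^4)$ there, so $h_0^{-l/2}$ is smooth on $\T^d\setminus\{0\}$ and blows up like $|\x|^{-l}$ at the origin.

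First, I would fix $\chi\in C_c^\infty(\re^d)$ supported in $(-\tfrac12,\tfrac12)^d$ with $\chi\equiv 1$ near $0$, and split
\begin{align*}
K_l(x)=\int_{\T^d}e^{2\pi ix\cdot\x}\chi(\x)h_0(\x)^{-l/2}d\x+\int_{\T^d}e^{2\pi ix\cdot\x}(1-\chi(\x))h_0(\x)^{-l/2}d\x.
\end{align*}
The second integrand is smooth on $\T^d$ (its support avoids the zero of $h_0$), so integration by parts in $\x$ any number of times gives that the second integral is $O(\jap{x}^{-N})$ for every $N$, hence negligible.

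For the first integral, I would view $m(\x):=\chi(\x)h_0(\x)^{-l/2}$ as a compactly supported function on $\re^d$ and check that it satisfies the symbol estimate $(\ref{m})$ with $k=l$. The cleanest route is to write $h_0(\x)=4\pi^2|\x|^2 g(\x)$, where $g(\x):=h_0(\x)/(4\pi^2|\x|^2)$ extends to a smooth positive function in a neighborhood of $0$ with $g(0)=1$; then $h_0(\x)^{-l/2}=(4\pi^2)^{-l/2}|\x|^{-l}g(\x)^{-l/2}$, with $g^{-l/2}$ smooth near $0$. Since $|\pa_\x^\a|\x|^{-l}|\leq C_\a|\x|^{-l-|\a|}$ and smooth factors preserve this scaling under Leibniz, one obtains $|\pa_\x^\a m(\x)|\leq C_\a|\x|^{-l-|\a|}$ on $\re^d$. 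Applying Proposition \ref{mlem} with $k=l$ yields
\begin{align*}
\Bigl|\int_{\re^d}e^{2\pi ix\cdot\x}m(\x)d\x\Bigr|\leq C\jap{x}^{-d+l},
\end{align*}
and combining with the rapidly decaying remainder finishes the proof.

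The only nontrivial step is verifying the symbol estimate for $h_0^{-l/2}$ near $\x=0$, since the power $-l/2$ is non-integer; this is handled by the factorization above, which isolates the singular homogeneous factor $|\x|^{-l}$ from a smooth multiplier. Everything else is direct bookkeeping and an appeal to Proposition \ref{mlem}.
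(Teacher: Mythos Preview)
Your overall strategy---localize near $\x=0$, discard the smooth remainder by integration by parts, verify the symbol bound \eqref{m} with $k=l$, and invoke Proposition~\ref{mlem}---is exactly what the paper does. The gap is in your verification of the symbol estimate. You claim that $g(\x)=h_0(\x)/(4\pi^2|\x|^2)$ extends to a \emph{smooth} positive function near $0$, but this is false. Writing $1-\cos(2\pi\x_j)=2\pi^2\x_j^2\psi(\x_j)$ with $\psi$ smooth and $\psi(0)=1$, one gets
\[
g(\x)=\sum_{j=1}^d\frac{\x_j^2}{|\x|^2}\,\psi(\x_j)=1+\sum_{j=1}^d\frac{\x_j^4}{|\x|^2}\,\tilde\psi(\x_j),
\]
and the homogeneous functions $\x_j^4/|\x|^2$ are not $C^2$ at the origin (already for $d=2$, the radial second derivative of $(\x_1^4+\x_2^4)/|\x|^2$ at $0$ depends on the angle). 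So the ``smooth multiplier times $|\x|^{-l}$'' factorization, as you stated it, does not go through.

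The paper fixes exactly this point by invoking the Morse lemma: there is a smooth local diffeomorphism $\kappa$ near $0$ with $\kappa(0)=0$ and $h_0(\x)=|\kappa(\x)|^2$, hence $h_0(\x)^{-l/2}=|\kappa(\x)|^{-l}$. Since $|\kappa(\x)|\sim|\x|$ and $|\pa_\eta^\gamma|\eta|^{-l}|\le C_\gamma|\eta|^{-l-|\gamma|}$, the chain rule (with bounded derivatives of $\kappa$) immediately gives $|\pa_\x^\a h_0(\x)^{-l/2}|\le C_\a|\x|^{-l-|\a|}$. Alternatively, you could salvage your factorization by observing that although $g$ is not smooth, it does satisfy $g\ge c>0$ and $|\pa^\a g(\x)|\le C_\a|\x|^{-|\a|}$ (from the homogeneity of $\x_j^4/|\x|^2$), which is enough for Leibniz to yield the required bound on $|\x|^{-l}g(\x)^{-l/2}$. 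Either route closes the gap; the rest of your argument is fine.
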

\begin{proof}
By the Morse lemma, we have $h_0(\x)^{-l/2}\sim |\x|^{-l}$ near $\x=0$. Moreover, it follows that $h_0(\x)^{-l/2}$ is smooth away from $\x=0$. Applying Proposition \ref{mlem}, we obtain $|K_l(x)|\leq C\jap{x}^{-d+l}$.
\end{proof}

Now we define operators $H_0^{-l/2}$ for $0<l<d$ by
\begin{align*}
H_0^{-l/2}u(x)=\sum_{y\in \ze^d}K_l(x-y)u(y),\,\, u\in \bigcap_{s>0}l^{2,s}(\ze^d).
\end{align*}
It is easily seen that $H_0^{-l/2}$ is a continuous linear operator:
\begin{align*}
H_0^{-l/2}:\bigcap_{s>0}l^{2,s}(\ze^d)\to \bigcup_{s\in \re}l^{2,s}(\ze^d).
\end{align*}

The next corollary implies that $H_0^{-1}$ can be uniquely extended to the continuous linear operator from $l^{2,\a}(\ze^d)$ to $l^{2,-\b}(\ze^d)$ for $\a,\b>1/2$ with $\a+\b\geq 2$.

\begin{cor}[Discrete version of the HLS inequality]\label{disHLS}
Let $d\geq 1$ and $0<k<d$. Then $H_0^{-l/2}$ is bounded from $l^{p}_r(\ze^d)$ to $l^{q}_r(\ze^d)$ if $1< p< q < \infty$ satisfies
\begin{align}\label{HLSindex}
\frac{1}{p}-\frac{1}{q}=\frac{l}{d}
\end{align}
and $1\leq r\leq \infty$. 

Moreover, if $W_1\in l^{r_1}_{\infty}(\ze^d)$ and $W_2\in l^{r_2}_{\infty}(\ze^d)$ with $1/r_1+1/r_2=l/d$ with $r_1,r_2>2$.
Then we have 
\begin{align*}
W_1H_0^{-l/2}W_2\in B(l^2(\ze^d))
\end{align*}

 In particular, $\jap{x}^{-\a}H_0^{-1}\jap{x}^{-\b}\in B(l^2(\ze^d))$ if $\a+\b\geq 2$ and $\a,\b>0$ if $d\geq 4$ and $\a+\b\geq 2$ and $\a,\b>1/2$ if $d=3$.
\end{cor}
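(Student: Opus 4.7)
The plan is to combine the pointwise bound $|K_l(x)|\leq C\jap{x}^{-d+l}$ from Corollary~\ref{Kerpo} with the Young and H\"older inequalities in Lorentz spaces (Lemmas~\ref{Hol} and the preceding Young inequality lemma).

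First, the pointwise bound $|K_l(x)|\leq C\jap{x}^{-d+l}$ immediately gives a weak-type estimate: for $\lambda>0$,
\begin{align*}
\mu_c(\{x\in \ze^d\mid |K_l(x)|>\lambda\})\leq \#\{x\in \ze^d\mid \jap{x}<(C/\lambda)^{1/(d-l)}\}\leq C'\lambda^{-d/(d-l)},
\end{align*}
so $K_l\in l^{d/(d-l)}_{\infty}(\ze^d)$. Since $H_0^{-l/2}u = K_l\ast u$, the Young inequality in Lorentz spaces with exponents $p_1=d/(d-l)$, $q_1=\infty$, $p_2=p$, $q_2=r$ yields
\begin{align*}
\|H_0^{-l/2}u\|_{l^q_r(\ze^d)}\leq C\|K_l\|_{l^{d/(d-l)}_\infty(\ze^d)}\|u\|_{l^p_r(\ze^d)},
\end{align*}
provided $1/q=1/p-l/d$ (so $\frac{1}{r}=\frac{1}{p_1}+\frac{1}{p_2}-1=\frac{1}{q}$) and $1<p<q<\infty$. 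Here $p_1=d/(d-l)>1$ since $l>0$, so the hypothesis of the Young inequality is satisfied. This proves the first part.

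For the second part, I factor $W_1H_0^{-l/2}W_2$ through $l^p_2$ and $l^q_2$. Given $u\in l^2(\ze^d)=l^2_2(\ze^d)$, the H\"older inequality (Lemma~\ref{Hol}) with $W_2\in l^{r_2}_\infty$ gives $W_2 u\in l^p_2$ where $1/p=1/2+1/r_2$; note $p>1$ since $r_2>2$. Applying the first part with $r=2$ produces $H_0^{-l/2}W_2u\in l^q_2$ with $1/q=1/p-l/d=1/2-1/r_1$, using the hypothesis $1/r_1+1/r_2=l/d$. Finally H\"older again with $W_1\in l^{r_1}_\infty$ yields $W_1H_0^{-l/2}W_2u\in l^s_{\min(2,\infty)}=l^2_2=l^2(\ze^d)$, where $1/s=1/q+1/r_1=1/2$, with norm controlled by $\|W_1\|_{l^{r_1}_\infty}\|W_2\|_{l^{r_2}_\infty}\|u\|_{l^2}$.

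The ``in particular'' statement follows by noting $\jap{x}^{-\alpha}\in l^{d/\alpha}_\infty(\ze^d)$ (a direct distribution-function computation identical to the one for $K_l$) and applying the previous step with $l=2$, $r_1=d/\alpha$, $r_2=d/\beta$. The conditions $r_1,r_2>2$ translate to $\alpha,\beta<d/2$; combined with $\alpha+\beta=2$ and the relations $\alpha+\beta=2/d\cdot d=l$, we obtain exactly the stated assumptions ($\alpha,\beta>0$ for $d\geq 4$, and $\alpha,\beta>1/2$ for $d=3$, the latter because $\beta<d/2=3/2$ forces $\alpha=2-\beta>1/2$). The mild technical point, and essentially the only obstacle, is to reduce the hypothesis $\alpha+\beta\geq 2$ to the equality case: since $\jap{x}\geq 1$, enlarging $\alpha$ or $\beta$ only makes the weight smaller, so by monotonicity it suffices to find $\alpha'\leq\alpha$, $\beta'\leq\beta$ with $\alpha'+\beta'=2$ still satisfying the individual strict bounds, which is a short case analysis in the parameter region where $\alpha+\beta>2$.
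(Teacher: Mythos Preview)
Your proof is correct and is exactly the argument the paper has set up: the kernel bound $|K_l(x)|\le C\jap{x}^{-(d-l)}$ from Corollary~\ref{Kerpo} places $K_l\in l^{d/(d-l)}_\infty(\ze^d)$, the Lorentz--Young inequality then gives the $l^p_r\to l^q_r$ boundedness, and two applications of the Lorentz--H\"older inequality (Lemma~\ref{Hol}) yield the weighted $l^2$ statement. The paper does not spell out a proof of this corollary, so there is nothing further to compare; your reduction of $\a+\b\ge 2$ to $\a+\b=2$ by monotonicity in the weights is the right way to handle the ``in particular'' clause.
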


\begin{rem}\label{Hardy}
This corollary gives $H_0^{-l/2}\jap{x}^{-l}\in B(l^2(\ze^d))$ for $0<l<d$. In fact,
\begin{align*}
\|H_0^{-l/2}\jap{x}^{-l}f\|_{l^2(\ze^d)}\leq C\|H_0^{-l/2}\|_{B(l^{\frac{2l}{l+2d}}_{\infty}(\ze^d), l^2(\ze^d))} \|\jap{x}^{-l}\|_{l^{\frac{d}{l}}_{\infty}(\ze^d)} \|f\|_{l^2(\ze^d)}.
\end{align*}
These are exactly the discrete Hardy inequalities.
\end{rem}

\section{Restriction theorem for a Lipschitz manifold}

In this appendix, we prove the $L^2$-restriction theorem for a Lipschitz manifold. The proof is standard, however, we give its proof for readers' convenience.

\begin{lem}\label{restapplem}
Let $f\in H^{1}(\re^d)$ and $g$ be a real-valued Lipschitz function on $\re^{d-1}$. Then it follows that $k(\x)=f(\x', \x_d+g(\x'))$ belongs to $H^1(\re^d)$ and there exists $C>0$ which is independent of $f$ such that
\begin{align}\label{restapp}
\|k\|_{H^1(\re^d)}\leq C\|f\|_{H^1(\re^d)}.
\end{align}
\end{lem}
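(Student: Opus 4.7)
The plan is to realize $k=f\circ \Phi$ as a composition with the bi-Lipschitz map $\Phi:\re^d\to \re^d$ defined by
\[
\Phi(\x',\x_d)=(\x',\x_d+g(\x')),\qquad \Phi^{-1}(\y',\y_d)=(\y',\y_d-g(\y')).
\]
The matrix $d\Phi$ is lower triangular with $1$'s on the diagonal, so by Rademacher's theorem $\Phi$ is differentiable a.e.\ with Jacobian determinant identically $1$. In particular, $\Phi$ preserves Lebesgue measure, which will take care of the $L^2$ bound: $\|k\|_{L^2(\re^d)}=\|f\|_{L^2(\re^d)}$.

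First I would establish the estimate under the extra assumption $f\in C_c^\infty(\re^d)$. Then $k$ is a composition of Lipschitz maps, hence Lipschitz, and by the Lipschitz chain rule we get a.e.
\[
\pa_{\x_d}k(\x)=(\pa_d f)(\Phi(\x)),\qquad \pa_{\x_j}k(\x)=(\pa_j f)(\Phi(\x))+(\pa_d f)(\Phi(\x))\,\pa_j g(\x'),\quad j<d.
\]
Denoting by $L$ the Lipschitz constant of $g$, Rademacher gives $|\nabla g|\le L$ a.e., and changing variables by $\Phi$ yields
\[
\|\pa_{\x_d}k\|_{L^2}^2=\|\pa_d f\|_{L^2}^2,\qquad \|\pa_{\x_j}k\|_{L^2}^2\le 2\|\pa_j f\|_{L^2}^2+2L^2\|\pa_d f\|_{L^2}^2.
\]
Summing gives $\|k\|_{H^1(\re^d)}\le C(1+L)\|f\|_{H^1(\re^d)}$ with $C$ independent of $f$.

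To pass to general $f\in H^1(\re^d)$, I would use density: pick $f_n\in C_c^\infty(\re^d)$ with $f_n\to f$ in $H^1(\re^d)$, and set $k_n(\x)=f_n(\x',\x_d+g(\x'))$. Applying the previous estimate to $f_n-f_m$ shows $\{k_n\}$ is Cauchy in $H^1(\re^d)$, hence converges to some $K\in H^1(\re^d)$ with $\|K\|_{H^1}\le C(1+L)\|f\|_{H^1}$. On the other hand, the measure-preserving change of variables gives $\|k_n-k\|_{L^2}=\|f_n-f\|_{L^2}\to 0$, so $K=k$ a.e., proving $k\in H^1$ and the claimed bound $(\ref{restapp})$.

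The main technical obstacle is justifying the chain rule for $k=f\circ \Phi$ when $g$ is merely Lipschitz. For smooth $f$ this is an application of Rademacher combined with the standard Lipschitz chain rule (or, alternatively, one can mollify $g$ to $g_\e=g\ast\rho_\e$, apply the smooth chain rule uniformly in $\e$ using $\|\nabla g_\e\|_\infty\le L$, and pass to the limit $\e\to 0$ via dominated convergence). Once this a.e.\ identity is in hand, the rest is bookkeeping with change of variables and a density argument.
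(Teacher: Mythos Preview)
Your proof is correct and follows essentially the same route as the paper: both use the measure-preserving change of variables $\Phi(\x',\x_d)=(\x',\x_d+g(\x'))$ to get $\|k\|_{L^2}=\|f\|_{L^2}$, compute the partial derivatives of $k$ via the chain rule, and bound the gradient terms using $|\nabla g|\le L$. You have added a density argument and a justification of the Lipschitz chain rule via Rademacher, which the paper leaves implicit, but the underlying argument is the same.
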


\begin{proof}
By changing of variables, we have $\|k\|_{L^2(\re^d)}=\|f\|_{L^2(\re^d)}$. For $j=1,...d-1$, we have
\begin{align*}
\pa_{\x_j}(k(\x',\x_d+g(\x')))=&(\pa_{\x_j}k)(\x',\x_d+g(\x'))+(\pa_{\x_j}g)(\x')(\pa_{\x_d}k)(\x',\x_d+g(\x')),\\
\pa_{\x_d}(k(\x',\x_d+g(\x')))=&(\pa_{\x_d}k)(\x'+\x_d+g(\x')).
\end{align*}
Using this computation, we obtain (\ref{restapp}).
\end{proof}

\begin{prop}\label{restprop}
Under the assumption of Lemma \ref{restapplem}, we have
\begin{align*}
\|\jap{D_{\x'}}^{1/2}(f(\x', g(\x')))\|_{L^2(\re^{d-1})}\leq C\|f\|_{H^1(\re^d)}.
\end{align*}
\end{prop}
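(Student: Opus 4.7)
The plan is to reduce the restriction inequality to the classical $H^1\!\to\!H^{1/2}$ trace theorem for a hyperplane, by first flattening the Lipschitz graph with the change of variables already prepared in Lemma \ref{restapplem}.

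First, I set $k(\xi',\xi_d)=f(\xi',\xi_d+g(\xi'))$. Lemma \ref{restapplem} gives $k\in H^1(\re^d)$ with $\|k\|_{H^1(\re^d)}\le C\|f\|_{H^1(\re^d)}$. The key observation is that
\begin{align*}
f(\xi',g(\xi'))=k(\xi',0),
\end{align*}
so the quantity we must control is exactly the trace of $k$ on the hyperplane $\{\xi_d=0\}\subset\re^d$. Therefore it suffices to prove the standard trace estimate
\begin{align*}
\|\jap{D_{\xi'}}^{1/2}k(\cdot,0)\|_{L^2(\re^{d-1})}\le C\|k\|_{H^1(\re^d)},
\end{align*}
and then combine it with the bound from Lemma \ref{restapplem}.

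To prove this trace estimate, I would work by density with $k\in\mathcal{S}(\re^d)$ and use the Fourier transform. Denoting by $\hat k$ the full Fourier transform on $\re^d$ and writing $\eta=(\eta',\eta_d)$ for the dual variables, the $(d-1)$-dimensional Fourier transform of $\xi'\mapsto k(\xi',0)$ equals $\int_\re \hat k(\eta',\eta_d)\,d\eta_d$. Cauchy--Schwarz then yields
\begin{align*}
\jap{\eta'}\Bigl|\int_\re \hat k(\eta',\eta_d)\,d\eta_d\Bigr|^2
\le \jap{\eta'}\!\int_\re\!\frac{d\eta_d}{\jap{(\eta',\eta_d)}^2}\cdot\!\int_\re\!\jap{(\eta',\eta_d)}^2|\hat k(\eta',\eta_d)|^2\,d\eta_d.
\end{align*}
Substituting $\eta_d=\jap{\eta'}s$ shows $\int_\re \jap{(\eta',\eta_d)}^{-2}\,d\eta_d\le C\jap{\eta'}^{-1}$, so integrating in $\eta'$ and applying Plancherel gives the desired bound.

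The argument is essentially routine; the only small point to be careful about is the density argument, namely that $k\in H^1(\re^d)$ can be approximated by Schwartz functions so that both sides of the trace inequality pass to the limit (using that $k\mapsto k(\cdot,0)$ extends continuously from $\mathcal{S}$ to $H^1$ thanks to the very estimate we prove). There is no real obstacle beyond invoking Lemma \ref{restapplem} correctly, since the Lipschitz character of $g$ enters only through that lemma.
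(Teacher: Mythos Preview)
Your proof is correct and is essentially the same as the paper's: both introduce $k(\xi',\xi_d)=f(\xi',\xi_d+g(\xi'))$, observe that the $(d-1)$-dimensional Fourier transform of the trace equals $\int_{\re}\hat k(\cdot,\eta_d)\,d\eta_d$, apply Cauchy--Schwarz with weight $\jap{\eta}^{\pm1}$, use $\int_{\re}\jap{\eta}^{-2}\,d\eta_d\le C\jap{\eta'}^{-1}$, and finish with Lemma~\ref{restapplem}. Your explicit identification of the reduction to the flat trace theorem is a nice conceptual packaging of the same computation.
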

\begin{proof}
In the following, we denote the Fourier transform of $f$ by $\hat{f}$.
By changing of variable and and the H\"older ineqality, we have
\begin{align*}
|\int_{\re^{d-1}}f(\x',g(\x'))e^{-2\pi ix'\cdot \x'}d\x'|=&|\int_{\re^d} \hat{k}(x)dx_d|\\
\leq& (\int_{\re}\jap{x}^{-2}dx_d)^{1/2} (\int_{\re}|\jap{x}\hat{k}(x)|^2dx_d)^{1/2}\\
\leq&C\jap{x'}^{-1/2}(\int_{\re}|\jap{x}\hat{k}(x)|^2dx_d)^{1/2}.
\end{align*}
Thus we have
\begin{align*}
\|\jap{D_{\x'}}^{1/2}(f(\x', g(\x')))\|_{L^2(\re^{d-1})}^2=&\|\jap{x'}^{1/2}\widehat{f(\x',g(\x'))}(x)\|_{L^2(\re^{d-1})}^2\\
\leq&C^2\|\jap{x}\hat{k}\|_{L^2(\re^d)}^2\\
=&C^2\|k\|_{H^1(\re^d)}^2.
\end{align*}
This computation with Lemma \ref{restapplem} completes the proof.

\end{proof}

\end{document}